\theoremstyle{plain}
\newtheorem{thm}{Theorem}[section]
\newtheorem{lem}[thm]{Lemma}
\theoremstyle{definition}
\newtheorem{defn}{Definition}[section]
\theoremstyle{remark}
\newcommand{\tr}{\mathrm{tr}}
\newcommand{\identity}{\mathds{1}}
\newcommand{\ket}[1]{\vert{#1}\rangle}
\newcommand{\keket}[1]{\vert{#1}\rangle\hspace{-1mm}\rangle}
\newcommand{\bra}[1]{\langle{#1}\vert}
\newcommand{\brabra}[1]{\langle\hspace{-1mm}\langle{#1}\vert}
\newcommand{\ketbra}[2]{\ket{#1}\hspace{-1mm}\bra{#2}}
\newcommand{\keketbra}[2]{\keket{#1}\hspace{-1mm}\brabra{#2}}
\newcommand{\tracerep}[2]{\prescript{}{#1}{#2}}
\newcommand{\aux}{\textrm{aux}}
\newcommand{\AB}{A \rightarrow B}
\newcommand{\ABC}{A \rightarrow B \rightarrow C}
\newcommand{\ucnot}{\textrm{U}_{\textrm{CNOT}}}
\newcommand{\uswap}{\textrm{U}_{\textrm{SWAP}}}
\newcommand{\ups}{\textrm{U}_{\textrm{PS}}}
\newcommand{\Wcc}{W_{\textrm{CC}}}
\newcommand{\Wdc}{W_{\textrm{DC}}}
\newcommand{\cyan}[1]{{#1}} 
\newcommand{\red}[1]{\textcolor{red}{#1}} 
\newcommand{\green}[1]{\textcolor{green}{#1}}
\newcommand{\doublewidetilde}[1]{{%
  \mathpalette\double@widetilde{#1}%
}}
\newcommand{\double@widetilde}[2]{%
  \sbox\z@{$\m@th#1\widetilde{#2}$}%
  \ht\z@=.9\ht\z@
  \widetilde{\box\z@}%
}
\newcommand{\map}[1]{\widetilde{#1}}  
\begin{document}

\title{Simple and maximally robust processes with no classical common-cause or direct-cause explanation}

\author{Marcello Nery}
\email{mnerygvb@gmail.com}
\affiliation{Departamento de Física, Universidade Federal de Minas Gerais, Av. Pres. Antonio Carlos 6627 - Belo Horizonte, MG, Brazil - 31270-901.}

\author{Marco Túlio Quintino}
 \orcid{0000-0003-1332-3477}
\affiliation{Institute for Quantum Optics and Quantum Information (IQOQI), Austrian
Academy of Sciences, Boltzmanngasse 3, A-1090 Vienna, Austria}
\affiliation{Faculty of Physics, University of Vienna, Boltzmanngasse 5, 1090 Vienna, Austria}

\author{Philippe Allard Guérin}
\affiliation{Faculty of Physics, University of Vienna, Boltzmanngasse 5, 1090 Vienna, Austria}
\affiliation{Institute for Quantum Optics and Quantum Information (IQOQI), Austrian
Academy of Sciences, Boltzmanngasse 3, A-1090 Vienna, Austria}
\affiliation{Perimeter Institute for Theoretical Physics, 31 Caroline St. N, Waterloo, Ontario, N2L 2Y5, Canada}

\author{Thiago O. Maciel}
\affiliation{Departamento de Física, Universidade Federal de Santa Catarina, Florianópolis, SC, 88040-900, Brazil}
\affiliation{Universidade Federal do Rio de Janeiro, Caixa Postal 68528, Rio de Janeiro, RJ 21941-972, Brazil}

\author{Reinaldo O. Vianna}
\affiliation{Departamento de Física, Universidade Federal de Minas Gerais, Av. Pres. Antonio Carlos 6627 - Belo Horizonte, MG, Brazil - 31270-901.}

\date{September 02, 2021}

\maketitle

\begin{abstract}
  Guided by the intuition of coherent superposition of causal relations, recent works presented quantum processes without classical common-cause and direct-cause explanation, that is, processes which cannot be written as probabilistic mixtures of quantum common-cause and quantum direct-cause relations (CCDC). In this work, we analyze the minimum requirements for a quantum process to fail to admit a CCDC explanation and present ``simple'' processes, which we prove to be the most robust ones against general noise. These simple processes can be realized by preparing a maximally entangled state and applying the identity quantum channel, thus not requiring an explicit coherent mixture of common-cause and direct-cause, exploiting the possibility of a process to have both relations simultaneously. 
We then prove that, although all bipartite direct-cause processes are bipartite separable operators, there exist bipartite separable processes which are not direct-cause. This shows that the problem of deciding \cyan{whether} a process is direct-cause \emph{is not} equivalent to entanglement certification and points out the limitations of entanglement methods to detect non-classical CCDC processes. We also present a semi-definite programming hierarchy that can detect and quantify the non-classical CCDC robustnesses of every non-classical CCDC process.
   Among other results, our numerical methods allow us to show that the simple processes presented here are likely to be also the maximally robust against white noise. Finally, we explore the equivalence between bipartite direct-cause processes and bipartite processes without quantum memory, to present a separable process which cannot be realized as a process without quantum memory. 
\end{abstract}

\section*{Introduction}

	Common-cause and direct-cause relations are the building blocks of classical causal models, as a causal model of multiple variables consists of combinations of common-cause and direct-cause relations between them. Understanding the relationship between cause and effects is one of the fundamental goals of several physical theories and of statistical analysis \cite{reichenbach1991direction}. Also, determining the causal relation behind a correlation of two objects is a fundamental problem in causal inference theory and plays a main role in topics such as hypothesis testing, social sciences, medicine, and machine learning \cite{Pearl2000,illari2011causality}.
In order to analyze causality in quantum phenomena, recent works proposed new frameworks of theory of Bayesian inference \cite{Leifer2011} and causal modeling \cite{Costa2016}, where understanding common-cause and direct-cause relations in quantum mechanics plays a fundamental role.

Quantum causal processes consist of a sequence of quantum operations and may be analyzed from different equivalent perspectives, such as sequential quantum operations via non-Markovian processes \cite{Modi2012,Pollock2015,Li2018,Milz2020}, fragments of quantum circuits via quantum combs \cite{chiribella08,Chiribella2009a}, quantum channels with memory \cite{kretschmann05}, and quantum strategies for playing an $n$-turn game \cite{Gutoski2007}. 
	When referring to causal modeling, a process which can be written as a probabilistic mixture of common-cause (CC) and direct-cause (DC) processes is said to admit a classical common-cause or direct-cause (CCDC) explanation, since the causal relations in such process could be simulated classically, by sampling from a probability distribution and then implementing either a common-cause or direct-cause process.
		
	 In Ref.\,\cite{Ried2015} the authors consider the different possibilities of combining CC and DC processes and, inspired by coherent mixture of quantum channels, the authors of Ref.\,\cite{MacLean2016} experimentally certify the existence of a quantum process with no classical CCDC explanation. Also, inspired by the \emph{Quantum Switch}\,\cite{Chiribella2009,Procopio2014,Rubino2016}, Ref.\,\cite{Feix2016} presents a coherent superposition of common-cause and direct-cause processes which cannot be explained by a classical CCDC. 

	In this work, we analyze quantum processes which cannot be decomposed as probabilistic mixtures of common-cause and direct-cause ones, hence admitting no classical CCDC explanation. We focus on the bipartite case, the simplest scenario where such processes can exist. In this simplest scenario, exploring the possibility of a process to have both CC and DC relations simultaneously, we present a process which can be simply realized by preparing a maximally entangled state and an identity channel, not needing an interpretation of it as a coherent mixture of causal relations. When attempting to work with the minimum non-trivial dimensions, we propose another process, which requires a controlled-NOT operation. We prove that these processes are the most robust ones against their worst possible noise, known as generalized noise. We also develop a semi-definite programming (SDP) numerical approach to quantify the non-classical CCDC property of a process, based on its robustness against white and general noise. Our numerical methods allow us to show that these simple processes presented here are also maximally robust against white noise when considering qubits, and to quantify the non-classical CCDC property in any bipartite ordered process.

	We also investigate the differences and similarities between quantum entanglement and processes without a direct-cause explanation, showing that there exist separable processes which are not direct-cause. This example points out the limitations of purely entanglement-based methods to detect non direct-cause processes and answers a conjecture first raised in Ref.\,\cite{Giarmatzi2018}. 
	
	Finally, we connect our results with a different related field by proving that for the bipartite case, processes without quantum memory \cite{Giarmatzi2018} are equivalent to processes having a direct-cause decomposition. With that, we present a bipartite separable process which cannot be realized as a process without quantum memory and contribute to the understanding of the relation between entanglement and of quantum memory \cite{Milz2017,Giarmatzi2018,Taranto2019,Taranto2020,Milz2020A}.
	
\section{Mathematical Preliminaries}
\label{sec:mp}

\subsection{The Choi-Jamio\l{}kowski isomorphism}

As is standard in several branches of quantum information theory, we will make use of the Choi-Jamio\l{}kowski isomorphism to represent linear operators and linear maps \cite{depillis67,jamiolkowski72,choi75}. We first present the ``pure'' version of the isomorphism, which allows us to represent linear operators as bipartite vectors. Here $A_I$ and $A_O$ are finite-dimensional complex vector spaces, which will late be identified with Alice's input and Alice's output, respectively.
\begin{defn}[``Pure'' Choi-Jamio\l{}kowski isomorphism] 
	Let $\textrm{U}: (A_I) \rightarrow (A_O)$ be a linear transformation (which is not necessarily unitary). The Choi vector $\keket{\textrm{U}}\in A_I\otimes A_O$ of the operator $\textrm{U}$ is defined as 
	\begin{equation}	
	\keket{\textrm{U}} := \sum_{i} \ket{i} \otimes (\textrm{U} \, \ket{i}),
	\end{equation}
where $\{\ket{i}\}_{i=0}^{d_{A_I}-1}$ is the computational basis for $A_I$ and $d_{A_I}$ is the dimension of $A_I$
\end{defn}

In quantum information theory, linear transformations between operators are sometimes referred to as linear maps, or simply, as maps. In this paper, we identify linear maps with a single tilde. For instance, let $\mathcal{L}(A_I)$ be the set of linear operators mapping $A_I$ to itself, $\map{\Lambda}:\mathcal{L}(A_I)\to\mathcal{L}(A_O)$ is a map transforming operators from $\mathcal{L}(A_I)$ to $\mathcal{L}(A_O)$. We now present the standard version of the Choi-Jamio\l{}kowski isomorphism, which allows us to represent linear maps as bipartite operators.
\begin{defn}[Choi-Jamio\l{}kowski isomorphism] 
	Let $\widetilde{\Lambda}: \mathcal{L}(A_I) \rightarrow \mathcal{L}(A_O)$ be a linear map and $\{\ket{i}\}_{i=0}^{d_{A_I}-1}$  be the computational basis for $A_I$. The Choi operator $\Lambda\in \mathcal{L}(A_I\otimes A_O)$ of the map $\map{\Lambda}$ is defined as 
	\begin{equation}
	\label{eqchoimatrix}	
	\Lambda := \sum_{ij} \ketbra{i}{j}\otimes \map{\Lambda}(\ketbra{i}{j}).
	\end{equation}
\end{defn}

	A common equivalent way to define the Choi operator of a map $\map{\Lambda}$ is given by the equation
\begin{equation}
	\label{eqchoimatrix2}	
	\Lambda := \left[\left(\widetilde{\identity} \otimes \widetilde{\Lambda}\right)(\keketbra{\identity}{\identity}) \right],	
	\end{equation}
where $\widetilde{\identity} : \mathcal{L}(A_I) \rightarrow \mathcal{L}(A_I)$ is an identity channel, and $\keket{\identity} = \sum_{i=0}^{d_{A_I}-1} \ket{ii} \in A_I \otimes A_I$.
	
	The action of the map $\widetilde{\Lambda}: \mathcal{L}(A_I) \rightarrow \mathcal{L}(A_O)$  on an operator $\rho\in \mathcal{L}(A_I) $ can be obtained from the Choi operator $\Lambda$ by means of the relation
\begin{equation}
\label{eqmapfromchoi}
\widetilde{\Lambda}(\rho) = \tr_{A_I} \left[ ( {\rho^{A_I}}^T \otimes \identity^{A_O}) \cdot \Lambda \right],
\end{equation}
where $\identity$ is the identity operator and $(\cdot)^T$ stands for the transposition in the computational basis.

	A linear map is a quantum channel iff it is completely positive (CP) and trace-preserving (TP). In the Choi-Jamio\l{}kowski representation, a map $\widetilde{\Lambda}: \mathcal{L}(A_I) \rightarrow \mathcal{L}(A_O)$  is CP iff $\Lambda \succeq 0$, \emph{i.e.}, $\Lambda$ is positive semi-definite, and TP iff $\tr_{A_O}(\Lambda)=\identity^{A_I}$.

For the sake of clarity, some equations in this manuscript will explicitly identify the input and output Hilbert spaces of a map with a superscript in the symbol representing the Choi operator of the map. In the above situation, for example, the Choi operator is equivalently represented by $\Lambda^{A_I / A_O}$, indicating that the map $\map{\Lambda}$ takes an operator from $\mathcal{L}(A_I)$ to one in $\mathcal{L}(A_O)$.

\subsection{The link product operation}

First defined in Ref.\,\cite{Chiribella2009a}, the link product is a useful mathematical tool to deal with composition of elements in a quantum circuit presented in the Choi-Jamio\l{}kowski representation.

\begin{defn}[Link product]
Let $\Lambda_1 \in \mathcal{L}( A\otimes A') $ and $\Lambda_2 \in \mathcal{L}( A'\otimes A'') $ be linear operators. The link product between $\Lambda_1$ and $\Lambda_2$ is defined as
    \begin{align}
	    \begin{split}
	        \label{eq:linkprod}
	        \Lambda_2 * \Lambda_1 := \tr_{A'}  \Bigg[  \bigg( \Lambda_1^{T_{A'}}\ \otimes \identity^{A''}\bigg)  
	          \cdot  \bigg(\identity^{A} \otimes \Lambda_2 \bigg)\Bigg],
		\end{split}    
    \end{align}
where $\Lambda_1^{T_{A'}}$ is the partial transpose of $\Lambda_1$ on the space $A'$.
\end{defn}

	As stated previously, the link product is useful for composing linear maps and quantum objects. If $\map{\Lambda}_1: \mathcal{L}(A) \rightarrow \mathcal{L}(A')$ and $\map{\Lambda}_2: \mathcal{L}(A') \rightarrow \mathcal{L}(A'')$ are linear maps with Choi operators $\Lambda_1 \in \mathcal{L}(A \otimes A')$ and $\Lambda_2 \in \mathcal{L}(A' \otimes A'')$, respectively, it can be shown that the Choi operator of the composition $\widetilde{\Phi} := \map{\Lambda}_2 \circ \map{\Lambda}_1: \mathcal{L}(A) \rightarrow \mathcal{L}(A'')$ is 
$   \Phi = \Lambda_2 * \Lambda_1.$

In particular, when $\rho \in  \mathcal{L}(A)$, is a linear operator with no components on $A'$, and $\Lambda \in \mathcal{L}(A \otimes A')$, we have
\begin{align}
	\begin{split}
	\Lambda * \rho:=& \tr_{A}  \left[  \left( \rho^{T_{A}}\ \otimes \identity^{A'}\right)  \cdot  \left(\Lambda \right)\right] \\
			=&\widetilde{\Lambda}(\rho).
	\end{split}
\end{align}

Hence, we can use the link product to represent quantum operations being performed on quantum states. 
Also, if  $\rho_1 \in \mathcal{L}(A_1)$ and $\rho_2 \in \mathcal{L}(A_2)$, acting on different spaces, we have $\rho_1 * \rho_2 := \rho_1 \otimes \rho_2$. Therefore, the link product of independent systems is simply the tensor product. 

Additionally, when $\rho, M \in \mathcal{L}(A)$ act in the same linear space, the link product is given by $\rho * M := \tr(M^{\textrm{T}} \cdot \rho ) $, which is simply the trace of their product with an extra transposition. This form can also be used to write Born's rule.

\section{The CCDC scenario}
\label{sec:ccdcscen}

The main objects analyzed in this work are the bipartite ordered processes, which may be understood as a physical dynamics which flows from Alice to Bob. 
	For all definitions in this section, we consider a scenario where Alice is a quantum physicist in a laboratory who can perform any quantum operation that transforms states from system $A_I$ (Alice input) to $A_O$ (Alice output), and Bob is a quantum physicist in a laboratory who can perform any quantum measurement on states defined on system $B_I$ (Bob input).

\subsection{Markovian processes}
\label{subse:Markovian}

	We start by presenting the definition of bipartite \textit{Markovian processes}, which admits a \textit{direct-cause} interpretation \cite{Modi2012,Milz2017a,Pollock2018} and are also known in the literature as quantum processes with no memory \cite{Giarmatzi2018}, and cause-effect causal maps \cite{Ried2015}. In a Markovian scenario, Alice receives a known quantum state $\rho\in\mathcal{L}(A_I)$ which acts on her input system space. Alice can then perform an arbitrary quantum operation%
	\footnote{Which may be a quantum channel (deterministic quantum operation) or a quantum instrument (probabilistic quantum operation).}%
	$\map{\Lambda}:\mathcal{L}(A_I)\to \mathcal{L}(A_O)$ to obtain a quantum state on system $A_O$, which will then be subjected to a known deterministic dynamics described by a channel $\map{D}:\mathcal{L}(A_O)\to \mathcal{L}(B_I)$. Finally, the state
	 $\map{D}\Big(\map{\Lambda}(\rho)\Big)\in\mathcal{L}(B_I)$ arrives to Bob, who can perform an arbitrary measurement on it. See Fig.\,\ref{fig:circuitdc} for a circuit based pictorial illustration.

	A bipartite Markovian process can then be described by the known quantum state $\rho\in\mathcal{L}(A_I)$ and the known dynamics represented by the channel $\map{D}:\mathcal{L}(A_O)\to \mathcal{L}(B_I)$. In the Choi-Jamio\l{}kowski representation, this can be conveniently described by $W_{\textrm{Markov}} := \rho^{A_I} \otimes D^{A_O/B_I}$. Under this definition, for any quantum operation $\map{\Lambda}:\mathcal{L}(A_I)\to \mathcal{L}(A_O)$ performed by Alice, the quantum state arriving in Bob's input space can be obtained via the link product as
	\small
	\begin{align}
	\begin{split}
	 W_{\textrm{Markov}} * \Lambda^{A_I/A_O} &=\rho^{A_I} \otimes D^{A_O/B_I} * \Lambda^{A_I/A_O} \\
	 &=\rho^{A_I} * D^{A_O/B_I} * \Lambda^{A_I/A_O} \\
		&= D^{A_O/B_I}*\Lambda^{A_I/A_O} *\rho^{A_I} \\
		&=D^{A_O/B_I} * \map{\Lambda}(\rho^{A_I})  \\
		&= \map{D}\Big(\map{\Lambda}(\rho)\Big).
	\end{split}
\end{align}	 
\normalsize

\begin{figure}
    \centering
    \includegraphics[width=9cm]{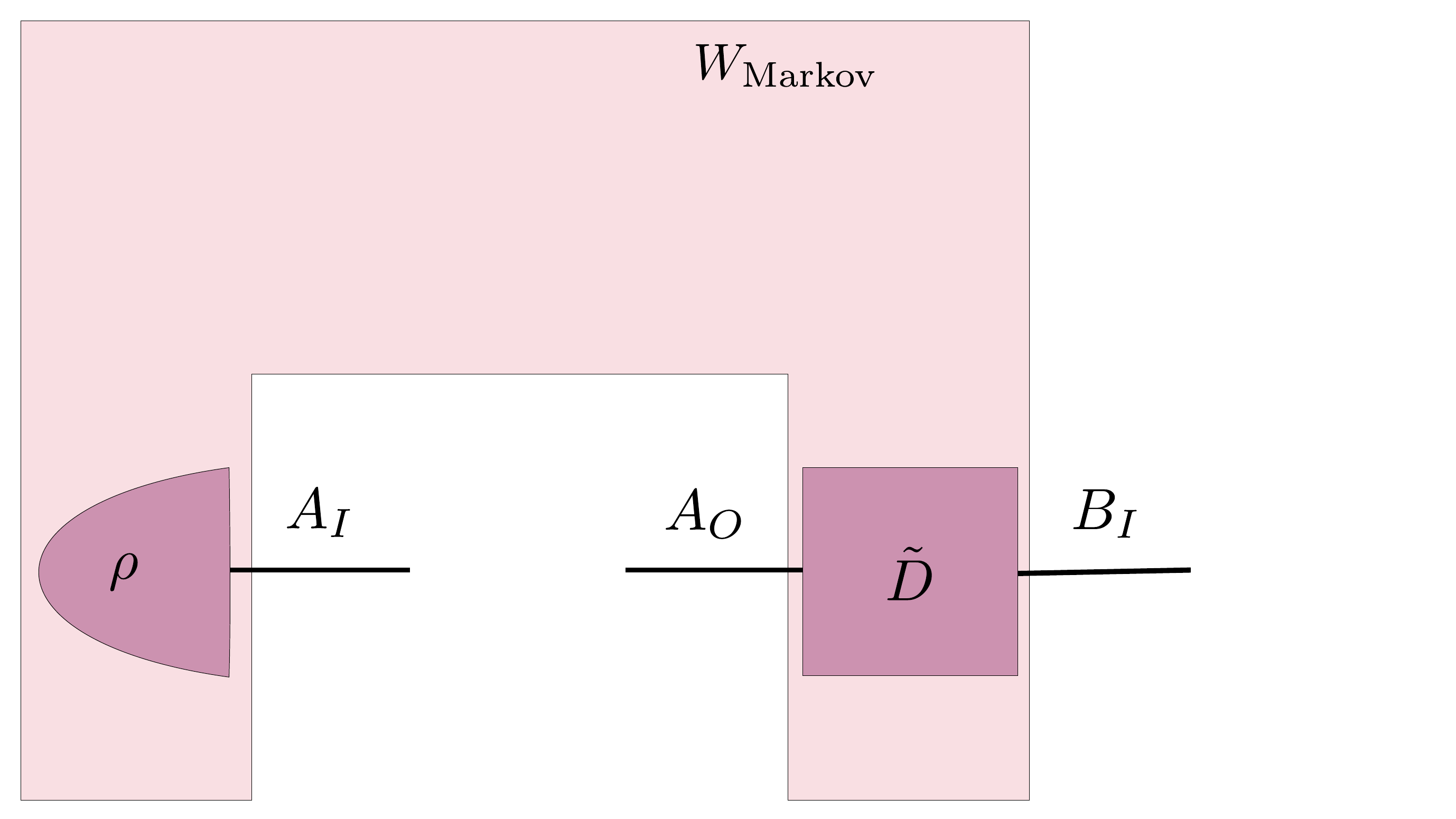}
    \caption{Circuit representation of Markovian processes: Alice receives a fixed state $\rho\in \mathcal{L}(H_{A_I})$ on which she can perform an arbitrary quantum operation. After Alice's operation, the system is subjected to a fixed dynamics described by a quantum channel   $\map{D}:\mathcal{L}(A_O)\to \mathcal{L}(B_I)$, then arriving at Bob's input space laboratory. No auxiliary system is used in this circuit, which implies that correlations between the parties are only due to the direct communication from Alice to Bob.}
    \label{fig:circuitdc}
\end{figure}

\begin{defn}[Markovian process]
	A linear operator $W_{\text{Markov}}\in \mathcal{L}(A_I \otimes A_O \otimes B_I)$ is bipartite Markovian process if it can be written as
\label{eq:edcproc} 
    \begin{align}
    \begin{split}
    W_{\textrm{Markov}}  :=& \, \rho^{A_I} * D^{A_O/B_I}, \\
    	              =& \, \rho^{A_I} \otimes D^{A_O/B_I},
	\end{split}    
    \end{align}

where $\rho^{A_I}$ is a quantum state \textit{i.e.,} $\rho\succeq0$, $\tr(\rho)=1$ and $D^{A_O/B_I}$ is the Choi operator of a quantum channel from the output of Alice to Bob's input \textit{i.e.,} $D\succeq0$, $\tr_{B_I}(D)=\identity_{A_O}$.     
We denote the set of Markovian processes by $\mathcal{L}_{\textrm{Markov}}$. 
\end{defn}

\subsection{Direct-cause processes}
\label{subse:dcprocesses}

	In a Markovian process, all correlations between Alice and Bob admit a direct-cause interpretation, since no auxiliary system or environment is required at any point. In a scenario where Alice and Bob may share prior \emph{classical} correlations, it is natural to define \textit{direct-cause} processes as any process which can be written as a probabilistic mixture of Markovian processes. Hence, the correlations arising from these processes can always be explained by classical mixture of direct-cause ones.

\begin{defn}[Direct-cause process] \label{def:dc}
A linear operator $W_{\text{DC}}\in \mathcal{L}(A_I \otimes A_O \otimes B_I)$ is direct-cause if it is a classical mixture of Markovian processes, that is, $W_{\text{DC}}$ can be written as
\begin{subequations}
    \begin{align}
        \Wdc  :=& \sum_i p_i \rho_i^{A_I} * D_i^{A_O/B_I}  \\ 
        =& \sum_i p_i \rho_i^{A_I} \otimes D_i^{A_O/B_I}, \ p_i \in [0,1], \label{eqdcproc}
    \end{align}
\end{subequations}
where $\rho^{A_I}_i$ are quantum states, \textit{i.e.,} $\rho_i\succeq 0$ and $\tr(\rho_i)=1 \ \forall i$, whereas $D^{A_O/B_I}_i$ are Choi operators of quantum channels from the output of Alice to Bob's input, \textit{i.e.,} $D_i\succeq 0$ and $\tr_{B_I}(D_i)=\identity_{A_O}\ \forall i$. We denote the set of direct-cause processes by $\mathcal{L}_{\textrm{DC}}$.
\end{defn}
 
In the bipartite scenario, an ordered process is direct-cause if and only if it is a process without quantum memory, as defined and shown in Ref.\,\cite{Giarmatzi2018}. See Appendix \ref{sec:classmemdc} for details. Although Ref.\,\cite{Feix2016} analyzes  tripartite ordered processes, when bipartite processes are considered, their definition is equivalent to the one presented in Refs.\,\cite{Ried2015, MacLean2016}, which is also equivalent to the one used here. We discuss this in more details in Section\,\ref{sec:feix}.

\subsubsection{Quantum separability as an outer approximation for the set of DC processes}
\label{subsubsec:entapprox}

 \cyan{A positive semi-definite linear operator $W \in \mathcal{L}(A_I \otimes A_O \otimes B_I)$ is separable in the bipartition $A_I|A_O B_I$ if there exist a probability distribution $p_i$ and positive semi-definite operators $\rho_i^{A_I} \succeq 0$ and $D_i^{A_O/B_I} \succeq 0$ such that $W = \sum_i p_i\rho_i^{A_I}\otimes D_i^{A_O/B_I}$. Which coincides with the definition of a DC process (Def.~\ref{def:dc}) without the trace normalization constraints.}
	It follows directly from Eq.~\eqref{eqdcproc} that every direct-cause process is separable in the bipartition $A_I|A_O B_I$.
	Conversely, if a process is entangled in the bipartition $A_I|A_OB_I$, such process cannot have a DC explanation. This allows us to employ techniques from entanglement theory to certify whether a process is not direct-cause. A simple outer approximation for the set of separable states is the set of states with positive partial transpose (PPT) \cite{Peres1996,Horodecki1996}. Also, it is known that the set of states with a PPT k-symmetric extension \cite{Doherty2001, Doherty2003} forms a hierarchy of sets which converges to the set of separable states when $k \to \infty$. For every fixed $k$, checking if a state has a PPT symmetric k-extension can be done via SDP. 
The precise definition of a PPT k-symmetric extension is provided in Appendix \ref{sec:sdp}.

\begin{defn}[Separable outer approximation for DC processes] \label{def:out}
A bipartite ordered process $W$ belongs to $\mathcal{L}_\text{DC}^{\text{out},\text{PPT}_k}$ if $W$ has a PPT k-symmetric extension on the bipartition $A_I|A_O B_I$.
\end{defn}

One could be tempted to assume that all process which are separable in $A_I|A_OB_I$  bipartition have a direct-cause explanation. However, in Section\,\ref{sec:separableDC}, we prove that this is not the case by constructing an explicit example. This ensures that the problem of certifying if a given process is not direct-cause is not equivalent to certifying entanglement.


\subsection{Common-cause processes}

Common-cause processes are those where the correlations are not due to communication from Alice to Bob, but only due to a bipartite quantum state $\rho\in\mathcal{L}(A_I\otimes B_I)$ initially shared by them. In this way, a common-cause process does not involve any particular quantum dynamic and can be fully characterized by a fixed bipartite state. In order to establish a notation which dialogues to general scenarios which may have non-trivial dynamics between Alice and Bob, we describe common-cause processes by $\Wcc = \rho^{A_I B_I} \otimes \identity^{A_O}$. 

\begin{defn}[Common-cause process]
\label{defn:ccproc}
	A linear operator $W_{\text{CC}}\in \mathcal{L}(A_I \otimes A_O \otimes B_I)$ is a common-cause process if it can be written as
\begin{equation}
\label{eq:ccprocesses}
\Wcc = \rho^{A_I B_I} \otimes \identity^{A_O}, 
\end{equation}
where $\rho^{A_I B_I}$ is a quantum state shared between Alice and Bob, \textit{i.e.,} $\rho^{A_I B_I}\succeq 0$, $\tr(\rho^{A_I B_I})=1$, and $\identity^{A_O}$ is the identity operator acting on $A_O$. 
We denote the set of common-cause processes by $\mathcal{L}_{\textrm{CC}}$.  
\end{defn}

\subsection{Classical CCDC processes}

Classical CCDC processes are processes which can be decomposed as a convex combination of a common-cause and a direct-cause ones. Such processes can always be understood as simple classical statistical mixtures between quantum common-cause and quantum direct-cause processes \cite{Ried2015,Feix2016}. 

\begin{defn}[Classical CCDC processes]
A linear operator $W_{\text{CCDC}}\in \mathcal{L}(A_I \otimes A_O \otimes B_I)$ is a classical CCDC process if it can be written as
    \begin{equation}
        W_{\textrm{CCDC}} := p \Wcc + (1-p) \Wdc, 
    \end{equation}
    where $\Wcc \in \mathcal{L}_{\textrm{CC}}$, $\Wdc \in \mathcal{L}_{\textrm{DC}}$, and $p \in [0,1]$. We denote the set of classical CCDC processes by $\mathcal{L}_{\textrm{CCDC}}$, which is the convex hull of $\mathcal{L}_{\textrm{CC}} \cup \mathcal{L}_{\textrm{DC}}$.
\end{defn}

\subsection{Bipartite ordered processes}
\label{subse:supermaps}
	
	Previously, we have only considered processes which do not require any explicit use of auxiliary systems or environments. In a more general process, the initial state $\rho$ received by Alice may be defined in a larger space, and Alice's lab can only operate on part of it. In order to tackle this situation, we now consider that the initial state is defined not only on Alice's input system, but also in some auxiliary system with unconstrained finite dimension, that is, $\rho\in \mathcal{L}(A_I\otimes \aux)$. Now, Alice can only operate on the part of $\rho$ that enters her laboratory, that is, she can perform any operation $\map{\Lambda}:\mathcal{L}(A_I)\to \mathcal{L}(A_O)$, which will transform the initial state $\rho$ into%
\footnote{Here $\map{\identity}^{\text{aux}}$ stands for the identity map on the auxiliary space, that is, for any operator $A\in\mathcal{L}(\aux)$ we have $\map{\identity}^{\text{aux}}(A)=A$.}%
$\map{\Lambda}^{A_I\to A_O}\otimes \map{\identity}^{\text{aux}}(\rho)$,
	 which will be subjected to a global dynamics described by a quantum channel $\map{D}:\mathcal{L}(A_O\otimes\aux)\to \mathcal{L}(B_I)$.  See Fig.\,\ref{fig:oneslotcombequiv} for a circuit based pictorial illustration.
	 
	 \begin{figure} 
    \centering
    \includegraphics[width=8.8cm]{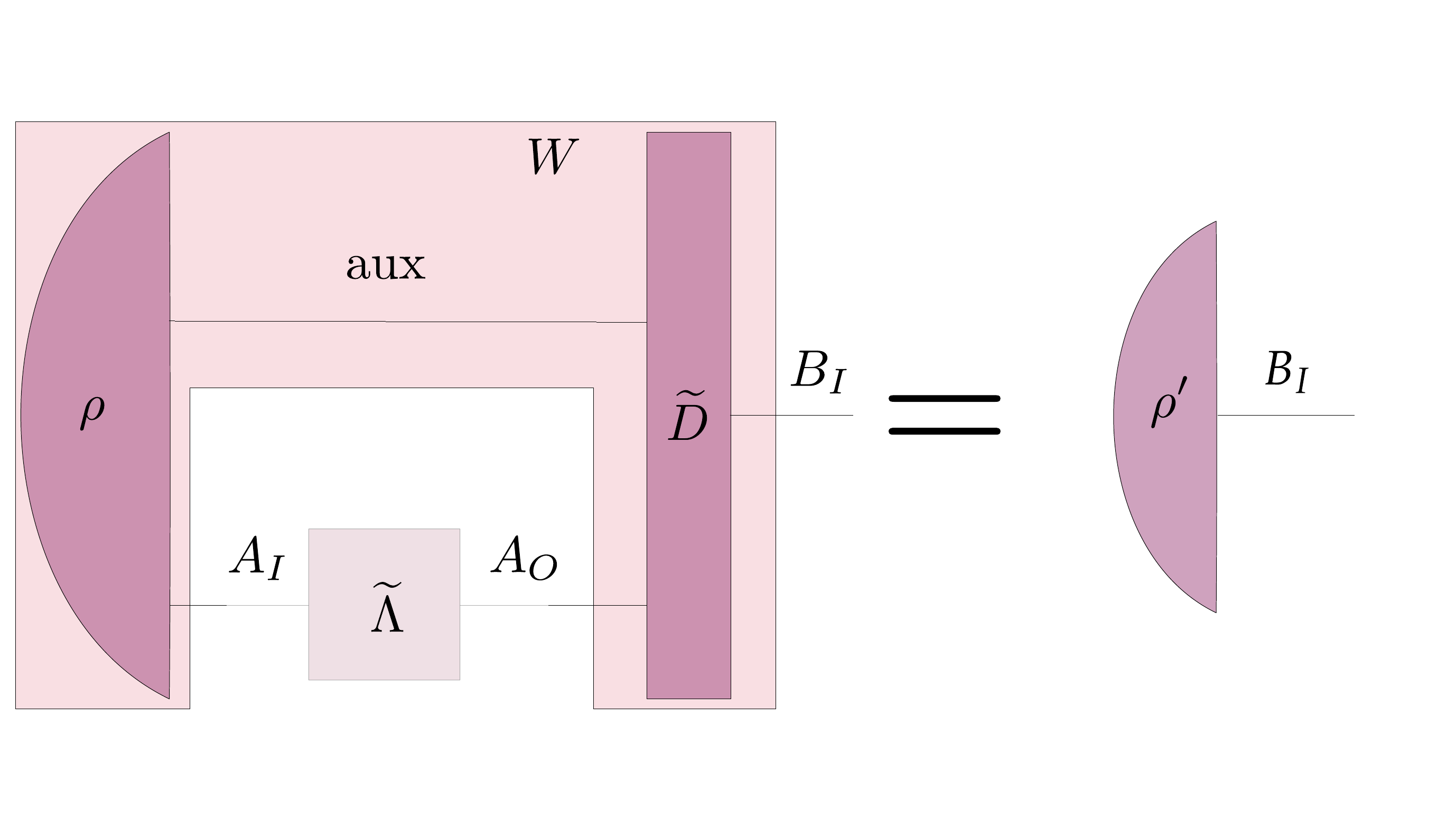}
    \caption{A circuit illustrating a bipartite ordered process $W=\rho*D$ where $\rho \in \mathcal{L}(A_I \otimes \aux)$ is a quantum state and $D$ is the Choi operator of a channel $\widetilde{D}: \mathcal{L}(A_O \otimes \aux) \rightarrow \mathcal{L}(B_I)$. When a quantum channel $\widetilde{\Lambda}:\mathcal{L}(A_I) \rightarrow \mathcal{L}(A_O)$ with Choi operator $\Lambda$ is ``plugged'' into the process $W$, the state $\rho'=W*\Lambda \in \mathcal{L}(B_I) $ is obtained.} 
    \label{fig:oneslotcombequiv}
\end{figure} 

	A bipartite ordered process can then be described by the known quantum state
 $\rho\in\mathcal{L}(A_I\otimes \aux)$ and the known dynamics represented by the channel $\map{D}:\mathcal{L}(A_O\otimes \aux)\to \mathcal{L}(B_I)$. In the Choi-Jamio\l{}kowski representation, this can be conveniently described by $W := \rho^{A_I \aux} * D^{A_O \aux/ B_I}$. Under this definition, for any quantum operation $\map{\Lambda}:\mathcal{L}(A_I)\to \mathcal{L}(A_O)$
	performed by Alice, the quantum state arriving in Bob's input space can be obtained via the link product as
	\small
	\begin{align}
	\begin{split}
	 W * \Lambda^{A_I/A_O} &=\rho^{A_I \aux} * D^{A_O \aux/ B_I} * \Lambda^{A_I/A_O} \\
		&= D^{A_O/B_I}*\Lambda^{A_I/A_O} *\rho^{A_I \aux} \\
		&=D^{A_O/B_I} * \map{\Lambda}^{A_I\to A_O}\otimes \map{\identity}^{\text{aux}}\big(\rho^{A_I \aux}\big)  \\
		&= \map{D}\Big(\map{\Lambda}^{A_I\to A_O}\otimes \map{\identity}^{\text{aux}}\big(\rho^{A_I \aux}\big)\Big).
	\end{split}
\end{align}	 
\normalsize
\begin{defn}[Bipartite ordered process]
    A linear operator $W \in \mathcal{L}(A_I \otimes A_O \otimes B_I)$ is a bipartite ordered process if there exists a quantum state $\rho^{A_I \aux} \in \mathcal{L}(A_I \otimes \aux)$,\textit{i.e.,} $\rho^{A_I \aux} \succeq 0$, $\tr(\rho^{A_I \aux})=1$ and a quantum channel $\map{D}: \mathcal{L}(A_I \otimes \aux) \rightarrow \mathcal{L}(B_I)$ with Choi operator $ D^{A_O \aux/ B_I}$, \textit{i.e.,} $D^{A_O \aux/ B_I}\succeq0$, $\tr_{B_O}(D^{A_O \aux/ B_I})=\identity_{A_0,\aux}$, such that
\small   
    \begin{align}
    	\begin{split}
        W :=& \rho^{A_I \aux} * D^{A_O \aux/ B_I} \\
           =& \tr_\aux\Bigg( \left(\rho^{A_I \aux}\otimes \identity^{A_OB_I}\right)^{T_{\aux}} \\
           & \hspace{0.7cm}\cdot \left( \identity^{A_I} \otimes D^{A_O \aux/ B_I}\right) \Bigg) .
           \end{split}
    \end{align}
\normalsize
We denote the set of bipartite ordered processes  by $\mathcal{L}_{\AB}$.
\end{defn}

	Bipartite ordered process may be seen as quantum objects transforming a quantum operation into a quantum state%
\footnote{In  Ref.\,\cite{Chiribella2008, Chiribella2009a} the authors prove that under a set of assumptions, it can be proven that bipartite ordered process are indeed the most general method to transform arbitrary quantum operations into quantum states.}%
and are also known in the literature as quantum non-Markovian processes \cite{Pollock2015}, causal maps \cite{Ried2015}, quantum co-strategies \cite{Gutoski2007}, and ordered process matrices \cite{Oreshkov2012}. Also, ordered processes may also be seen as a particular instance of quantum channels with memory \cite{kretschmann05} and quantum combs \cite{Chiribella2009a}.  

	In this general definition of bipartite ordered processes, Markovian processes form the particular case where there is no auxiliary space $\aux$. Common-cause processes can also be represented as bipartite ordered processes. Indeed, if $\Wcc=\rho^{A_I B_I} \otimes \identity^{A_O}$, we can set the auxiliary space $\mathcal{H}_\text{aux}$ as isomorphic to $\mathcal{H}_{B_I}$ and the initial state $\rho^{A_I\text{aux}}$ as isomorphic to $\rho^{A_I B_I}$. Then, one can set the channel $\map{D}$ as an operation that sends the auxiliary system to $B_I$ and discard the system in $A_O$. This can be done formally via 
\small
\begin{align}
\label{eq:ccprocesslink}
\Wcc &= \tr_{\aux'}\left[\rho^{A_I \aux} * \keketbra{\uswap}{\uswap}^{A_O \aux /B_I \aux' }\right] \nonumber \\
	        &= \rho^{A_I B_I} \otimes \identity^{A_O}, 
\end{align}
\normalsize
where $\textrm{U}_{\textrm{SWAP}}:= \sum_{ij} \ketbra{ij}{ji}$ is the swap operator. A circuit illustration of common-cause process as bipartite ordered ones is presented in Fig.\,\ref{fig:circuitcc}.
\begin{figure}
    \centering
    \includegraphics[width=10cm]{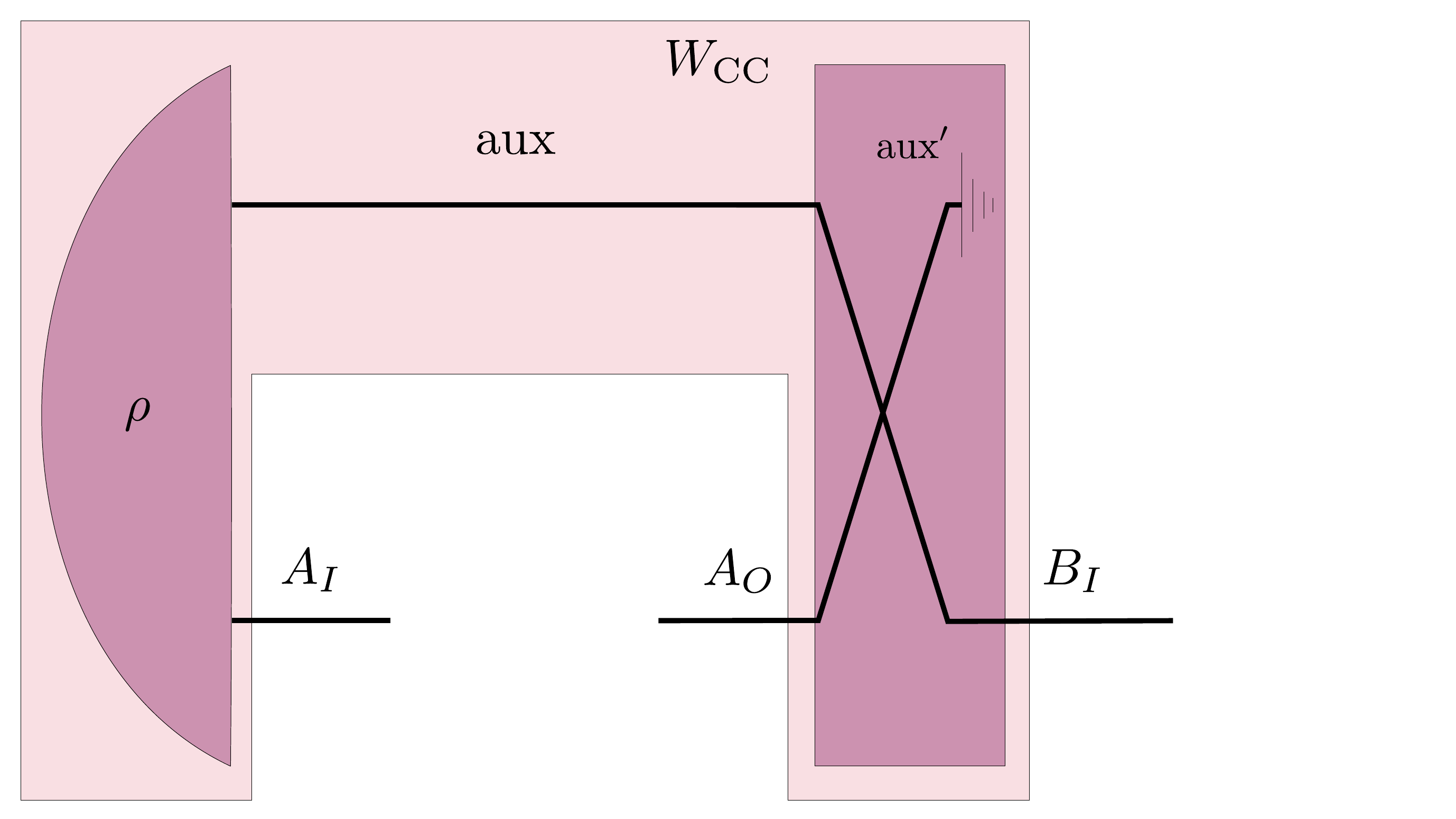}
    \caption{A general bipartite common-cause process can be represented by a circuit of this form. The shared quantum state between Alice and Bob comes from an initial correlated state between Alice and an auxiliary system. The state of the auxiliary system is exchanged with the system coming from the output of Alice and sent to Bob, and the auxiliary system is then discarded, which makes the output state of Alice to have no impact on the input of Bob. This makes only the input states of Alice and Bob to be relevant for the process, which is why the correlations between them stem from the common-cause correlations on $\rho$.}
    \label{fig:circuitcc}
\end{figure}

With all the important processes defined, Fig.\,\ref{fig:ccdcsets} shows the relationship between all the sets defined in this section. 

\begin{figure}[t]
    \centering
    \includegraphics[width=7cm]{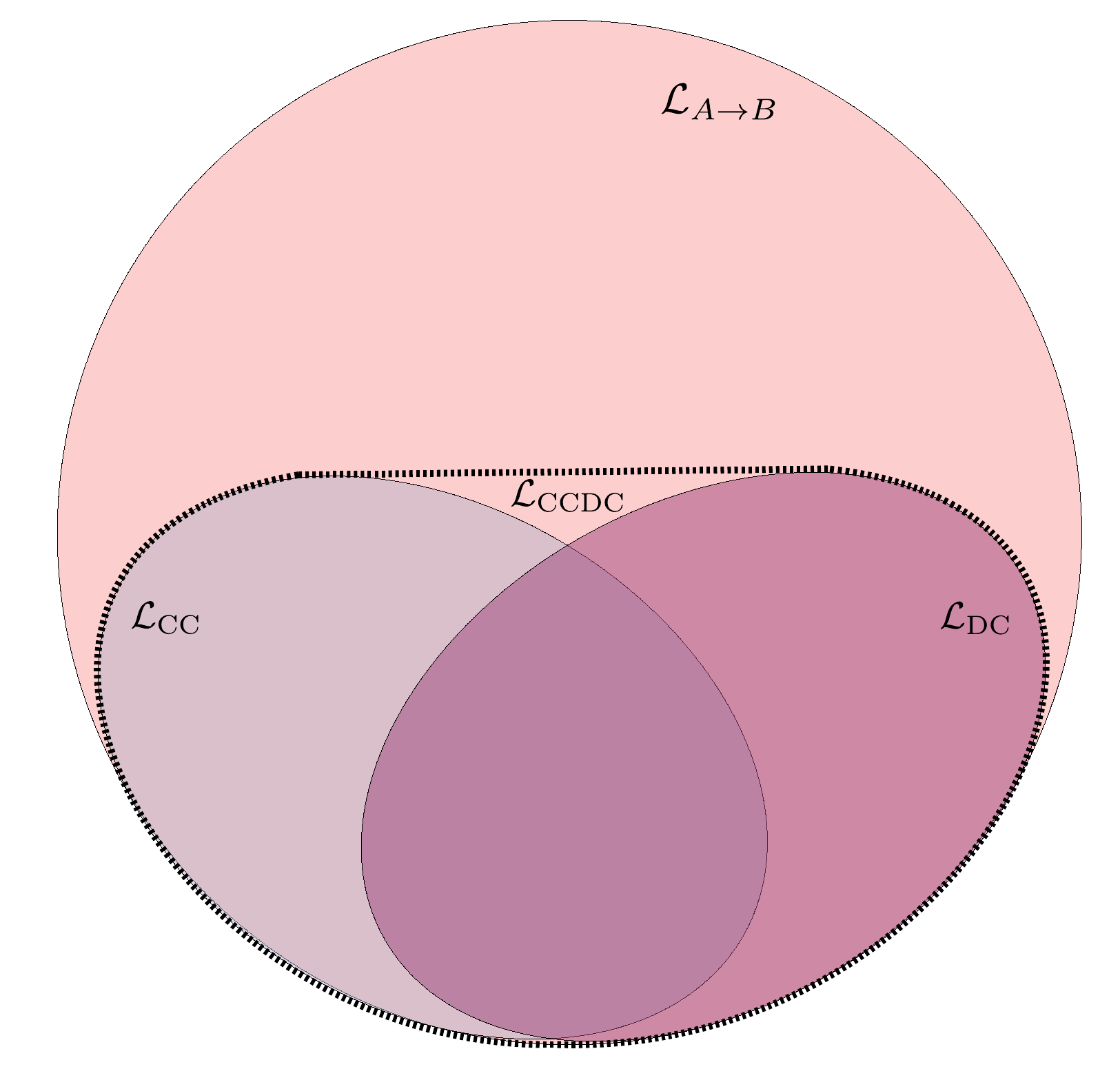}
    \caption{Illustrative representation of the sets in the CCDC scenario. The sets of common-cause processes ($\mathcal{L}_{\textrm{CC}}$) and of direct-cause processes ($\mathcal{L}_{\textrm{DC}}$) are proper subsets of the set of bipartite ordered process $\mathcal{L}_{\AB}$, with a non-empty intersection between $\mathcal{L}_{\textrm{CC}}$ and $\mathcal{L}_{\textrm{DC}}$. Classical CCDC processes are all processes in the convex hull of $\mathcal{L}_{\textrm{CC}} \cup \mathcal{L}_{\textrm{DC}}$, represented by $\mathcal{L}_{\textrm{CCDC}}$.}
    \label{fig:ccdcsets}
\end{figure}	

In addition to its definition, bipartite ordered processes admit a characterization in terms of linear and positive semi-definite constraints, which will be useful later in this work.
If follows from direct inspection that all bipartite ordered process $W \in \mathcal{L}(A_I \otimes A_O \otimes B_I)$  respect
\begin{subequations}
\label{eq:condordproc}
    \begin{align}
        W & \succeq 0, \\
        \tr_{B_I}(W) & = \sigma^{A_I}\otimes \identity^{A_O}, \label{subeq:condprojab} 
    \end{align}
\end{subequations}
where $\sigma^{A_I}$ is a quantum state. Interestingly, this condition is also sufficient \cite{Chiribella2009a}. For any linear operator $W$ respecting the conditions in Eq.\,\eqref{eq:condordproc}, one can find a quantum state $\rho\in \mathcal{L}(A_I\otimes \aux) $ and a quantum channel  $\map{D}:\mathcal{L}(A_O\otimes \aux)\to \mathcal{L}(B_I)$ such that $\rho^{A_I\text{aux}}*D^{\text{aux}B_I/B_O}=W$. One possible construction is done by setting the auxiliary space $\aux$ to be isomorphic to $A_I$ and $\rho\in \mathcal{L}(A_I\otimes \aux) $ be be isomorphic to a purification of $\sigma^{A_I}$, for instance, 
\small
\begin{equation}
\rho:= \left(\identity^{A_I}\otimes\sqrt{\sigma^{\aux}}^T\right) \; \ketbra{\identity}{\identity}^{A_I\text{aux}}\; \left(\identity^{A_I}\otimes\sqrt{\sigma^{\aux}}^T\right).
\end{equation}
\normalsize
 Now, one can define a quantum channel via 
 \small
\begin{align}
	\begin{split}
D:=&  \left(\sqrt{\sigma^{\aux}}^{-1} \otimes \identity^{A_OB_I} \right) \\
   & W^{\text{aux}A_OB_I} \; \left(\sqrt{\sigma^{\aux}}^{-1} \otimes \identity^{A_O B_I}\right),
   \end{split}
\end{align}
\normalsize
where $\sqrt{\sigma}$ is the unique positive semi-definite square root of $\sigma$ and $\sigma^{-1}$ is the Moore–Penrose inverse of $\sigma$, that is, the inverse of $\sigma$ on its range. In this way, direct inspection shows that $D$ is the Choi operator of a quantum channel and that $W=\rho*D$.

Inspired by Ref.\,\cite{Araujo2015}, we now present another characterization for bipartite ordered processes which will be useful for proving our results. A linear operator $W \in \mathcal{L}(A_I \otimes A_O \otimes B_I)$ is a bipartite ordered process if and only if it respects
\begin{subequations}
\label{eq:ordproc}
\begin{align}
W & \succeq 0, \\
W & = L_{\AB}(W), \\
\tr(W) & = d_{A_O} \label{subeq:condnormab},
\end{align}
\end{subequations}
where
\begin{equation}
    \label{eq:projab}
    L_{\AB}(W) := W + \tracerep{A_O B_I}{W} - \tracerep{B_I}{W},
\end{equation}
with $\tracerep{X}{(\cdot)} = \tr_X (\cdot) \otimes \frac{\identity^{X}}{d_X}$ and $L_{\AB}(W)$ is the map projecting an operator $W$ into the linear space spanned by bipartite ordered processes. Eqs.\,\eqref{eq:condordproc} and \eqref{eq:ordproc} are equivalent conditions for an operator $W \in \mathcal{L}(A_I \otimes A_O \otimes B_I)$ to be a valid bipartite ordered process. In particular, conditions given by Eqs.\,\eqref{eq:ordproc} are rather useful for implementing our numerical methods, due to the use of the projector $L_{\AB}(W)$.

\section{Detecting and quantifying non-classical CCDC}
\label{sec:cetifnccdc}

Inspired by the robustness of entanglement \cite{Vidal1999,Steiner2003}, one can quantify the violation of a classical CCDC decomposition in a process $W \in \mathcal{L}(A_I \otimes  A_O \otimes B_I)$ in terms of its \textit{generalized robustness of non-classical CCDC}\footnote{Note that the definition of generalized robustness $R_G$ here has a one-to-one relation to the ``non-classicality of causality'' $\mathcal{C}$ presented in Ref.\cite{Feix2016} via $R_G(W)=\frac{\mathcal{C}(W)}{1+\mathcal{C}(W)}$. } \cite{Feix2016} by

\small
\begin{subequations}
\label{eq:genrob}
    \begin{align}
            R_{G} (W) := \textrm{min}  \hspace{.2cm}  & r \nonumber \\
             \textrm{subject to}   \hspace{.2cm} &  (1-r)W + r \Omega \in \mathcal{L}_{\textrm{CCDC}}, \label{subeq:genrobclas}\\
                             &  0 \leq r \leq 1, \\        
                              & \Omega \in \mathcal{L}_{\AB}, \label{subeq:ordsetcar}
    \end{align}
\end{subequations}
\normalsize
where $\mathcal{L}_{\AB}$ is the set of bipartite ordered processes defined in Eqs.\,\eqref{eq:ordproc}. The generalized robustness $R_G(W)$ corresponds to how resistant is the non-classical CCDC property of the process $W$ against its worst possible noise.

In Appendix \ref{app:robustness_bound}, we show that the generalized robustness of all bipartite processes is upper-bounded by $R_G(W) \leq 1 - \frac{1}{d_{A_I}}$, which is attainable with the examples of non-classical CCDC processes we present in Section \ref{sec:ncccdc}.

Another quantification of non-classical CCDC is in terms of its \emph{white-noise robustness}, given by 
\small
\begin{subequations}   
\label{eq:wnrob}
   \begin{align}
   R_{WN} (W) :=  \textrm{min}  \hspace{.2cm}  & r \nonumber \\
             \textrm{subject to}   \hspace{.2cm} &   (1-r) W + r \frac{\identity}{d_{A_I} d_{B_I}}  \in \mathcal{L}_{\textrm{CCDC}}, \label{subeq:wnlclas}\\
                             &   0 \leq r \leq 1, 
    \end{align}
\end{subequations}
\normalsize
where $\identity$ is the identity operator in $\mathcal{L}(A_I \otimes A_O \otimes B_I)$. The value $R_{WN}(W)$ corresponds to how resistant is the non-classical CCDC property of the process $W$ against white noise.
Figure \ref{fig:convcombwit} gives a representation of both types of robustness.

In Appendix \ref{sec:upper_WNR}, we show that the white noise robustness of all bipartite processes is upper-bounded by $R_{WN}(W) \leq 1 - \frac{1}{d_{A_I}d_{A_O}d_{B_I}+1}$. Such bound is not tight, but is useful to demonstrate that the convex optimization problem of Eq.\eqref{eq:wnrob} respects strong duality (see Appendix \ref{sec:strong_duality} for more details).

\begin{figure}
    \centering
    \includegraphics[width=7cm]{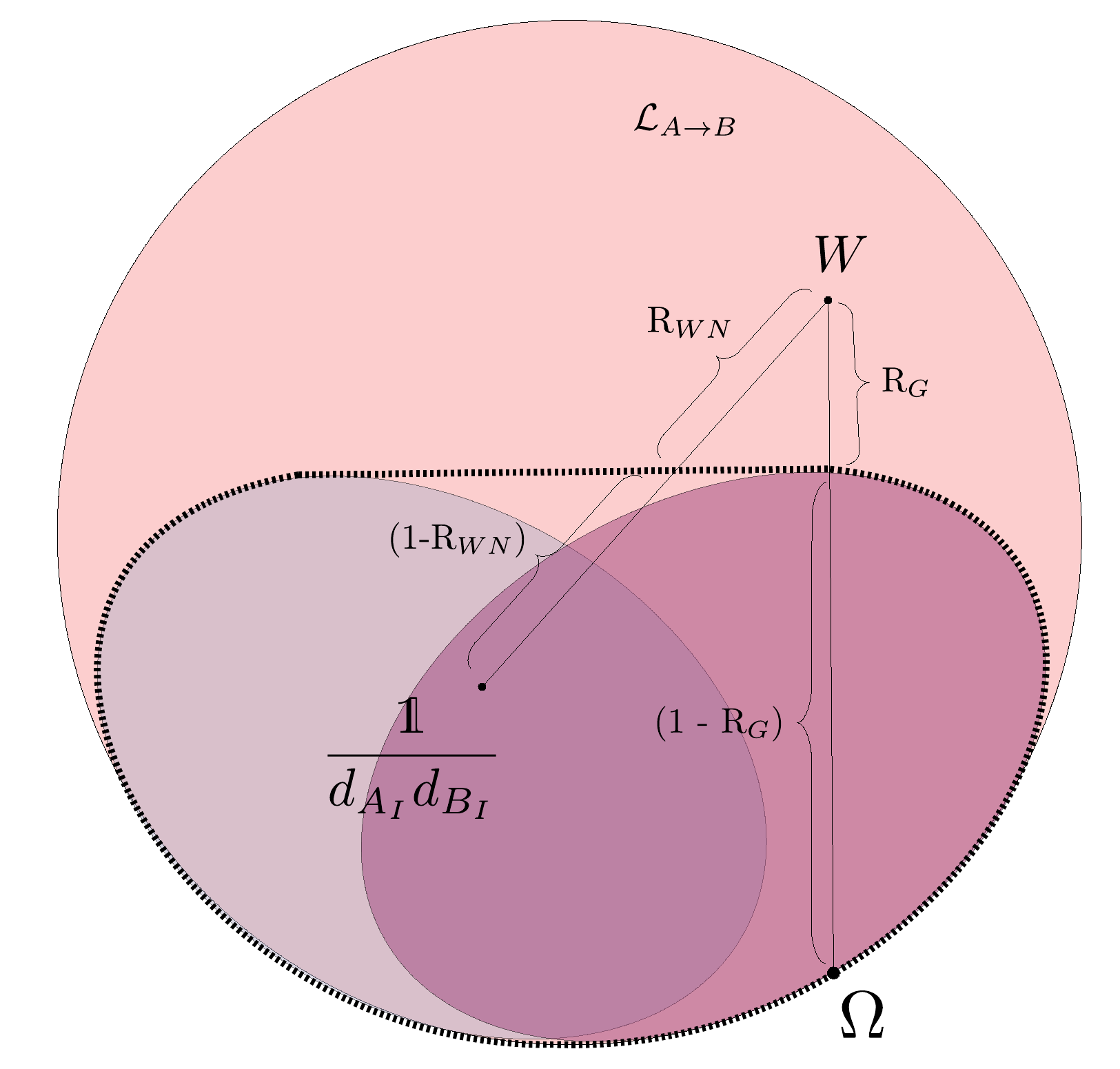}
    \caption{Representation of the problem of determining if a given process $W$ is non-classical CCDC. The above figure illustrates an ordered process operator $\Omega \in \mathcal{L}_{\AB}$ such that the minimum convex combination of it with the process $W$ results in a classical CCDC process.}
    \label{fig:convcombwit}
\end{figure}

	\subsection{Witnessing non-classical CCDC processes}	
	\label{subsec:witness}
	
	A non-classical CCDC witness is a Hermitian operator $S \in \mathcal{L}(A_I \otimes A_O \otimes B_I)$ which respects $\tr(S W_{\text{CCDC}}) \geq 0$ for every classical CCDC process $W_{\text{CCDC}}$ \cite{Feix2016}. Non-classical CCDC witnesses that are not trivial are the ones having $\tr(S W) < 0$ for some non-classical CCDC process $W$, as the main purpose of a witness is to certify that a given process is non-classical CCDC. Since the set of classical CCDC processes is closed and convex, similarly to entanglement \cite{OtfriedReview}, every non-classical CCDC process may be certified by a non-classical CCDC witness. 
Also, as proven in Appendices \ref{sec:strong_duality} and \ref{sec:sdp}, the convex optimization problems used to define generalized and white noise robustness of non-classical CCDC respect a strong duality condition and the dual formulation of our outer approximation can be used to explicitly construct non-classical CCDC witness. 

	Similarly to entanglement witnesses, verifying if an arbitrary operator
$S \in \mathcal{L}(A_I \otimes A_O \otimes B_I)$ is a non-classical CCDC witness is
likely to be computationally hard \cite{Gurvits2003}, as it would require, in principle, to verify that $\tr(SW_{\text{CCDC}}) \geq 0$ for all classical CCDC processes $W_{\text{CCDC}}$. Despite the hardness of the general problem, we now provide simple sufficient (but not necessary) analytical conditions to ensure that $S$ is a non-classical CCDC witness.

	Any operator $S\in \mathcal{L}(A_I \otimes A_O \otimes B_I)$ respecting $\tr_{A_O}(S) \succeq 0$ and $S^{T_{A_I}} \succeq 0$ is a non-classical CCDC witness. This claim follows from the fact that every CC process respects $W_\text{CC}=\tracerep{A_O}{W_\text{CC}}\succeq0$ and every DC process respects $W_\text{DC}^{T{_{A_I}}} \succeq 0$. Hence, 
\footnotesize
	\begin{align} \label{eqs:witnessconds}
		\begin{split}
	\tr(S W_{\text{CCDC}}) & = p~\tr(S W_{\text{CC}}) + (1-p) \tr(S W_{\text{DC}}) \\
	 					   & = p~\tr( S \tracerep{A_O}{W_{\text{CC}}})	+ (1-p) \tr(S^{T_{A_I}} W^{T_{A_I}}_{\text{DC}}) \\
	 					   & = p ~\tr(\tracerep{A_O}{S}   W_{\text{CC}}) + (1-p) \tr(S^{T_{A_I}} W^{T_{A_I}}_{\text{DC}}) \\
	 					    & \geq0 .
		\end{split}
	\end{align}
\normalsize

\section{Beyond the entanglement approximation: SDP hierarchies for \textit{tight} non-classical CCDC certification}

	As stated in Sec.\,\ref{subsubsec:entapprox}, the set of separable processes in the bipartition $A_I|A_OB_I$ provides an outer approximation for the set of DC processes. As pointed in Ref.\,\cite{Feix2016, Giarmatzi2018}, this provides a SDP approach that gives lower bounds for the non-DC and non-classical CCDC robustness of a bipartite ordered process. Indeed, since the relation
$\mathcal{L}_\text{DC}\subseteq \mathcal{L}_\text{DC}^{\text{PPT}_k}$ holds for every natural number $k$, we can define the convex hull $\mathcal{L}_{\text{CCDC}}^{\text{out},\text{PPT}_k}:=\text{conv}(\mathcal{L}_\text{CC},\mathcal{L}_\text{DC}^{\text{out},\text{PPT}_k})$, and the quantity
\begin{subequations}
\label{eq:genrobout}
    \begin{align}
          R_G^{\text{low},\text{PPT}_k}(W) := \textrm{min}  \hspace{.2cm}  & r \nonumber \\
             \textrm{subject to}   \hspace{.2cm} &  (1-r)W + r \Omega \in
\mathcal{L}_{\text{CCDC}}^{\text{out},\text{PPT}_k}            \\
                             &  0 \leq r \leq 1, \\        
                              & \Omega \in \mathcal{L}_{\AB}, 
   \end{align}
\end{subequations}
which gives a lower-bound for the actual generalized robustness, since the relation $R_G^{\text{low,PPT}_k}(W) \leq  R_G(W)$ holds for every $W$.  Analogously, we can also obtain a lower-bound for the white noise robustness by defining $R_{WN}^{\text{low,PPT}_k}(W)$.

	As we show in Sec.\,\ref{sec:separableDC}, approximating the set of DC processes by the set of separable process is rather limited. In particular, there exist processes which are separable in the bipartition $A_I|A_OB_I$ but are non-classical CCDC. Hence, such processes would never be detected by a method solely based in quantum entanglement techniques. In this section, we overcome this problem by presenting two novel SDP hierarchies of sets which converge to the set of CCDC processes. Also, one hierarchy is based in an inner approximation, whereas the other constitutes an outer approximation. Hence, we can obtain a sequence of converging upper and lower bounds on the non-classical CCDC robustnesses. All the sets discussed in this section are illustrated in Fig.~\ref{fig:inoutapprox}.

\begin{figure}[H]
\centering
\includegraphics[width=0.4\textwidth]{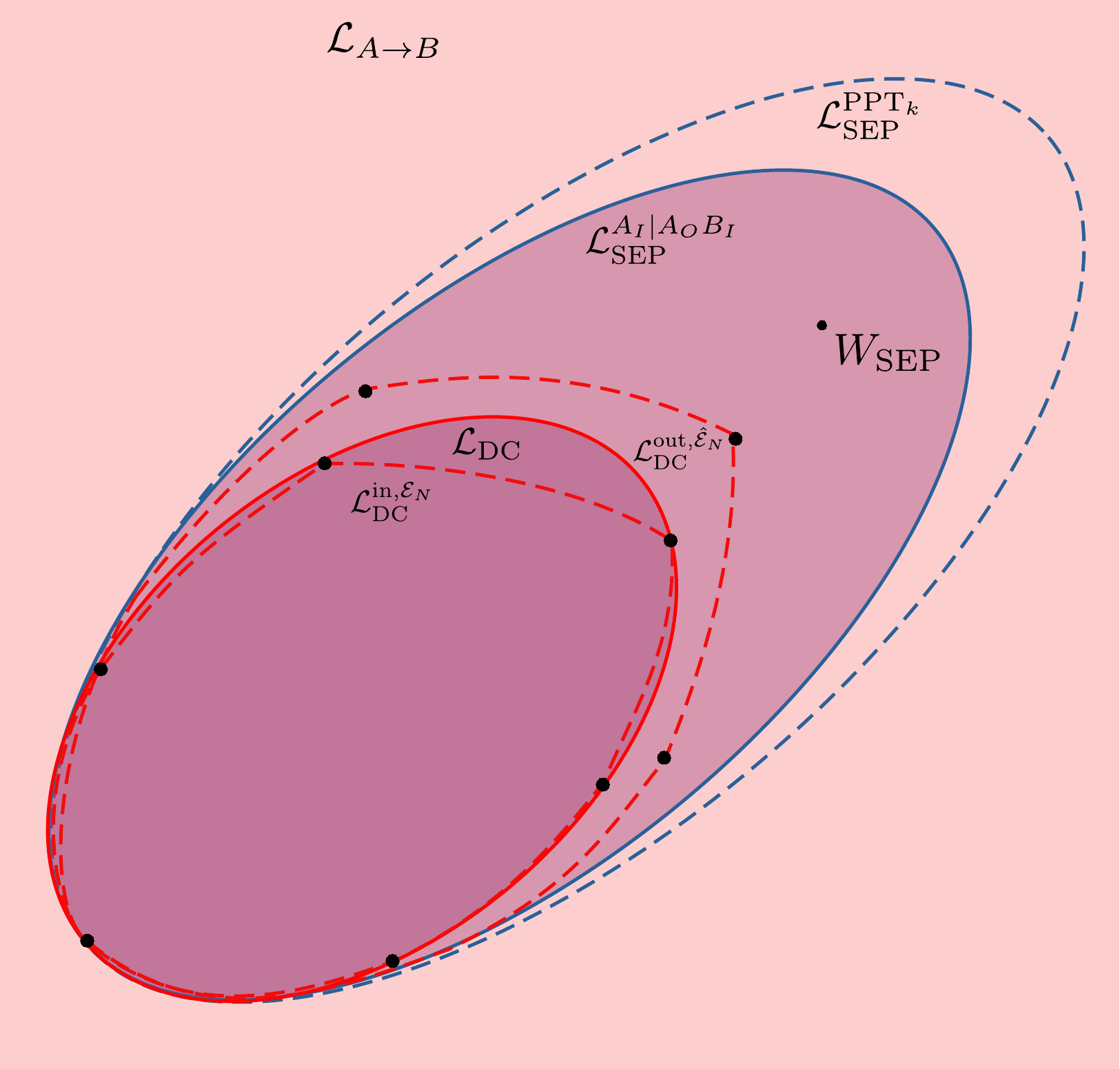}
\caption{\cyan{Hierarchical relation between the set of processes separable in the bipartition $A_I| A_O B_I$ ($\mathcal{L}_{\text{SEP}}^{A_I|A_OB_I}$) and the set of direct-cause processes $\mathcal{L}_{\text{DC}}$. In the limit of large $k$, $\mathcal{L}_{\text{SEP}}^{\text{PPT}_k}$ converges to $\mathcal{L}_{\text{SEP}}^{A_I|A_OB_I}$. By increasing the set of $N$ states, the inner approximation ($\mathcal{L}_{\text{DC}}^{\text{in},\mathcal{E}_N}$) converges to $\mathcal{L}_{\text{DC}}$. Similarly, by increasing the set of $N$ trace-one operators $\{\hat{\rho}_{i=1}^N\}_i$, the outer approximation $\mathcal{L}_{\text{DC}}^{\text{out},\hat{\mathcal{E}}_N}$ also converges to $\mathcal{L}_{\text{DC}}$. The point $W_{\text{SEP}}$ stands for the example of process which is separable in the bipartition $A_I|A_O B_I$, but is non-classical CCDC, defined and discussed in Sec.~\ref{sec:separableDC}.}}
\label{fig:inoutapprox}
\end{figure}

\subsection{Inner approximation for direct-cause processes}
\label{subsec:inndc}

The following inner approximation leads to a hierarchy that converges to the set of direct-cause processes and is inspired by the SDP approach for entanglement detection introduced in \cite{Brandao2004b}.

\begin{defn}[Inner approximation for the direct-cause set] \label{def:inner}
	Let $\mathcal{E}_N:=\{\ket{\psi_i}^{A_I}\}_{i=1}^N$ be a fixed set of pure quantum states $\ket{\psi_i}^{A_I}$. The set $\mathcal{L}_\text{DC}^{\text{in},\mathcal{E}_N}$ corresponds to the inner approximation of the set $\mathcal{L}_{\text{DC}}$ and is composed by every process $W$ that can be written as
\begin{subequations}
\begin{equation}
	W = \sum_i p_i	\ketbra{\psi_i}{\psi_i}^{A_I} \otimes D_i^{A_O/B_I},
\end{equation}
where
\begin{equation}
\sum_ip_i=1 \; \text{ and } \; 	p_i\geq0,\; \forall i,
\end{equation}
and
$D_i^{A_O/B_I}$ are the Choi operators of quantum channels, \textit{i.e.},
\begin{equation}
D_i^{A_O/B_I}\succeq0 \; \text{ and } \; \tr_{B_I}\big(D_i^{A_O/B_I}\big)=\identity^{A_O}, \; \forall i. 
\end{equation}
\end{subequations}
\end{defn}
	Note that in Def.\,\ref{def:inner}, we imposed the partial trace quantum channel normalization condition $\tr_{B_I}\big(D_i^{A_O/B_I}\big)=\identity^{A_O}$, which does not appear in an entanglement-based approach where we only impose positivity and the full trace constraint. As we will see in the further sections, this corresponds to a fundamental difference, since the set of separable quantum states is not equivalent to the set of direct-cause processes.

	By construction, for any set of quantum states $\mathcal{E}_N$, we have the relation $\mathcal{L}_\text{DC}^{\text{in},\mathcal{E}_N}\subseteq \mathcal{L}_\text{DC}$. Also, if $\mathcal{E}_{\infty}$ is the set of all pure quantum states in $\mathcal{L}(A_I)$, we have
$\mathcal{L}_\text{DC}^{\text{in},\mathcal{E}_{\infty}} = \mathcal{L}_\text{DC}$. We can now define the convex hull $\mathcal{L}_{\text{CCDC}}^{\text{in},\mathcal{E}_N}:=\text{conv}(\mathcal{L}_\text{CC},\mathcal{L}_\text{DC}^{\text{in},\mathcal{E}_N} )$, and the quantity
\small
\begin{align}
\begin{split} 
    \label{eq:genrobin}
           R_G^{\text{up,}\mathcal{E}_N}(W) := \textrm{min}  \hspace{.2cm}  & r \\
             \textrm{subject to}   \hspace{.2cm} &  (1-r)W + r \Omega \in \mathcal{L}_{\textrm{CCDC}}^{\text{in,}\mathcal{E}_N}, \\
                             &  0 \leq r \leq 1, \\        
                              & \Omega \in \mathcal{L}_{\AB}, 
\end{split}
\end{align}
\normalsize
which provides an upper-bound for the generalized robustness $R_G(W)$, since the relation $R_G(W) \leq R_G^{\text{up},\mathcal{E}_N}(W)$ holds for every bipartite ordered process $W$ and $\mathcal{E}_N$.

As discussed in details in Appendix \ref{sec:sdp}, the quantity $R_G^{\text{up},\mathcal{E}_N}(W)$ can be evaluated by an SDP. Moreover, this SDP converges to the exact value of $R_G(W)$ when the set of states $\mathcal{E}_N$ contains all pure states in $\mathcal{L}(A_I)$.  Analogously, we can also obtain an upper-bound for the white noise robustness by defining $R_{WN}^{\text{up},\mathcal{E}_N}(W)$.

\subsection{Outer approximation for direct-cause processes}
\label{subsec:outdc}

	For the outer approximation for the set of direct-cause processes, we will make use of an outer approximation for the set of quantum states. Since the set of quantum states is convex, we can always construct an outer approximation given by a polytope, similarly to the strategy used to construct a polytopes providing outer approximations for the set of quantum measurements \cite{quintino15,hirsch16,oszmaniec17,hirsch18,bene18}. Let $\mathcal{S}_d$ be the set of quantum states with dimension $d$, there exists a set of linear operators $\hat{\rho}_i\in\mathcal{L}(\mathbb{C}_d)$ that satisfy $\tr(\hat{\rho}_i)=1$ and such that the convex hull of $\{\hat{\rho}_i\}_i$ contains $\mathcal{S}_d$. Note that the operators $\hat{\rho}_i$ are \emph{not} required to be positive semi-definite, hence they do not correspond to quantum states.

	For the moment, let us assume that we know a set of trace-one operators $\{\hat{\rho}_{i=1}^N\}_i$ such that  $\text{conv}(\{\hat{\rho}_i\}_{i=1}^N) \supseteq \mathcal{S}_d$. We can then use this finite set  to construct an outer approximation for the set of DC processes, similarly to the inner approximation in Def.\,\ref{def:inner}.	
	\begin{defn}[Outer approximation for the direct-cause set] \label{def:outerTRUE}
	Let $\hat{\mathcal{E}}_N:=\{\hat{\rho}_i^{A_I}\}_{i=1}^N$ be a fixed set of linear operators with $\tr(\hat{\rho}_i) = 1 \ \forall i$, such that the convex hull of  $\hat{\mathcal{E}}_N$ contains the set of all quantum states acting on $A_I$. A bipartite ordered process $W$ is in the set $\mathcal{L}_\text{DC}^{\text{out},\hat{\mathcal{E}}_N}$ if it can be written as
\begin{subequations}
\begin{equation}
	W = \sum_i p_i	\hat{\rho}_i^{A_I} \otimes D_i^{A_O/B_I},
\end{equation}
where
\begin{equation}
\sum_ip_i=1 \; \text{ and } \; 	p_i\geq0 \; \forall i,
\end{equation}
and
$D_i^{A_O/B_I}$ are the Choi operators of quantum channels \textit{i.e.}
\begin{equation}
D_i^{A_O/B_I}\succeq0 \; \text{ and } \; \tr_{B_I}\big(D_i^{A_O/B_I}\big)=\identity^{A_O} \; \forall i. 
\end{equation}
\end{subequations}
\end{defn}
By construction, we have that $\mathcal{L}_{\text{DC}} \subseteq \mathcal{L}_\text{DC}^{\text{out},\hat{\mathcal{E}}_N}$ and the set $\mathcal{L}_\text{CCDC}^{\text{out},\hat{\mathcal{E}}_N}:= \text{conv}(\mathcal{L}_{\text{CC}} \cup \mathcal{L}_\text{DC}^{\text{out},\hat{\mathcal{E}}_N})$ is an outer approximation for $\mathcal{L}_{\text{CCDC}}$. The generalized and white noise robustnesses are defined in a way analogous to what is done in Sec.\,\ref{subsec:inndc}, thus being denoted by $R_G^{\text{low},\hat{\mathcal{E}}_N}(W)$ and $R_{WN}^{\text{low},\hat{\mathcal{E}}_N}(W)$, and these are lower bounds for the robustness $R_G(W)$ and $R_{WN}(W)$ respectively.

	We now address the problem of finding sets of trace-one operators which contain the set of quantum states.
Let us first start with the simple qubit case ($d=2$) where the set of quantum states can be faithfully represented in terms of the Bloch sphere \cite{chuang}. For this case, we can construct a set of trace-one operators $\{\hat{\rho}_i\}_i$ such that
	 $\text{conv}(\{\hat{\rho}_i\}_i)\supseteq \mathcal{S}_2$ 
simply by finding a polyhedron that includes a sphere of unit radius. One method to find such polyhedron goes as following:
\begin{enumerate}
\item Sort $N$ normalized vectors $\{\vec{v_i}\}_{i=1}^N$ in $\mathbb{R}^3$. These will be the vertices of a polyhedron inscribed by the Bloch sphere;
\item Find the radius $r_\text{in}<1$ of the largest sphere inscribed by the polyhedron;
\item Construct a polyhedron with vertices given by $\frac{\vec{v_i}}{r_\text{in}}$. This polyhedron includes the Bloch sphere.
\end{enumerate}
	The last step is ensured by the symmetry of the problem, a sphere of radius $r_\text{in}$ is contained in the polytope with norm-one vertices $\{\vec{v_i}\}_{i=1}^N$ if and only if a sphere of radius one contains the polytope with vertices $\frac{\vec{v_i}}{r_\text{in}}$. 
	Also, the radius of the largest sphere inscribed by a polytope (required in step 2) can be made by first finding  the facet representation of this polytope with vertices $\{\vec{v_i}\}_{i=1}^N$, step which can be done via Fourier-Motzkin elimination and can be tackled with the aid of numerical packages such as lrs~\cite{lrs} or the Matlab code \texttt{vert2lcon}~\cite{vert2lcon}. With the facet representation, the radius of the largest inscribed sphere can be found by evaluating the distance of the origin to the hyperplane represented by each facet.
	A concrete example for the qubit case can be found in the Appendix B: ``A family of polyhedra'', of Ref.\,\cite{hirsch17}. In such work, the authors provide a family of polyhedra parameterized by $n\in\mathbb{N}$ where the radius of the larger inscribed sphere is greater than or equals to $r_n:=\cos^2\big(\frac{\pi}{2n}\big)$ and if $n$ is an odd number, the number of vertices is given by $N=2n^2$. This family of polyhedra can be used for a systematic approach to the problem.

	For the general $d>3$, we start by pointing out that every convex set can be approximated by a polytope. Now, in order to obtain a concrete set of operators $\{\hat{\rho}_i\}_i$, we refer to the methods described in Appendix A: ``Calculating shrinking factors'', of Ref.\,\cite{hirsch16} (see also Red.\,\cite{cavalcanti16}).  In a nutshell, the method starts by sampling random pure (which are extremal) states $\ketbra{\psi_i}{\psi_i}$ in the set $\mathcal{S}_d$ and by obtaining the facet representation for the polytope associated to this set. Then, for any fixed ``shrinking factor'' $\eta$, we can verify whether a noisy quantum state $\eta\rho+(1-\eta)\identity/d$ can be written as a convex combination of $\{\ketbra{\psi_i}{\psi_i}\}_i$ by means of an SDP. If all $\eta$-noisy states can be written as convex combinations of $\{\ketbra{\psi_i}{\psi_i} \}_i$, the convex hull of operators is given by $\hat{\rho}_i:=\frac{1}{\eta} \ketbra{\psi_i}{\psi_i} + 1- \frac{1}{\eta} \identity/d$.

	 We observe that we can tighten the outer approximation presented in Def.~\ref{def:outerTRUE} by imposing that the process $W$ should be separable in the bipartition $A_I|A_OB_I$. Also, if we do this by means of the $PPT$ k-symmetric extension (as in Def.~\ref{def:out}), the problem can still be tackled by means of an SDP.
	
\begin{table*}[t]
\centering
\cyan{
{\renewcommand{\arraystretch}{2}
\begin{tabularx}{\textwidth}{|c|c|X|}
\hline
Set         & Robustness  & Requirements for set membership \\ \hline
$\mathcal{L}_{\text{DC}}^{\text{out,PPT}_k}$    & $R^{\text{low,PPT}_k}$ & \footnotesize $ W$ is a valid bipartite ordered process and has a PPT $k$-symmetric extension. \newline (This hierarchy converges to the set of separable processes, which is strictly larger than $\mathcal{L}_\text{DC}$)
\\ \hline
 $\mathcal{L}_{\textrm{DC}}^{\text{out,}\hat{\mathcal{E}}_N}$    & 
 $R^{\text{low},\hat{\mathcal{E}}_N}$ & \footnotesize Given a set of operators $\hat{\mathcal{E}}_N=\{\hat{\rho}^{A_I}_i\}_{i=1}^k$ forming an outer approximation for the set of quantum states, there exist quantum channels $D_i^{A_O/B_I}$ s.t. $W = \sum_i p_i \hat{\rho}^{A_I}_i \otimes D_i^{A_O / B_I}$. \\ \hline 
 $\mathcal{L}_{\textrm{DC}}^{\text{in,}\mathcal{E}_N}$         & $R^{\text{up},\mathcal{E}_N}$ & \footnotesize Given a set of states $\mathcal{E}_N=\{\rho^{A_I}_i\}_{i=1}^k$, there exist quantum channels $D_i^{A_O/B_I}$ s.t. \newline $W = \sum_i p_i {\rho}^{A_I}_i \otimes D_i^{A_O /B_I}$.  \\ \hline
\end{tabularx}
}
}
\caption{
\cyan{
Summary of the sets used as approximation to the set of direct-cause processes, $\mathcal{L}_{\text{DC}}$, ordered from the outside to the inside of the set. The operators $D_i^{A_O/B_I}$ stands for Choi operators of quantum channels, $D_i^{A_O/B_I} \succeq 0$, $\tr_{B_I}(D_i^{A_O/B_I})=\identity$ and $\{p_i\}_i$ is a probability distribution. The operators $\hat{\rho}^{A_I}_i$ are not positive semi-definite, hence they are not quantum states, but the convex hull of the set $\{\hat{\rho}^{A_I}_i\}_{i=1}^k$ is a valid outer approximation for the set of quantum states.  The precise definitions of these sets are given in Def.~\ref{def:out}, Def.~\ref{def:outerTRUE}, and Def.~\ref{def:inner}. These approximations can be used for generating approximations for the set of classical CCDC processes, as $\mathcal{L}_{\text{CCDC}} = \text{conv}(\mathcal{L}_{\text{CC}} \cup \mathcal{L}_\text{DC})$.}
}
\label{table:setsrobssummary}
\end{table*}


\cyan{We present a summary of all the sets defined as approximations to the set $\mathcal{L}_{\text{DC}}$ in Table \ref{table:setsrobssummary}. For obtaining the corresponding inner or outer approximations for the set of classical CCDC processes $\mathcal{L}_{\text{CCDC}}$, it is necessarily to take the convex hull of $\mathcal{L}_\text{CC}$ with the chosen set for approximating $\mathcal{L}_\text{DC}$.} Before finishing this section, we state that, throughout this work, whenever we calculate the non-classical CCDC robustnesses of the inner and outer approximations for $\mathcal{L}_{\text{CCDC}}$, the first lower bounds to be calculated are always provided by the outer approximation $\mathcal{L}_{\text{CCDC}}^{\text{out,PPT}_k}$. If the lower bounds obtained with such approximation does not match the upper bounds, then we take the outer approximation provided by $\mathcal{L}_{\text{CCDC}}^{\text{out,}\hat{\mathcal{E}}_N}$.

\section{Simplest non-classical CCDC processes}
\label{sec:ncccdc}

\subsection{The scenario where $d_{A_I}=d_{A_O}=d$ and $d_{B_I}=d^2$}
\label{subsec:sccdc}

We now present a ``simple'' process which has the highest generalized robustness in the scenario where $d_{A_I}=d_{A_O}=d$ and $d_{B_I}=d^2$. 
In this process, Alice shares a $d$-dimensional maximally entangled state $\ket{\phi^+} = \frac{1}{\sqrt{d}} \sum_{i=0}^{d-1} \ket{ii}$ with the auxiliary system, and the channel of communication to Bob (decoder) is composed of an identity channel, $D = \keketbra{\identity}{\identity}^{\aux A_O/ B_I^1 B_I^2}$ (see Fig.\, \ref{fig:sccdccircuit}).
This process is mathematically represented by
\begin{align}     \label{eq:Wddd2}
	\begin{split}
        W_{ddd^2} & :=  \ketbra{\phi^+}{\phi^{+}}^{A_I \aux} * \keketbra{\identity}{\identity}^{\aux A_O/B_I^1 B_I^2}  \\
                  & = \ketbra{\phi^{+}}{\phi^{+}}^{A_I B_I^1} \otimes \keketbra{\identity}{\identity}^{A_O /B_I^2}.
	\end{split}
\end{align}

\begin{figure}
    \centering
    \includegraphics[width=11cm]{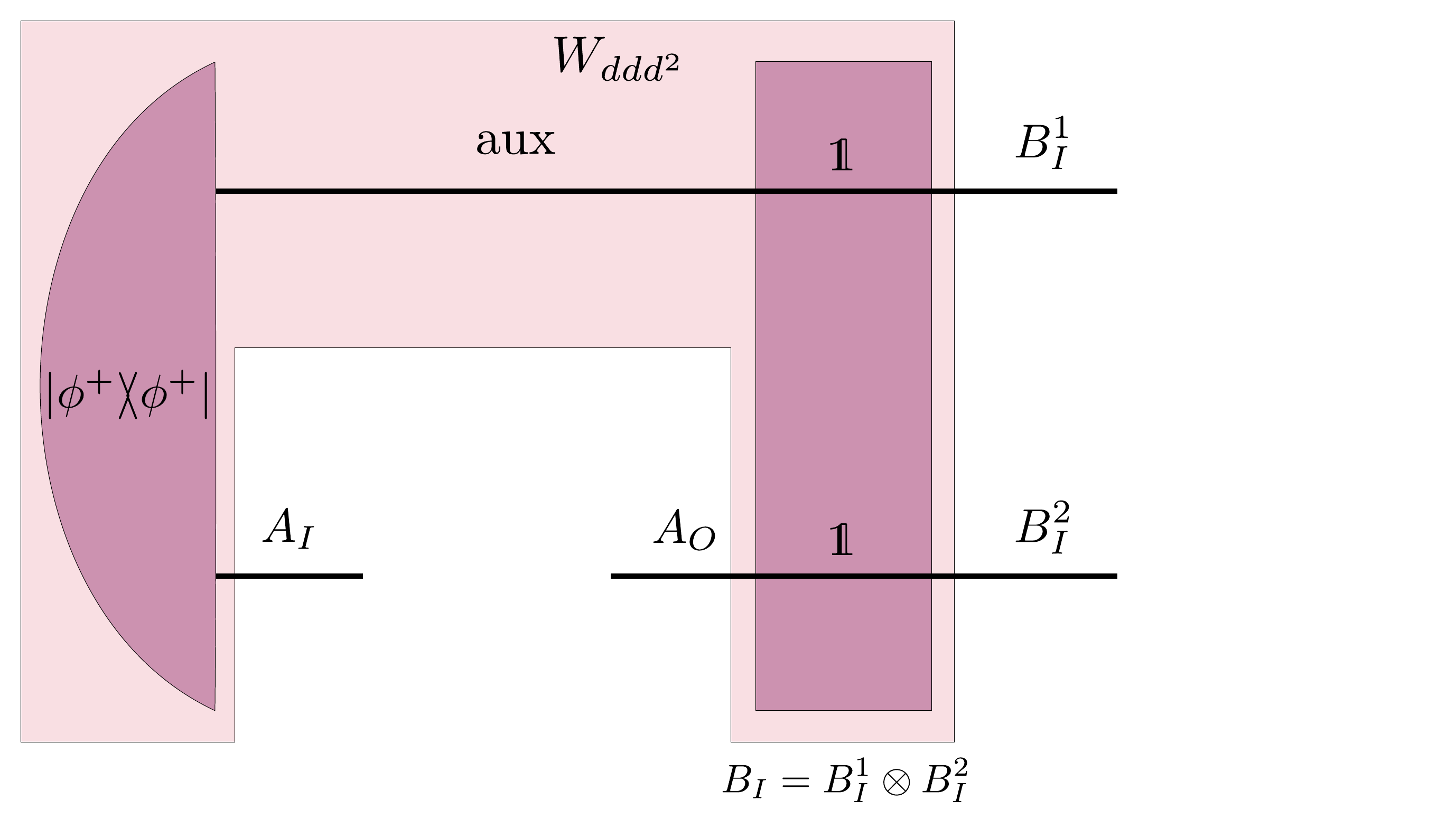}
    \caption{Circuit representation of the process $W_{ddd^2}$. This process consists in the preparation of a maximally entangled state between Alice and an auxiliary system and an identity channel of communication to Bob.}
    \label{fig:sccdccircuit}
\end{figure}

	We argue that this process has a conceptually simple realization, since it only requires the preparation of a maximally entangled state and the identity channel. The process $W_{ddd^2}$ has the interesting property of transforming a quantum channel $\map{\Lambda}: \mathcal{L}(A_I) \rightarrow \mathcal{L}(A_O)$ into a state which is proportional to its Choi operator 
$\Lambda= \sum_{ij} \ketbra{i}{j} \otimes \map{\Lambda}(\ketbra{i}{j})$, that is
\begin{align}
	\begin{split}
	\label{eq:sccdcmaptochoi}
	    W_{ddd^2} * \Lambda &= \left(\map{\identity} \otimes \widetilde{\Lambda} \right)		\ketbra{\phi^+}{\phi^{+}} ,  \\
	                                 &= \frac{\Lambda}{d} .
	\end{split}
\end{align}

Despite the simplicity, we now show that $W_{ddd^2}$ is non-classical CCDC and, moreover, it is the one with highest generalized robustness in the scenario where $d_{A_I} = d_{A_O} = d$ and $d_{B_I}=d^2$.
\begin{thm} \label{thm:lower}
	The bipartite ordered process
	$W_{ddd^2} = \ketbra{\phi^+} {\phi^+}^{A_I B_I^1} \otimes \keketbra{\identity}{\identity}^{A_O /B_I^2}$ 
attains the maximum generalized robustness of all processes with the same dimensions. That is, 
\begin{align}
	\begin{split}
		R_G(W_{ddd^2}) = & \max_{W \in \mathcal{L_{\AB} }} \left[ R_G(W) \right] \\		      
		      = & 1-\frac{1}{d_{A_I}}
	\end{split}
\end{align}
\end{thm}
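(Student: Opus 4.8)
The upper bound $R_G(W_{ddd^2})\le 1-\tfrac{1}{d_{A_I}}$ is already supplied by the general inequality proved in Appendix~\ref{app:robustness_bound}; combined with $W_{ddd^2}\in\mathcal{L}_{\AB}$ it gives $R_G(W_{ddd^2})\le\max_{W\in\mathcal{L}_{\AB}}R_G(W)\le 1-\tfrac{1}{d_{A_I}}$. The theorem therefore reduces to the matching lower bound $R_G(W_{ddd^2})\ge 1-\tfrac1d$ (here $d_{A_I}=d$). The plan is to certify this lower bound with one explicit witness, using the sufficient conditions from Section~\ref{subsec:witness}: any Hermitian $S$ with $\tr_{A_O}(S)\succeq0$ and $S^{T_{A_I}}\succeq0$ obeys $\tr(S\,W_{\textrm{CCDC}})\ge0$ for every classical CCDC process.

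I would take $S:=\identity^{A_IA_OB_I}-W_{ddd^2}$, written compactly through $W_{ddd^2}=\tfrac1d\,\keketbra{\identity}{\identity}^{A_IB_I^1}\otimes\keketbra{\identity}{\identity}^{A_OB_I^2}$. Checking that $S$ is a witness splits in two. For the first condition, $\tr_{A_O}(S)=d\,\identity^{A_IB_I}-\tfrac1d\,\keketbra{\identity}{\identity}^{A_IB_I^1}\otimes\identity^{B_I^2}\succeq0$, since the subtracted operator has largest eigenvalue $1$. For the second, the decisive identity is $\big(\keketbra{\identity}{\identity}^{A_IB_I^1}\big)^{T_{A_I}}=\uswap^{A_IB_I^1}$, whose eigenvalues are $\pm1$; hence $W_{ddd^2}^{T_{A_I}}=\tfrac1d\,\uswap^{A_IB_I^1}\otimes\keketbra{\identity}{\identity}^{A_OB_I^2}$ has spectrum in $\{0,\pm1\}$ and $S^{T_{A_I}}=\identity-W_{ddd^2}^{T_{A_I}}\succeq0$. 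Guessing this witness is the genuinely creative step, and the partial-transpose positivity is the most delicate computation; it is exactly here that the doubly-maximally-entangled structure of $W_{ddd^2}$ is used.

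With the witness fixed, the bound follows by testing it against an arbitrary feasible point of the robustness program. For any admissible $(r,\Omega,W_{\textrm{CCDC}})$ with $(1-r)W_{ddd^2}+r\Omega=W_{\textrm{CCDC}}\in\mathcal{L}_{\textrm{CCDC}}$ and $\Omega\in\mathcal{L}_{\AB}$, linearity together with $\tr(S\,W_{\textrm{CCDC}})\ge0$ gives $(1-r)\tr(S\,W_{ddd^2})+r\tr(S\,\Omega)\ge0$. I would then use $W_{ddd^2}^2=d\,W_{ddd^2}$ to compute $\tr(S\,W_{ddd^2})=\tr(W_{ddd^2})-\tr(W_{ddd^2}^2)=d-d^2$, and bound $\tr(S\,\Omega)=\tr(\Omega)-\tr(W_{ddd^2}\,\Omega)\le d_{A_O}=d$ from $\tr(\Omega)=d_{A_O}$ and $W_{ddd^2},\Omega\succeq0$. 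Substituting and using $r\ge0$ reduces everything to the scalar inequality $-(1-r)\,d(d-1)+r\,d\ge0$, i.e.\ $r\ge 1-\tfrac1d$. As this holds at every feasible point it holds at the optimum, so $R_G(W_{ddd^2})\ge 1-\tfrac1d$, matching the upper bound and closing the argument.
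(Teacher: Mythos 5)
Your proposal is correct and follows essentially the same route as the paper's own proof: the identical witness $S=\identity-W_{ddd^2}$, the same two sufficient conditions $\tr_{A_O}(S)\succeq0$ and $S^{T_{A_I}}\succeq0$ (verified via the unit largest eigenvalue of $\ketbra{\phi^+}{\phi^+}\otimes\identity$ and the $\pm1$ spectrum of $\uswap$), the same values $\tr(S\,W_{ddd^2})=d-d^2$ and $\tr(S\,\Omega)\leq d$, and the same appeal to the appendix upper bound $R_G(W)\leq 1-\tfrac{1}{d_{A_I}}$ to close the argument. The only cosmetic differences are the ordering (you invoke the upper bound first) and your slightly more explicit spectral bookkeeping for $S^{T_{A_I}}$, where you note the full spectrum $\{0,\pm1\}$ rather than only the swap eigenvalues.
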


\begin{proof}

We start the proof by defining the operator 
\begin{align}
	\begin{split}
		S:=& \identity^{A_I A_O B_I} - W_{ddd^2}  \\ 
		 =& \identity^{A_I A_O B_I} -  \ketbra{\phi^+} {\phi^+}^{A_I B_I^1} \otimes \keketbra{\identity}{\identity}^{A_O /B_I^2}
	\end{split}
\end{align}
and showing that $S$ is a valid non-classical CCDC witness, \emph{i.e.}, every CCDC process $W_{\text{CCDC}}$ satisfies $\tr(S W_{\text{CCDC}}) \geq 0$. To ensure that $S$ is a witness, we start by pointing that
\begin{align}
	\begin{split}
\label{eqs:ccwitness}
		\tr_{A_O} S & =  d\identity^{A_I B_I} - \ketbra{\phi^+} {\phi^+}^{A_I B_I^1} \otimes \identity^{B_I^2} \\
		& \succeq 0,
	\end{split}
\end{align}
where the last inequality holds because the smallest eigenvalue of $\ketbra{\phi^+}{\phi^+}^{A_I B_I^1} \otimes \identity^{B_I^2} $ is $1$. 

Now, note that
\begin{align}
	\begin{split}
\label{eqs:dcwitness}
	S^{T_{A_I}} & = \identity - 
	\sum_{ij} \frac{\ketbra{ij}{ji}}{d}^{A_I B_I^1} \otimes \keketbra{\identity}{\identity}^{A_O /B_I^2} \\
	& = \identity -  \frac{\uswap^{A_I B_I^1}}{d} \otimes \keketbra{\identity}{\identity}^{A_O /B_I^2} \\
	& \succeq 0,
	\end{split}
\end{align}
where the last inequality holds because the eigenvalues of $\uswap$ are $+1$ or $-1$. Conditions \eqref{eqs:ccwitness} and \eqref{eqs:dcwitness} together ensure that $S$ is a witness, as shown in Eqs.\,\eqref{eqs:witnessconds}.

Direct calculation shows that $\tr(SW_{ddd^2})=d-d^2=d(1-d)$ and that, for any bipartite ordered process $\Omega$, we have 
\begin{subequations}
\begin{align} 
\tr(S\Omega) &= \tr(\Omega) - \tr(\Omega\,W_{ddd^2}) \\
& \leq \tr(\Omega) = d. \label{ineq:Omega}
\end{align}

Since $S$ is a non-classical CCDC witness, if $(1-r)W_{ddd^2} + r \Omega$ is a CCDC process, it holds that 
${\tr \left( S \left[ (1-r)W_{ddd^2} + r \Omega\right] \right) \geq0}$.
By combining $\tr(SW_{ddd^2})=d(1-d)$ with the inequality \eqref{ineq:Omega}, we have
	\begin{align}
	(1-r) d (1-d) + rd \geq 0, 
	\end{align}
thus $r\geq 1-\frac{1}{d}$.

	We finish the proof by invoking Theorem \ref{theo:upper}, which states that $r \leq 1-\frac{1}{d}$.
	\end{subequations}
\end{proof}

	We now analyze the robustness of $W_{ddd^2}$ against the white noise process. For that, we use the SDP formulation to numerically tackle the case where $d=2$ and $d=3$. For the \cyan{inner approximation,} we have used the set $\mathcal{L}_{\text{CCDC}}^{\text{in}, \mathcal{E}_N}$ with ${\mathcal{E}_{N}}$ being a set with $N=10^4$ uniformly random pure states for $d=2$ and $N = 200$  uniformly random pure states for $d=3$. For the outer approximation, it was enough to use the loose approximation $\mathcal{L}_{\text{CCDC}}^{\text{out,PPT}}$. The results obtained were
\begin{subequations}
	\begin{align}
	&R_{WN}^{\text{up},\mathcal{E}_{10^4}}(W_{222^2}) = R_{WN}^{\textrm{low,PPT}} (W_{222^2}) = 0.8421 \\
	 & R_{WN}^{\text{up},\mathcal{E}_{200}}(W_{333^2})=R_{WN}^{\textrm{low,PPT}} (W_{333^2}) = 0.9529,
	\end{align}
\end{subequations}
where equations hold up to numerical precision. Since the upper and lower-bound coincides, these values are the actual values of robustnesses. 

For $d=2$, we believe that $W_{222^2}$ is the process with maximum white noise robustness on its scenario. Our conjecture is based on a heuristic see-saw technique inspired by \cite{Cavalcanti2017a, Bavaresco2017}, which suggests that the highest value of white noise robustness in the scenario where $d=2$ is $0.8421$. 

In a nutshell, our see-saw technique goes as follows: for a fixed scenario $d_{A_I}=d_{A_O}=d$ and $d_{B_I}=d^2$, we sample a random bipartite ordered process $W$. Then, we obtain its optimal non-classical CCDC witness $S$ from the dual formulation of the PPT white noise robustness (see Appendix \ref{sec:sdp}). We then find the bipartite ordered process which maximally violates the witness $S$, then finding its optimal witness after that. This process is re-iterated until it converges to a robustness value, which we expect to be considerably greater than the robustness of the initially sampled process $W$. The see-saw method is described in details in Appendix \ref{sec:seesaw}.

Moreover, in the scenario where $d=3$, our see-saw method could find a bipartite ordered process $W_\text{max}$ which has $R_{WN}^{\text{up},\mathcal{E}_{200}}(W_\text{max})=R_{WN}^{\text{low,PPT}}(W_\text{max}) = 0.9643 > R_{WN}^{\text{up,}\mathcal{E}_{200}}(W_{333^2})$. This result shows the power of the see-saw method in finding processes which are robust against white noise. It also shows that, even though $W_{ddd^2}$ has maximum generalized robustness for every $d$, which we proved analytically, in the case where $d=3$, $W_{333^2}$ is not the most robust process against white noise. This leads us to conjecture that $W_{ddd^2}$ may be not the most robust process against white noise for $d>3$. 

Before finishing this section we mention that, following the same steps from the demonstration of Theorem \ref{thm:lower}, we can also obtain an analytical lower-bound for white noise robustness of $W_{ddd^2}$, which is
\begin{equation}
R_{WN} (W_{ddd^2}) \geq \frac{d^3}{(d^2 + 1)(d+1)}.
\end{equation}

Differently from the lower-bound presented for generalized robustness, the above inequality is not tight. For instance, when $d=2$, this lower-bound provides $R_{WN}(W_{ddd^2})\geq \frac{8}{15} \approx 0.5333$, which is considerably lower than $R_{WN}^{\text{low,PPT}}(W_{ddd^2}) = 0.8421$. However, it is interesting to point out that the above expression shows that $R_{WN}(W_{ddd^2}) \rightarrow 1$ when $d \rightarrow \infty$.

\subsection{The scenario with minimum dimensions: $d_{A_I}=d_{A_O}=d_{B_I}=2$}
\label{subsec:222}
\begin{figure} 
    \centering
    \includegraphics[width=10cm]{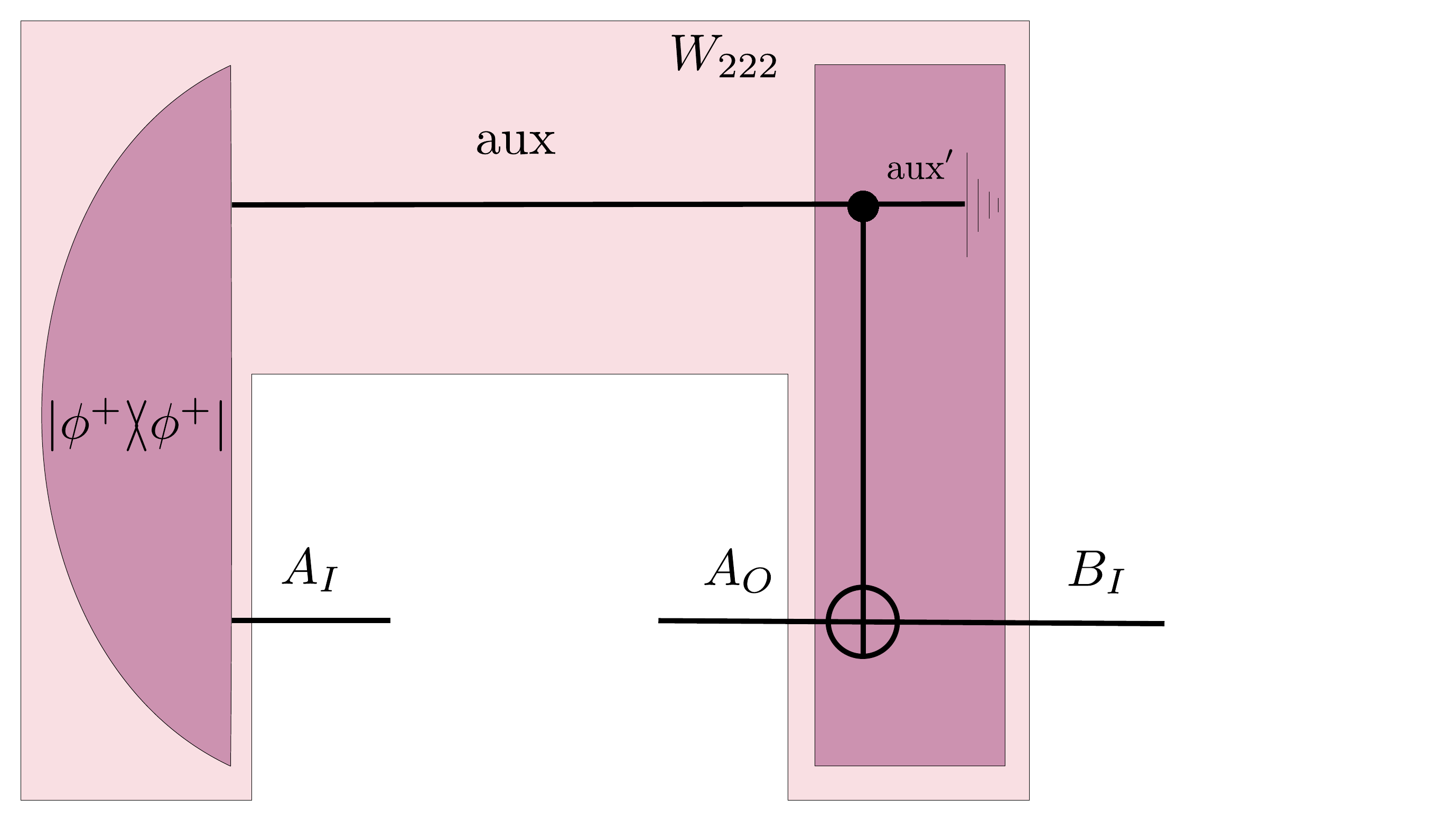}
    \caption{Circuit representation of the process $W_{222}$, constructed with the smallest possible space dimensions $d_{A_I} = d_{A_O} = d_{B_I} = 2$. A maximally entangled state is initially shared between Alice and the auxiliary system. After Alice's operation, a control-NOT is applied, then $\aux'$ is discarded.} 
    \label{fig:cnotproc}
\end{figure}

We now consider the scenario where $d_{A_I}=d_{A_O}=d_{B_I} = 2$, that is, the minimum non-trivial dimensions. 
For this scenario, we propose the process composed by a two-qubit maximally entangled state between Alice and the auxiliary system, and the decoder channel from Alice output to Bob's input being a control-NOT channel, where the auxiliary system is later discarded (see Fig.\, \ref{fig:cnotproc}). Mathematically, such process is described by 
\begin{align} \label{eq:W222}
	\begin{split}
	    W_{222}  :=  \ \tr_{\aux'} \Big( & \ketbra{\phi^+}{\phi^+}^{A_I \aux}  \\
	   *  & \keketbra{\ucnot}{\ucnot}^{\aux A_O/\aux' B_I} \Big) 
	\end{split}
\end{align}
where $\keket{\ucnot}$ is the Choi vector of the control-NOT unitary gate $\textrm{U}_{\textrm{CNOT}}$, given by
\begin{equation}
\textrm{U}_{\textrm{CNOT}} = 
\left( 
\begin{matrix}
1 & 0 & 0 & 0 \\ 
0 & 1 & 0 & 0 \\
0 & 0 & 0 & 1\\
0 & 0 & 1 & 0\\
\end{matrix}
\right).
\end{equation}

Direct calculation shows that the process $W_{222}$ can also be written in terms of a un-normalized $GHZ$ state $\keket{GHZ} := \ket{000} + \ket{111} \in \mathcal{L}(A_I \otimes A_O \otimes B_I)$ :
\small
\begin{equation}
    \label{eq:ghzcomb}
    W_{\textrm{222}}  =  \frac{1}{2} \left(\keketbra{GHZ}{GHZ}  +   \sigma_X^{A_O} \keketbra{GHZ}{GHZ} \sigma_X^{A_O}\right), 
\end{equation}
\normalsize
where $\sigma_X^{A_O}$ is the Pauli matrix $\sigma_X$ applied on subsystem $A_O$. 
The decomposition of Eq.\,\eqref{eq:ghzcomb} expresses $W_{222}$ as a probabilistic mixture of pure ``GHZ-like'' non-normalized states\footnote{Interestingly, one can verify that for every $\epsilon \in \left(0,\frac{1}{2}\right]$,
the operator
$W = \left( \frac{1}{2} + \epsilon \right) \keketbra{GHZ}{GHZ} + \left( \frac{1}{2} - \epsilon \right) \sigma_X^{A_O} \keketbra{GHZ}{GHZ} \sigma_X^{A_O}$ 
is outside $\mathcal{L}_{\AB}$. This ensures that, as illustrated in Fig.\,\ref{fig:convcombghz}, $W_{222}$ is on the boundary of $\mathcal{L}_{\AB}$.} and it will be useful to prove that $W_{222}$ attains the maximum generalized robustness value for its scenario. 

Another decomposition of the process $W_{222}$ is in terms of Pauli matrices:
\small
\begin{align}
	\begin{split}
	    W_{222}  =  \frac{1}{4} &\bigg(\identity^{A_I A_O B_I} + \sigma_Z^{A_I}  \identity^{A_O} \sigma_Z^{B_I}    \\
	               & + \sigma_X^{A_I} \sigma_X^{A_O} \sigma_X^{B_I} - \sigma_Y^{A_I} \sigma_X^{A_O}\sigma_Y^{B_I}\bigg), 
	\end{split}
\end{align}
\normalsize
with implicit tensor product between the operators.

\begin{figure}
    \centering
    \includegraphics[width=8cm]{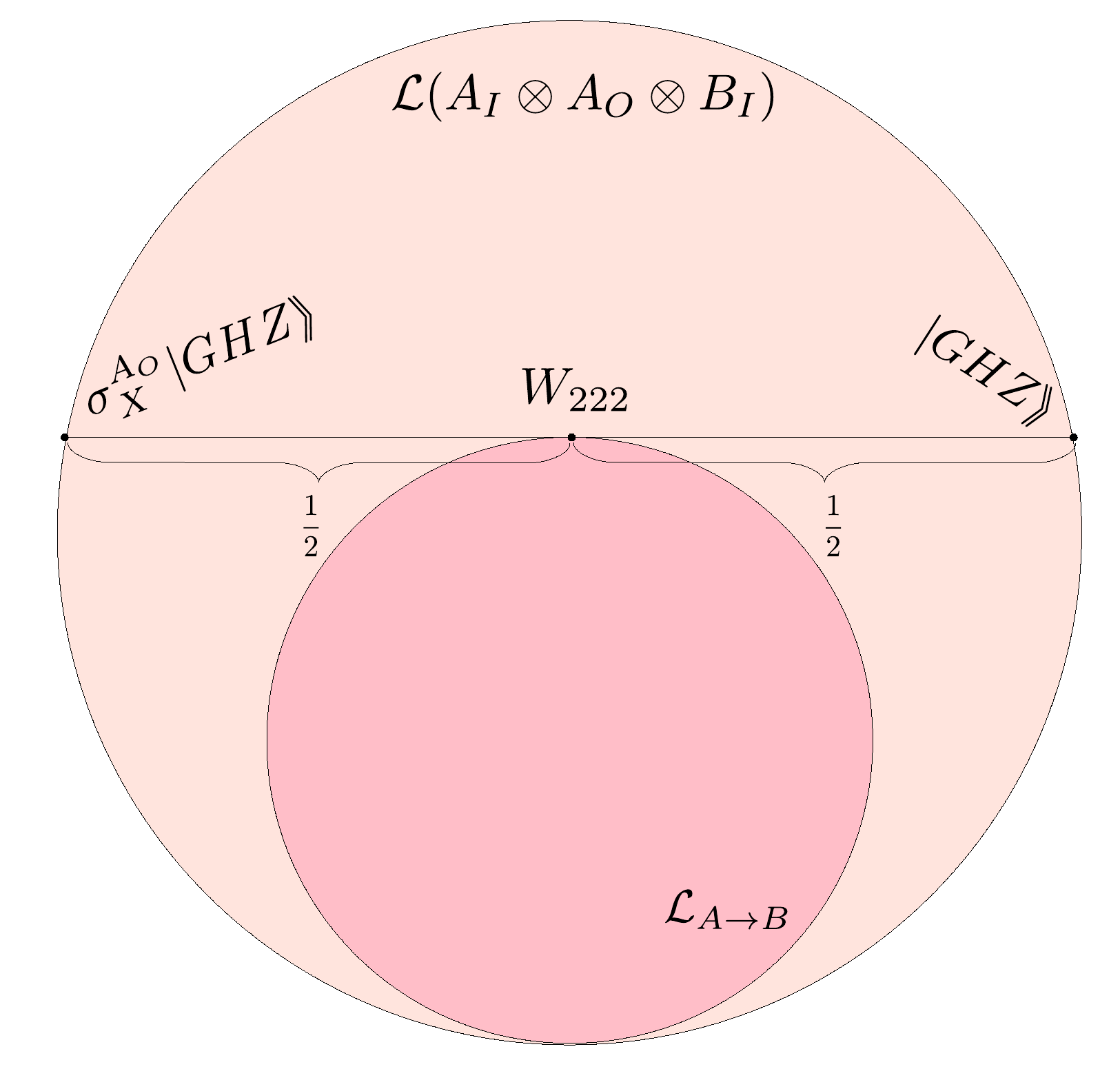}
    \caption{The process $W_{222}$ seen as a convex combination of two GHZ-type un-normalized states acting on $A_I$, $A_O$ and $B_I$. Due to the ordered process causal constraints, GHZ states do not lead to valid bipartite ordered processes, but such convex combination of these two GHZ-type results in the valid bipartite ordered process $W_{222}$.}
    \label{fig:convcombghz}
\end{figure}

\begin{thm} \label{theo:w222}
	The bipartite ordered process
	$W_{222}$ attains the maximum generalized robustness of all processes with the same dimensions. That is, 
\begin{subequations}
	\begin{align}
		R_G(W_{222}) = & \max_{W \in \mathcal{L_{\AB} }}\left[ R_G(W) \right]\\ 
		      		 = & \frac{1}{2}.
	\end{align}
\end{subequations}
\end{thm}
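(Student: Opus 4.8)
The plan is to follow exactly the witness strategy used in the proof of Theorem~\ref{thm:lower}, combined with the upper bound $R_G(W)\le 1-\tfrac{1}{d_{A_I}}$ established in Theorem~\ref{theo:upper}. Since $d_{A_I}=2$ here, that theorem immediately yields $R_G(W_{222})\le\tfrac12$, so the entire task reduces to proving the matching lower bound $R_G(W_{222})\ge\tfrac12$ by exhibiting a single non-classical CCDC witness that $W_{222}$ violates. The obvious candidate $S=\identity-W_{222}$ (the analogue of the witness used for $W_{ddd^2}$) turns out to be too weak: because $W_{222}$ satisfies $\tr(W_{222})=\tr(W_{222}^2)=2$, one gets $\tr(SW_{222})=0$ and the resulting bound is vacuous. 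The remedy is to rescale and take
\[
S:=\identity^{A_IA_OB_I}-2\,W_{222},
\]
which I claim is still a valid witness and is now strong enough to force $r\ge\tfrac12$.

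First I would verify the two sufficient witness conditions preceding Eq.~\eqref{eqs:witnessconds}, namely $\tr_{A_O}(S)\succeq0$ and $S^{T_{A_I}}\succeq0$, using the Pauli decomposition of $W_{222}$. Tracing out $A_O$ annihilates the two strings carrying $\sigma_X^{A_O}$, leaving $\tr_{A_O}(S)=\identity^{A_IB_I}-\sigma_Z^{A_I}\sigma_Z^{B_I}$, whose eigenvalues are $0$ and $2$, so the common-cause condition holds. For the direct-cause condition I would compute $S^{T_{A_I}}$; under partial transposition on $A_I$ only the $\sigma_Y^{A_I}$ term flips sign, giving $S^{T_{A_I}}=\identity-\tfrac12(\identity+g_1+g_2+g_3)$ with $g_1=\sigma_Z^{A_I}\identity^{A_O}\sigma_Z^{B_I}$, $g_2=\sigma_X^{A_I}\sigma_X^{A_O}\sigma_X^{B_I}$, and $g_3=\sigma_Y^{A_I}\sigma_X^{A_O}\sigma_Y^{B_I}$. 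The key structural observation is that $g_1,g_2,g_3$ mutually commute and satisfy $g_1g_2g_3=-\identity$, so their joint spectrum forbids the all-$(+1)$ sector; hence $g_1+g_2+g_3$ has eigenvalues $+1$ (sixfold) and $-3$ (twofold), and therefore $S^{T_{A_I}}$ has eigenvalues $0$ and $2$, confirming $S^{T_{A_I}}\succeq0$. Note that both conditions saturate at $0$, which is what signals that $\lambda=2$ is precisely the largest admissible rescaling.

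With $S$ established as a witness, the remainder is short. A direct trace computation gives $\tr(SW_{222})=\tr(W_{222})-2\tr(W_{222}^2)=2-4=-2$, while for any bipartite ordered process $\Omega$ the positivity of $\Omega$ and $W_{222}$ yields $\tr(S\Omega)=\tr(\Omega)-2\tr(\Omega W_{222})\le\tr(\Omega)=d_{A_O}=2$ by Eq.~\eqref{subeq:condnormab}. If $(1-r)W_{222}+r\Omega$ is a classical CCDC process, then $\tr\!\big(S[(1-r)W_{222}+r\Omega]\big)\ge0$, i.e. $-2(1-r)+r\tr(S\Omega)\ge0$; since $\tr(S\Omega)\le2$, the left-hand side is at most $-2+4r$, so feasibility forces $-2+4r\ge0$, that is $r\ge\tfrac12$. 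Combined with the upper bound this gives $R_G(W_{222})=\tfrac12$. I expect the main obstacle to be pinning down the correct rescaling factor $2$, large enough that $\tr(SW_{222})<0$ yet small enough that both witness inequalities remain satisfied rather than violated; concretely, this amounts to the spectral analysis of $W_{222}^{T_{A_I}}$ via the commuting triple $g_1,g_2,g_3$ carried out in the second step.
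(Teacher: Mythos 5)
Your proposal is correct and takes essentially the same route as the paper's own proof: the identical witness $S=\identity-2W_{222}$, the bound $\tr(S\Omega)\le\tr(\Omega)=2$ for any ordered process $\Omega$, the resulting inequality $-2(1-r)+2r\ge0$, and the matching upper bound from Theorem~\ref{theo:upper}. The only difference is cosmetic: you verify $S^{T_{A_I}}\succeq0$ via the commuting Pauli triple $g_1,g_2,g_3$ with $g_1g_2g_3=-\identity$ (a stabilizer-style spectral argument), whereas the paper uses the GHZ decomposition of $W_{222}$ and the orthogonality of the supports of the two partially transposed GHZ terms; both computations are valid and yield the same spectrum $\{0,2\}$.
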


\begin{proof}
The proof of this theorem follows similar steps to the proof of theorem \ref{thm:lower}.
We start by defining the operator 
\small
\begin{subequations}
	\begin{align}
		S:=& \identity - 2W_{222}  \\ 
		 =& \identity -  \left(\keketbra{GHZ}{GHZ}  +   \sigma_x^{A_O} \keketbra{GHZ}{GHZ} \sigma_x^{A_O}\right),
	\end{align}
\end{subequations}
\normalsize
and showing that $S$ is a non-classical CCDC witness. For this, we need to show that $\tr_{A_O}(S) \succeq 0$ and $S^{T_{A_I}} \succeq 0$, as shown in Eqs.\,\eqref{eqs:witnessconds}.
\begin{align}
	\begin{split}
	\frac{1}{2}\tr_{A_O} (S) &= \identity^{A_I B_I} - \left(\ketbra{00}{00} + \ketbra{11}{11} \right) \\
	 &= \ketbra{01}{01} + \ketbra{10}{10} \\
	 &\succeq 0 .
	 \end{split}
\end{align} 

Now, for the second condition, we have
\footnotesize
\begin{align}
	\begin{split}
	S^{T_{A_I}} &= \identity - \left(\keketbra{GHZ}{GHZ}  +   \sigma_x^{A_O} \keketbra{GHZ}{GHZ} \sigma_x^{A_O}\right)^{T_{A_I}} \\
	&=\identity - \sum_{ij}\left( \ketbra{jii}{ijj} + \sigma_x^{A_O} \ketbra{jii}{ijj} \sigma_x^{A_O}\right) \\
	& \succeq 0,
	\end{split}
\end{align}
\normalsize
where the last inequality holds true because $\keketbra{GHZ}{GHZ}^{T_{A_I}}$ and $\left(\sigma_x^{A_O} \keketbra{GHZ}{GHZ}\sigma_x^{A_O}\right)^{T_{A_I}}$ have orthogonal support, \textit{i.e.},
\cyan{ 
\small
\begin{equation}
\keketbra{GHZ}{GHZ}^{T_{A_I}}\left(\sigma_x^{A_O} \keketbra{GHZ}{GHZ}\sigma_x^{A_O}\right)^{T_{A_I}}=0,
\end{equation}
\normalsize
}
\\
and the eigenvalues of $\sum_{ij}  \ketbra{jii}{ijj} $ and $\sum_{ij} \sigma_x^{A_O} \ketbra{jii}{ijj} \sigma_x^{A_O}$ are $+1$ and $-1$.

Direct calculation shows that $\tr(S W_{222})=-2$ and, for every process $\Omega$ in this scenario, we have $\tr(S\Omega)\leq 2$, by an argument analogous to Eq.\,\eqref{ineq:Omega}.

Since $S$ is a non-classical CCDC witness, if $(1-r)W_{222} + r \Omega$ is a CCDC process, it holds that 
${\tr \left(S \left[ (1-r)W_{222} + r \Omega\right] \right) \geq0}$.
It is also true that
\begin{equation}
-2r  + 2(1-r)  \geq 0,
\end{equation}
thus $r\geq\frac{1}{2}$.

	We finish the proof by invoking Theorem \ref{theo:upper}, which states that $r\leq \frac{1}{2}$ in this scenario.
\end{proof}

We also evaluated the robustness of $W_{222}$ against white noise, obtaining evidences that $W_{222}$ attains the maximal white noise robustness on its scenario.

\begin{thm}
	The white noise robustness of the bipartite ordered processes $W_{222}$ is 
	\begin{equation}
		R_{WN}(W_{222}) = \frac{2}{3}.
	\end{equation}
\end{thm}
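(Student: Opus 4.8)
The plan is to sandwich $R_{WN}(W_{222})$ between matching bounds, both equal to $\tfrac23$, recycling as much of the machinery of Theorem~\ref{theo:w222} as possible.

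\emph{Lower bound.} I would reuse the witness $S := \identity - 2W_{222}$ already shown to be a valid non-classical CCDC witness in the proof of Theorem~\ref{theo:w222}, where $\tr(S\,W_{222}) = -2$ was established. The only new ingredient is the overlap of $S$ with the white-noise term $\tfrac{\identity}{4}$ (here $d_{A_I}d_{B_I}=4$): since $\tr(S)=\tr(\identity^{A_I A_O B_I})-2\tr(W_{222})=8-4=4$, one gets $\tr\!\big(S\,\tfrac{\identity}{4}\big)=1$. If $(1-r)W_{222}+r\,\tfrac{\identity}{4}$ were classical CCDC, the witness inequality $\tr(S\,[\,\cdots\,])\ge 0$ would read $(1-r)(-2)+r(1)\ge 0$, i.e. $3r\ge 2$, forcing $r\ge\tfrac23$ and hence $R_{WN}(W_{222})\ge\tfrac23$.

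\emph{Upper bound.} I would exhibit an explicit classical CCDC decomposition of the process at $r=\tfrac23$. Expanding in the Pauli basis, the target is
\begin{equation}
W_{\mathrm{mix}} := \tfrac13 W_{222}+\tfrac23\tfrac{\identity}{4} = \tfrac14\identity + \tfrac{1}{12}\big(\sigma_Z^{A_I}\sigma_Z^{B_I}+\sigma_X^{A_I}\sigma_X^{A_O}\sigma_X^{B_I}-\sigma_Y^{A_I}\sigma_X^{A_O}\sigma_Y^{B_I}\big).
\end{equation}
I would route the $\sigma_Z\sigma_Z$ term, which is trivial on $A_O$, into a common-cause process $\Wcc=\rho^{A_I B_I}\otimes\identity^{A_O}$ with the classically correlated state $\rho=\tfrac12(\ketbra{00}{00}+\ketbra{11}{11})=\tfrac14(\identity+\sigma_Z^{A_I}\sigma_Z^{B_I})$, and route the two $A_O$-active terms into a direct-cause process $\Wdc$ taken as the uniform mixture of four Markovian processes $\ketbra{\psi_k}{\psi_k}^{A_I}\otimes D_k^{A_O/B_I}$. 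Here $\ket{\psi_k}$ runs over $\ket{+_X},\ket{-_X},\ket{+_Y},\ket{-_Y}$, with channel Choi operators $D=\tfrac12\identity\pm\tfrac12\sigma_X^{A_O}\sigma_X^{B_I}$ paired with $\ket{\pm_X}$ and $D=\tfrac12\identity\mp\tfrac12\sigma_X^{A_O}\sigma_Y^{B_I}$ paired with $\ket{\pm_Y}$, the sign correlated with the input eigenstate so that the single-body $\sigma_X^{A_I}$ and $\sigma_Y^{A_I}$ contributions cancel in the average. A short expansion then gives $\Wdc=\tfrac14\identity+\tfrac18(\sigma_X^{A_I}\sigma_X^{A_O}\sigma_X^{B_I}-\sigma_Y^{A_I}\sigma_X^{A_O}\sigma_Y^{B_I})$, whence $\tfrac13\Wcc+\tfrac23\Wdc=W_{\mathrm{mix}}$, so $W_{\mathrm{mix}}\in\mathcal{L}_{\textrm{CCDC}}$ and $R_{WN}(W_{222})\le\tfrac23$.

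\emph{Main obstacle.} The crux is constructing the direct-cause part and verifying that each $D_k$ is a genuine channel Choi operator ($D_k\succeq0$ and $\tr_{B_I}D_k=\identity^{A_O}$) and that $\rho$ is a genuine state. The correlation strength demanded by $W_{\mathrm{mix}}$ saturates the positivity bounds exactly: the weight $\tfrac12$ on $\sigma_X^{A_O}\sigma_X^{B_I}$ (resp. $\sigma_X^{A_O}\sigma_Y^{B_I}$) makes the smallest eigenvalue of each $D_k$ vanish, and the unit coefficient of $\sigma_Z^{A_I}\sigma_Z^{B_I}$ makes $\rho$ rank-deficient. This rigidity is precisely why the decomposition first becomes feasible at $r=\tfrac23$, in agreement with the witness bound. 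Matching the two bounds then yields $R_{WN}(W_{222})=\tfrac23$.
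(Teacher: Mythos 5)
Your proposal is correct, and the two bounds are handled quite differently from the paper. The lower bound is identical: you reuse the witness $S=\identity-2W_{222}$, and your computation $\tr(S)=4$, $\tr(S\,W_{222})=-2$, hence $3r\ge 2$, is exactly the paper's argument. The upper bound, however, takes a genuinely different route. The paper's proof is abstract: it notes that the qubit depolarizing channel $\map{D}_{2/3}$ is entanglement breaking, that $\tr_{A_I}(W_{222})=\identity^{A_OB_I}/2$ so that $\map{D}_{2/3}^{A_I}\otimes\map{\identity}^{A_OB_I}(W_{222})=\tfrac13 W_{222}+\tfrac23\tfrac{\identity}{4}$, and then invokes Lemma \ref{lemma:ent_break} to conclude this noisy process is direct-cause. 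You instead exhibit an explicit CCDC decomposition in the Pauli basis: a classically correlated common-cause part $\rho=\tfrac14(\identity+\sigma_Z^{A_I}\sigma_Z^{B_I})$ carrying the $\sigma_Z\sigma_Z$ term, and a direct-cause part built from four Markovian processes with inputs $\ket{\pm_X},\ket{\pm_Y}$ and rank-deficient channel Choi operators $\tfrac12(\identity\pm\sigma_X^{A_O}\sigma_X^{B_I})$, $\tfrac12(\identity\mp\sigma_X^{A_O}\sigma_Y^{B_I})$; I checked the algebra and the positivity/normalization conditions, and the mixture $\tfrac13\Wcc+\tfrac23\Wdc$ does reproduce $\tfrac13 W_{222}+\tfrac23\tfrac{\identity}{4}$. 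The paper's argument buys generality (it applies verbatim to any process whose $A_I$ marginal is maximally mixed, and is the same machinery as Theorem \ref{theo:upper}) and directly yields the stronger conclusion that the noisy process is purely direct-cause; your construction buys self-containedness and concreteness — no entanglement-breaking lemma needed, and the decomposition is displayed explicitly (in fact, since your common-cause state is classically correlated, $\Wcc$ lies in $\mathcal{L}_{\text{CC}}\cap\mathcal{L}_{\text{DC}}$, so your decomposition implicitly recovers the paper's stronger direct-cause conclusion as well). Your construction is essentially the ``unpacked'' form of the paper's depolarizing-channel argument — measuring $A_I$ in the $X$, $Y$, or $Z$ basis each with probability $\tfrac13$ — but it stands on its own as a complete verification.
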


\begin{proof}
We provide a lower-bound for $R_{WN}(W_{222})$ by using similar steps to the proof of theorem \ref{theo:w222}, starting with the non-classical CCDC witness
\small
\begin{align}
	\begin{split}
		S:=& \identity - 2W_{222}  \\ 
		 =& \identity -  \left(\keketbra{GHZ}{GHZ}  +   \sigma_X^{A_O} \keketbra{GHZ}{GHZ} \sigma_X^{A_O}\right).
	 \end{split}
\end{align}
\normalsize

We have $\tr(S W_{222})=-2$ and $\tr(S)=2^3-4=4$. As $S$ is a non-classical CCDC witness, if $(1-r)W_{222} + r \Omega$ is a CCDC process, ${\tr \left(S \left[ (1-r)W_{222} + r \frac{\identity}{4}\right] \right) \geq0}$. So, it is true that
\begin{align}
  -2(1-r) + r \geq 0,
\end{align}
thus $r\geq\frac{2}{3}$.

We now show, using techniques which are similar to the proof of Theorem \ref{theo:upper}, that the above lower-bound can be attained. First notice that the qubit depolarizing channel $\map{D}_\eta(\rho)=(1-\eta) \rho + \eta \tr(\rho)\frac{\identity}{2}$ is entanglement breaking when $\eta=\frac{2}{3}$ \cite{Horodecki1997a,Horodecki2003}. Also, notice that, since $\tr_{A_I}(W_{222})= \frac{\identity^{A_O B_I}}{d_{B_I}}$, if we apply the depolarizing channel on the subspace $A_I$ of the process $W_{222}$, we obtain
$\map{D}_\eta^{A_I}\otimes \map{\identity}^{A_O B_I}(W_{222})= (1-\eta)W_{222} + \eta \frac{\identity}{4}$.
Since $\map{D}_\eta$ is entanglement breaking for $\eta=\frac{2}{3}$, lemma \ref{lemma:ent_break} ensures that $(1-\frac{2}{3})W_{222} + \frac{2}{3} \frac{\identity}{4}$ is CCDC, thus concluding the proof.
\end{proof}

It is worth to mention that every witness introduced in the proofs of Theorems \ref{thm:lower} and \ref{theo:w222} are the optimal witnesses for these particular processes. 

Similarly to the scenario with $d_{A_I}=d_{A_O}=d$ and $d_{B_I}=d^2$, we implemented the see-saw algorithm that indicates that, in the scenario with $d_{A_I}=d_{A_O}=d_{B_I}=2$, the highest white noise robustness is exactly $\frac{2}{3}$. This suggests that $W_{222}$ has also maximum white noise robustness in its scenario. 

\section{Relation with previous research}
\label{sec:relatwprev}

\subsection{Comparison with the non-classical CCDC process of Ref.\,\cite{MacLean2016}}

We now compare our proposed process with the non-classical CCDC process presented and experimentally implemented in Ref.\,\cite{MacLean2016}. The process consists of the preparation of a maximally entangled state $\ket{\phi^+}$ shared between Alice and the auxiliary system, a partial swap channel from $\mathcal{L}(A_O \otimes \aux)$ to $\mathcal{L}(B_I \otimes\aux')$, which is simply a unitary composed by a coherent mixture of an identity and a SWAP gate, and a partial trace on the output of the auxiliary system. This process can be explicitly written as
\begin{align}
	\begin{split}
	\label{eq:W_MRSR}
	W_{\textrm{MRSR}}  =  \ \tr_{\aux'} & \Big( \ketbra{\phi^+}{\phi^+}^{A_I \aux} \\
	 * & \keketbra{\ups}{\ups}^{A_O \aux/ B_I \aux'}\Big),
	\end{split}
\end{align}
where $\keket{\ups}^{A_O \aux/ B_I \aux'}$ is the Choi vector of the unitary partial SWAP\footnote{
The authors from Ref.\,\cite{MacLean2016} use an equivalent way to represent the unitary partial SWAP gate, which is
\begin{equation} \label{eq:partial_swap}
\ups^{A_I \aux / B_I \aux'} =  \frac{1}{\sqrt{2}} \left( \identity^{A_O \aux / B_I \aux'} + i \identity^{A_O \aux / \aux' B_I} \right). 
\end{equation} In the second term, the identity channel exchanges the outputs $B_I$ and $\aux'$, in comparison to the identity channel in the first term. This part of the channel does exactly the same as $\uswap$ does, differing only by the fact that the output labels are not explicitly exchanged.} gate $\ups$, given by
\begin{equation}
\textrm{U}_{\textrm{PS}} = \frac{1}{\sqrt{2}} \left( \identity + i \, \textrm{U}_{\textrm{SWAP}}\right),
\end{equation}
with $\textrm{U}_{\textrm{SWAP}}$ being the SWAP gate for qubits, given by
\begin{equation}
\label{eq:swap}
\textrm{U}_{\textrm{SWAP}} =	
\left(	
	\begin{matrix}
	1 & 0 & 0 & 0 \\
	0 & 0 & 1 & 0 \\
	0 & 1 & 0 & 0 \\
	0 & 0 & 0 & 1 
	\end{matrix}
\right).
\end{equation}
The label MRSR makes reference to the names MacLean, Ried, Spekkens and Resch, authors of Ref.\,\cite{MacLean2016}. Fig.\,\ref{fig:ried} ilustrates the process $W_{\text{MRSR}}$.

\begin{figure}
    \centering
    \includegraphics[width=8.8cm]{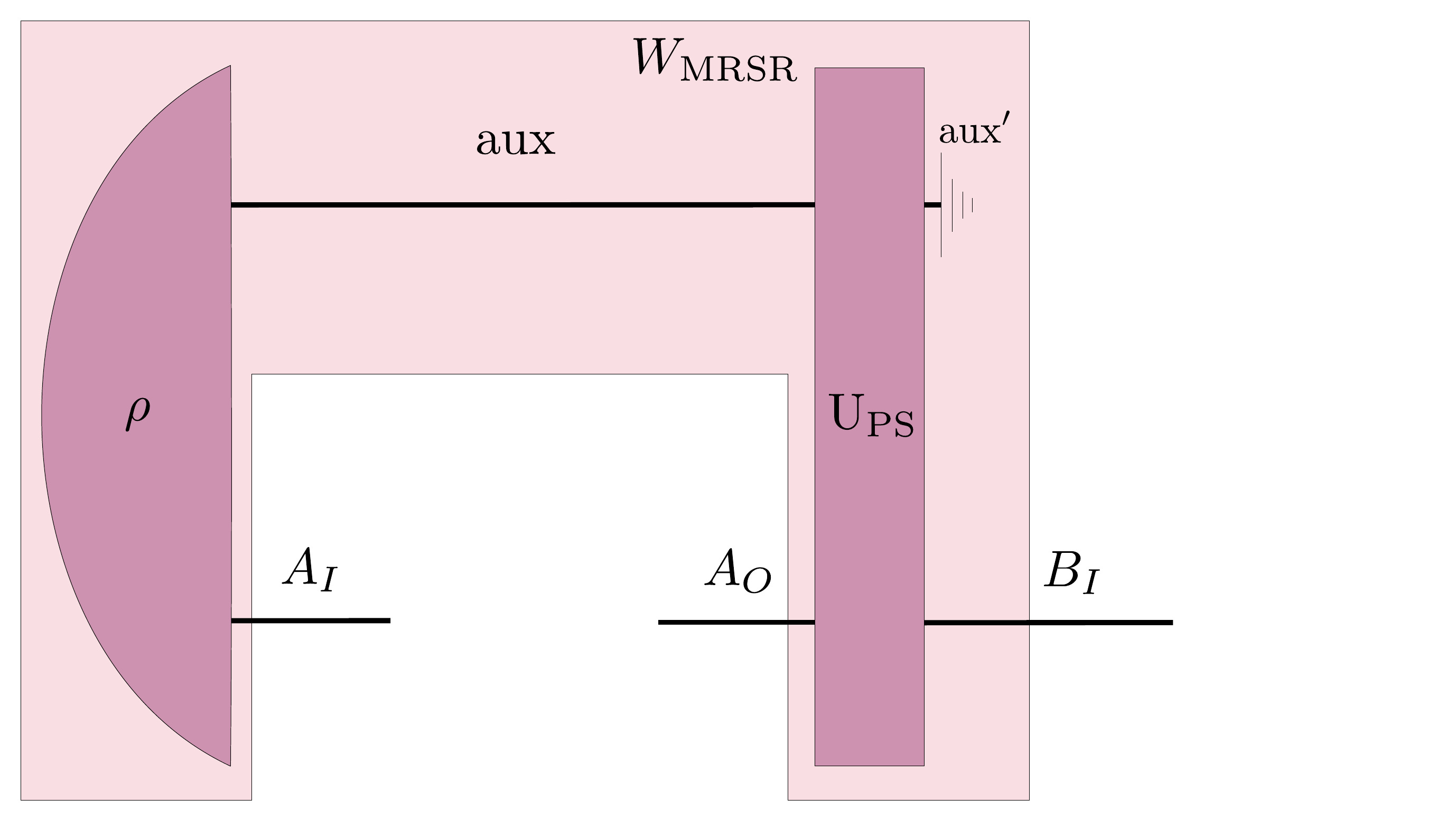}
    \caption{Circuit representation of the process $W_{\text{MRSR}}$ presented in Ref.\,\cite{MacLean2016}. A maximally entangled state is initially shared between Alice and the auxiliary system. After Alice's operation, the partial-SWAP gate $\text{U}_\text{PS}$ is applied, then $\aux'$ is discarded.} \label{fig:ried}
\end{figure}

Using our numerical methods, we can evaluate the values of robustnesses for $W_{\textrm{MRSR}}$, which are
\small
\begin{subequations}
\label{eq:mrsrrobustnesses}
	\begin{align}	
	& R_G^{\textrm{low,PPT}}(W_{\textrm{MRSR}})  = R_G^{\text{up},\mathcal{E}_{10^4}}(W_{\textrm{MRSR}})= 0.3506, \\
	& R_{WN}^{\textrm{low,PPT}}(W_{\textrm{MRSR}}) = R_{WN}^{\text{up},\mathcal{E}_{10^4}}(W_{\textrm{MRSR}}) = 0.5000.
	\end{align}
\end{subequations}
\normalsize

When comparing the robustness values of $W_{\textrm{MRSR}}$ with $W_{\textrm{222}}$ (see Section \ref{subsec:222} and Eqs.\,\eqref{eq:mrsrrobustnesses}), which is a process defined in the same scenario, we verify that $W_{\textrm{222}}$ is strictly more robust against both generalized and white noise than $W_{\text{MRSR}}$. 

\cyan{
For the case of the process $W_{222^2}$, we argue that the construction of $W_{222^2}$ is simpler than $W_{\textrm{MRSR}}$. Both processes require the preparation of a maximally entangled qubit state, but $W_{\textrm{MRSR}}$ requires the implementation of the control swap operation, which is a coherent superposition between the identity channel and the swap channel, while $W_{222^2}$ only requires the identity channel.}

\subsection{Comparison with the non-classical CCDC process of Ref.\,\cite{Feix2016}}
\label{sec:feix}

In Ref.\,\cite{Feix2016}, the authors proposed to study non-classical CCDC by considering quantum superpositions of both relations. Their example is a the tripartite process ordered as $\ABC$. 
A tripartite process ordered as $\ABC$ is an operator $W \in \mathcal{L}(A_I \otimes A_O \otimes B_I \otimes B_O \otimes C_I)$ which can be written as 
\begin{equation}
	W=\rho^{A_I\text{aux}}*D_A^{\text{aux}A_O/B_I\text{aux'}}*D_B^{\text{aux'}B_O/C_I},
\end{equation}
where $D_A$ and $D_B$ are Choi operators of quantum channels.

In a tripartite scenario, common-cause and direct-cause relations can be more complex than in the bipartite scenario. In the case of common-cause relations, more general situations are discussed in Refs.\,\cite{Guo2020, Ringbauer2018}. However, following the definitions presented in Ref.\,\cite{Feix2016}, we restrict the attention to the same common-cause and direct-cause relations of the bipartite case, recovering them when taking $d_{B_O} {=} d_{C_I} {=} 1$. The following definitions are taken considering the ones presented in Ref.\,\cite{Feix2016}.

A tripartite ordered process $W_{\text{CC}}$ is common-cause if
	\begin{equation}
	\tr_{B_O C_I}(W_{\text{CC}}) :=  d_{B_O} \rho^{A_I B_I} \otimes \identity^{A_O},
	\end{equation}
where $\rho^{A_I B_I}$ is a quantum state in $\mathcal{L}(A_I \otimes B_I)$.
A tripartite ordered process $W_{\text{DC}}$ is direct-cause if
	\begin{equation}
	\tr_{B_O C_I} (W_{\text{DC}}) := d_{B_O} \sum_i p_i \rho_i^{A_I} \otimes D_i^{A_O/B_I},
	\end{equation}
where $\rho_i^{A_I}$ are quantum states in $\mathcal{L}(A_I)$ and $D_i^{A_O/B_I}$ are quantum channels from $A_O$ to $B_I$.

A tripartite ordered process $W_{\text{CCDC}} $ is classical CCDC if it can be decomposed in a convex combination of a common-cause process and a direct-cause process. Note that when bipartite processes are considered, \emph{i.e.}, the dimensions of $B_O$ and $C_I$ are equal to one, their definition are equivalent to the ones presented in section \ref{sec:ccdcscen}. 

The example of non-classical CCDC process presented by the authors is the operator\footnote{The state $\ket{\phi^{+}}$ appearing in the expression of $W_{\text{FB}}$ is originally written, in Ref.\,\cite{Feix2016}, as generic bipartite state $\ket{\psi}$. However, we verified that the numerical robustness results obtained in Ref.\,\cite{Feix2016} are reproduced when $\ket{\psi}$ is a maximally entangled state, which lead us to take $\ket{\psi} = \ket{\phi^+}$, as this state is used in every other process mentioned before in this work.}
 $W_{\text{FB}} = \keketbra{W_\text{FB}}{W_{\text{FB}}}$, where 
\begin{align} \label{eq:W_FB}
\begin{split}
\keket{W_{\text{FB}}}  = \frac{1}{\sqrt{2}} \Big( & \ket{\phi^{+}}^{A_I B_I} \keket{\identity}^{A_O/C_I^1} \keket{\identity}^{B_O/C_I^{2}} \ket{0}^{C_I^3}\\
 + & \ket{\phi^{+}}^{A_I C_I^1} \keket{\identity}^{A_O/B_I} \keket{\identity}^{B_O/C_I^{2}} \ket{1}^{C_I^3} \Big),
\end{split}
\end{align}
The label FB makes reference to the names Feix and Brukner, authors of Ref.\,\cite{Feix2016}.

This process can be seen as a superposition of a common-cause and a direct-cause processes, as the first term corresponds to a common-cause process, whereas the second term corresponds to a direct-cause process. We can also represent $W_{\text{FB}}$ in terms of ordered quantum circuits%
\footnote{Indeed, every ordered quantum process can be represented in terms of ordered quantum circuits by concatenating quantum states and quantum operations \cite{Chiribella2009a}.}
 as illustrated in Fig.\,\ref{fig:feix}.
\begin{figure}
    \centering
    \includegraphics[width=8.8cm]{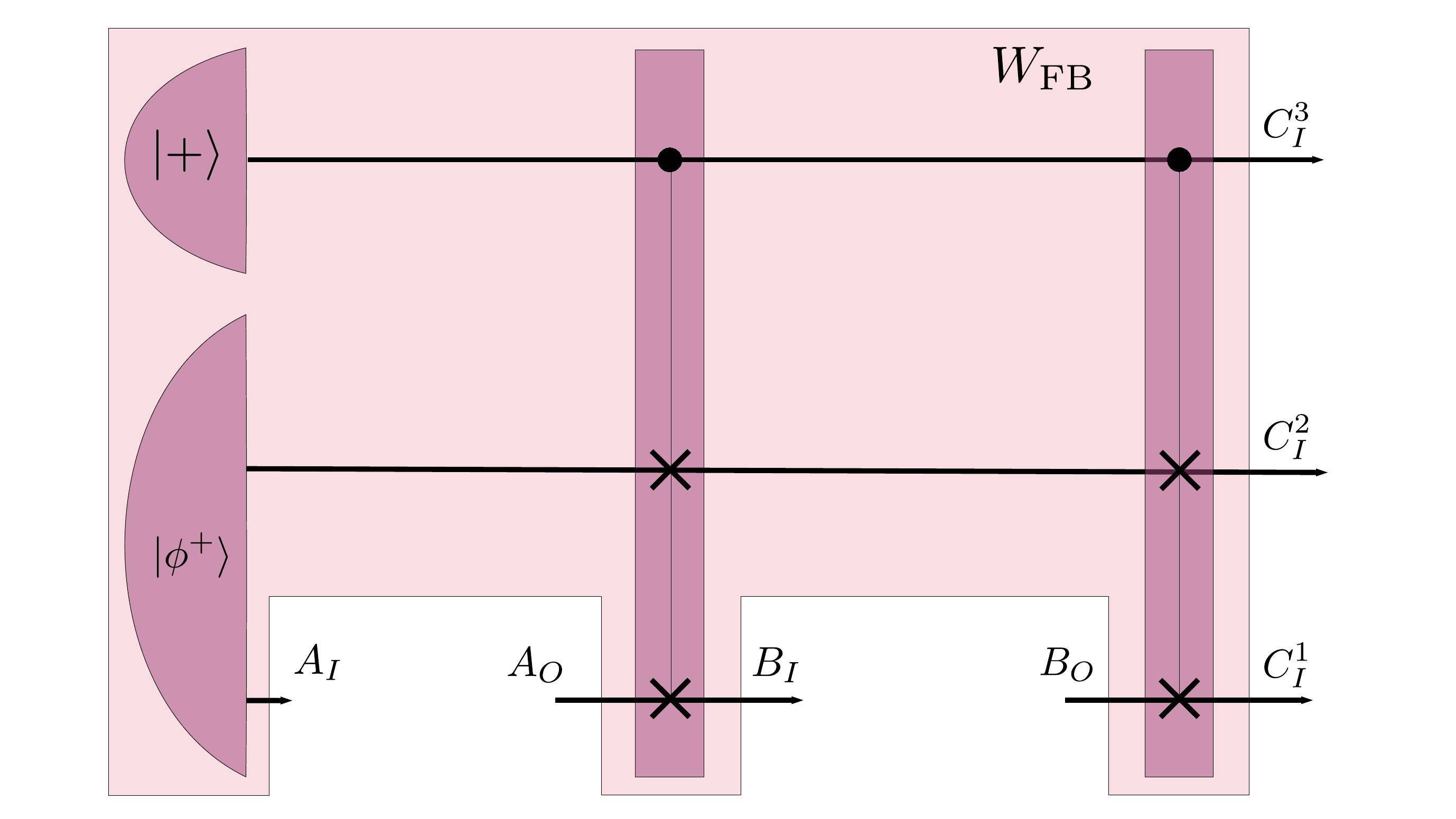}
    \caption{Circuit representation of the process $W_{\text{FB}}$ presented in Ref.\,\cite{Feix2016}. A fixed qubit $\ket{+} = \frac{1}{\sqrt{2}}(\ket{0} + \ket{1})$ in $C_I^{3}$ controls the swap operation from the auxiliary system in $C_I^2$ and $A_O$ to $B_I$ and from $C_I^2$ and $B_O$ to $C_I^1$. As the control qubit is fixed, the swapping operation occurs in the same way in both channels. This combination of operations generate the pure non-classical CCDC process $W_{\text{FB}}$} \label{fig:feix}
\end{figure} 

In Ref.\,\cite{Feix2016}, the authors numerically obtained a lower-bound to the generalized robustness of $W_{\text{FB}}$, using the PPT outer approximation of DC processes, obtaining%
\footnote{Strictly speaking, the authors have evaluated quantity before mentioned here, the non-classicality of causality $\mathcal{C}$, which has a one-to-one relation with the generalized robustness via $R_{G}^{\text{low,PPT}}(W) = \frac{\mathcal{C}(W)}{1+\mathcal{C}(W)}$.}%
\begin{equation}
R_{G}^{\text{low,PPT}}(W_{\text{FB}}) = 0.1855.
\end{equation}
Using our inner approximation method with $N=200$, we could obtain, up to numerical precision, the upper-bound 
\begin{equation}
R_{G}^{\text{up,}\mathcal{E}_{200}}(W_{\text{FB}}) = 0.1855,
\end{equation}
showing that $R_{G}(W_{\text{FB}}) = 0.1855$, up to numerical precision.
For completeness, we have also computed the white noise robustness, obtaining
\begin{equation}
R_{WN}^{\text{low,PPT}} (W_{\text{FB}}) = R_{WN}^{\text{up,}\mathcal{E}_{200}} (W_{\text{FB}}) = 0.3324. 
\end{equation}

\subsection{The role of coherent mixture of causal relations}

	As discussed in this section, previous research has shown that one way to obtain processes with the non-classical CCDC property is by coherently superposing causal relations. For instance, the tripartite processes presented in Ref.\,\cite{Feix2016} and discussed in Section \ref{sec:feix} is constructed in a way to be a coherent superposition of a purely common-cause and a purely direct-cause processes. Also, the bipartite process presented in Ref.\,\cite{MacLean2016} is inspired by a coherent mixture of causal relations which is mathematically formalized by the application of the partial swap operation (see Eq.\,\eqref{eq:partial_swap}).

	Differently from previous works, we have shown in this work that the connection between non-classical CCDC and coherent superpositions of causal relations may be more subtle than it seems at first glance. In particular, \cyan{although the non-classical CCDC process $W_{ddd^2}$ presented in section \ref{sec:ncccdc} may be viewed as a superposition of a process which is initialized in $\ket{00}$ with a process which is initialized in state $\ket{11}$, it also admits a natural interpretation as a process with both common-cause and direct-cause relations simultaneously, without explicitly considering any superposition of causal relations. We then argue that the process $W_{ddd^2}$ does not need to be interpreted as a coherent superposition of causal relations. }

\section{Separable process without a direct-cause explanation}
\label{sec:separableDC}

As mentioned in previous sections, the definition of direct-cause processes (Def.\ref{def:dc}) reminds one of the definition of separable quantum states. More precisely, if we do not impose that the operators $D_i^{A_O/B_I}$ have to respect the quantum channel condition $\tr_{B_I}D_i^{A_O/B_I} = \identity^{A_O}$, equation \eqref{eqdcproc} is precisely the definition of a separable state on the bipartition $A_I|A_OB_I$. We now show that, despite being related, these two definitions are not equivalent.

Consider the process
\begin{align}
	\begin{split}
	\label{eq:Wsep}
	W_\text{SEP} := \frac{1}{2} \Big( & \ketbra{0}{0}^{A_I} \otimes \ketbra{0}{0}^{A_O} \otimes \ketbra{0}{0}^{B_I} \\
	 + & \ketbra{1}{1}^{A_I} \otimes \ketbra{0}{0}^{A_O} \otimes \ketbra{1}{1}^{B_I}  \\
	 + & \ketbra{+}{+}^{A_I} \otimes \ketbra{1}{1}^{A_O} \otimes \ketbra{+}{+}^{B_I}  \\
	 + & \ketbra{-}{-}^{A_I} \otimes \ketbra{1}{1}^{A_O}\otimes \ketbra{-}{-}^{B_I} \Big), 
	 \end{split}
\end{align}
which is a separable operator by construction. First, notice that 
\begin{align}
\tr_{B_I} W_{\text{SEP}} = \frac{1}{2}\identity^{A_I} \otimes \identity^{A_O},
\end{align}
which shows that $W$ is a valid bipartite ordered process. 

The process  $W_\text{SEP}$  can be physically realized by preparing a maximally entangled qubit state between $A_I$ and $\aux$, and using a ``decoder'' described by
\small
\begin{align}
D^{A_O \aux/ B_I} & = \ketbra{0}{0}^{A_O} \otimes \left( \ketbra{00}{00} + \ketbra{11}{11} \right)^{\aux B_I} \nonumber \\
& + \ketbra{1}{1}^{A_O} \otimes \left( \ketbra{++}{++} + \ketbra{--}{--}\right)^{\aux B_I}.
\end{align}
\normalsize
In this way we have 
\begin{align}
W_\text{SEP}=\ketbra{\phi^+}{\phi^+}^{A_I \aux} * D^{A_O \aux / B_I}.
\end{align}
Note that the channel $D^{A_O \aux/B_I}$ can be implemented as follows: first, perform a computational basis measurement on $A_O$. If the outcome is $0$, perform a computational basis measurement on $\aux$ and send the output qubit to $B_I$. If the outcome is $1$, perform a measurement on $\aux$ in the $X$-basis instead. 

\begin{thm}
The bipartite ordered process $W_{\text{SEP}}$ is separable in the bipartition $A_I|A_O B_I$, but is not direct-cause.
\end{thm}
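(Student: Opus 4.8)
The separability claim is immediate and I would dispose of it in one line: the defining equation for $W_{\text{SEP}}$ already exhibits it as a convex combination of operators of the product form $\rho^{A_I}\otimes\sigma^{A_O B_I}$, so $W_{\text{SEP}}$ is by construction separable across $A_I|A_O B_I$ (and, having $\tr_{B_I}W_{\text{SEP}}=\tfrac12\identity^{A_I}\otimes\identity^{A_O}$, a valid bipartite ordered process). The whole substance lies in proving that it is \emph{not} direct-cause, and the plan is to exploit the range of $W_{\text{SEP}}$ together with the channel-normalization constraint $\tr_{B_I}(D_i)=\identity^{A_O}$, which is precisely what separates Definition~\ref{def:dc} from ordinary separability.

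First I would note that the four operators summed in $W_{\text{SEP}}$ are mutually orthogonal rank-one projectors, so $R:=\operatorname{range}(W_{\text{SEP}})$ is exactly the four-dimensional span of $\ket{000}$, $\ket{101}$, $\tfrac{1}{\sqrt2}(\ket{010}+\ket{111})$ and $\tfrac{1}{\sqrt2}(\ket{011}+\ket{110})$, writing basis vectors in the order $A_I A_O B_I$. Then I would assume, for contradiction, a decomposition $W_{\text{SEP}}=\sum_i p_i\,\rho_i^{A_I}\otimes D_i^{A_O/B_I}$ with $p_i>0$ and each $D_i$ a channel Choi operator. Since every summand is positive semidefinite and bounded above by $W_{\text{SEP}}$, one has $\operatorname{range}(\rho_i)\otimes\operatorname{range}(D_i)\subseteq R$ for every $i$; hence every product vector $\ket v\otimes\ket u$ with $\ket v\in\operatorname{range}(\rho_i)$ and $\ket u\in\operatorname{range}(D_i)$ must lie in $R$.

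The crux, which I expect to be the main obstacle, is classifying the product vectors (across $A_I|A_O B_I$) contained in $R$. For a general element of $R$ with coefficients $c_1,\dots,c_4$, I would form the $2\times4$ matrix of its components, with rows indexed by $A_I$ and columns by $A_O B_I$ in the order $00,01,10,11$,
\begin{equation}
M=\begin{pmatrix} c_1 & 0 & c_3/\sqrt2 & c_4/\sqrt2 \\ 0 & c_2 & c_4/\sqrt2 & c_3/\sqrt2 \end{pmatrix},
\end{equation}
so that being a product vector is equivalent to $\operatorname{rank}M\le1$, i.e.\ all $2\times2$ minors vanish. Solving this system ($c_1c_2=c_1c_3=c_1c_4=c_2c_3=c_2c_4=0$ and $c_3^2=c_4^2$) shows the only product vectors in $R$ are, up to scalars, $\ket{000}=\ket0\!\ket{00}$, $\ket{101}=\ket1\!\ket{01}$, $\ket{+}\!\ket{1+}$ and $\ket{-}\!\ket{1-}$, whose $A_I$-directions $\{\ket0,\ket1,\ket+,\ket-\}$ are pairwise distinct. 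This is the linear-algebra heart of the argument; everything else is bookkeeping.

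From this classification I would conclude that each $\operatorname{range}(\rho_i)$ is one-dimensional: if it were two-dimensional, feeding both $\ket0$ and $\ket1$ into the product would force $\operatorname{range}(D_i)$ to be simultaneously spanned by $\ket{00}$ and by $\ket{01}$, which is impossible. With $\rho_i$ pure, its $A_I$-direction must be one of the four admissible directions, and then $\operatorname{range}(D_i)$ is forced to be the single corresponding line spanned by $\ket{00}$, $\ket{01}$, $\ket{1+}$ or $\ket{1-}$; thus each $D_i$ is proportional to a rank-one projector onto a vector that is \emph{a product across $A_O|B_I$}. Finally I would close the contradiction with the channel constraint: for any product vector $\ket u=\ket a^{A_O}\ket b^{B_I}$ one has $\tr_{B_I}\ketbra{u}{u}\propto\ketbra{a}{a}^{A_O}$, which is rank one and can never equal the rank-two operator $\identity^{A_O}$. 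Hence no $D_i$ of the forced form can satisfy $\tr_{B_I}(D_i)=\identity^{A_O}$, contradicting the hypothesis that every $D_i$ is a channel. Therefore $W_{\text{SEP}}$ admits no direct-cause decomposition, completing the proof.
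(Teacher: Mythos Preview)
Your proof is correct, but it follows a genuinely different route from the paper's. You use a range-criterion argument: you identify the four-dimensional range $R$ of $W_{\text{SEP}}$, classify \emph{all} product vectors (across $A_I|A_OB_I$) in $R$ via the rank-one condition on the $2\times4$ coefficient matrix, and conclude that in any putative DC decomposition every $\rho_i$ must be one of four fixed pure states and every $D_i$ must be a rank-one projector onto a product vector in $A_O\otimes B_I$, which can never satisfy $\tr_{B_I}D_i=\identity^{A_O}$. The paper instead argues by overlaps and trace computations: it shows via Cauchy--Schwarz that any qubit state $\rho_j$ has nonzero overlap with at least three of $\ket{0},\ket{1},\ket{+},\ket{-}$, then picks a single index $j$ and evaluates three specific projector traces of $W_{\text{SEP}}$ to force $\tr(\ketbra{1}{1}^{A_O}\otimes\identity^{B_I}D_j)=0$, contradicting the channel constraint for that one $D_j$. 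Your approach is more structural and yields stronger information (it pins down \emph{every} term in any decomposition, not just one), at the cost of the product-vector enumeration; the paper's approach avoids that enumeration and is more elementary, needing only three well-chosen trace identities. One small wording issue: where you say the range of $D_i$ would be ``simultaneously spanned by $\ket{00}$ and by $\ket{01}$'', you mean simultaneously \emph{contained in} $\mathrm{span}\{\ket{00}\}$ and in $\mathrm{span}\{\ket{01}\}$, hence zero---which is indeed impossible for a channel; the intent is clear and the step is sound.
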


\begin{proof}
Equation \eqref{eq:Wsep} represents $W_{\text{SEP}}$ as a convex combination of product states, ensuring that $W_{\text{SEP}}$ is separable in the bipartition $A_I|A_O B_I$. In order to show that  $W_{\text{SEP}}$ is not a direct-cause process, let us assume that $W_\text{SEP}$ can be written as a convex combination $W_\text{SEP} = \sum_{i} p_i \rho_i^{A_I} \otimes D_i^{A_O/B_I}$, where $\rho_i^{A_I}$ are normalized quantum states and every $D_i^{A_O/B_I}$ satisfy $\tr_{A_I} D_i^{A_O/B_I} = \identity^{A_O}$. 
Note that each $\rho_i^{A_I}$ has non-trivial overlap with, at least, 3 out of the 4 states $\ketbra{0}{0}$, $\ketbra{1}{1}$, $\ketbra{+}{+}$,$\ketbra{-}{-}$. Indeed, let $\rho = \frac{1}{2} \left( \identity + \sum_i \alpha_i \sigma_i \right)$ be an arbitrary state where $\{ \alpha_i \}$ are real numbers that satisfy $\sum_i \alpha_i^2 \leq 1$ and $\sigma_i$ are Pauli matrices. Suppose that $\rho$ has zero overlap with some pure state $\ketbra{\psi}{\psi} =  \frac{1}{2} \left( \identity + \sum_i \beta_i \sigma_i \right)$ with $\sum_i \beta_i^2 = 1$, then we have that $( \vec {\alpha}, \vec{\beta}) = -1$, where $(\cdot, \cdot)$ is the Euclidean inner product. By the Cauchy-Schwarz inequality, we have
\begin{equation}
	 1 = ( \vec {\alpha}, \vec{\beta})^2 \leq ( \vec {\alpha}, \vec{\alpha}) ( \vec {\beta}, \vec{\beta}) =  ( \vec {\alpha}, \vec{\alpha}),
\end{equation}
with equality if and only if $\vec{\alpha}$ is a multiple of $\vec{\beta}$. This shows that $\vec{\alpha} = - \vec{\beta}$ and therefore $\rho$ cannot be orthogonal to any other pure quantum state. 

Let us choose some fixed index $j$ in the sum and suppose, without lack of generality, that $\rho_j$ has non-zero overlap with $\ket{1}, \ket{+}, \ket{-}$. Then, from the above definition of $W_{\text{SEP}}$, we calculate
\begin{align}
 \tr(\ketbra{1}{1}^{A_I} \otimes \ketbra{0}{0}^{A_O} \otimes \ketbra{0}{0}^{B_I} W_{\text{SEP}}) = 0,
\end{align}
From the decomposition $W_\text{SEP} = \sum_{i} p_i \rho_i^{A_I} \otimes D_i^{A_O/B_I}$, we get
\small
\begin{equation}
\sum_i p_i \tr(\ketbra{1}{1} \rho_i^{A_I}) \tr( \ketbra{0}{0}^{A_O} \otimes \ketbra{0}{0}^{B_I} D_i^{A_O/B_I}) = 0.
\end{equation}
\normalsize
By positivity of $D^{A_O/B_I}$ and $\rho^{A_I}$, each term in the sum has to be zero. Since, by assumption, $ \tr(\ketbra{1}{1} \rho_j) \neq 0$, it must be the case that $D_j$ obeys
\begin{equation}
 \tr( \ketbra{0}{0}^{A_O} \otimes \ketbra{0}{0}^{B_I} D_j^{A_O/B_I}) =0.
\end{equation}
Similarly, by calculating other projectors we get
\begin{subequations}
	\begin{align}
	 \tr( \ketbra{1}{1}^{A_O} \otimes \ketbra{-}{-}^{B_I} D_j^{A_O/B_I}) =0, \\
	 \tr( \ketbra{1}{1}^{A_O} \otimes \ketbra{+}{+}^{B_I} D_j^{A_O/B_I}) =0.
	\end{align}
\end{subequations}
From this we get $\tr(\ketbra{1}{1}^{A_O} \otimes \identity^{B_I} D_j^{A_O/B_I}) = 0$, which means that $\tr_{B_I} D_j^{A_O/B_I} \neq \identity^{A_O}$.
\end{proof}

By construction, $W_\text{SEP}$ is separable in the bipartition $A_I|A_O B_I$, implying that $W_\text{SEP}$ has a PPT $k$-symmetric extension for every $k\in\mathbb{N}$ and $R_G^{\text{low, PPT}_k}(W)=R_{WN}^{\text{low, PPT}_k}(W)=0$. This means that the non-classical CCDC property of $W_\text{SEP}$  cannot be certified by any purely entanglement-based criterion, such as the ones explored in Refs.\cite{MacLean2016,Feix2016,Giarmatzi2018}.

	Using the inner and outer approximations presented in Secs.\,\ref{subsec:inndc} and \ref{subsec:outdc} with the family of qubits presented in the Appendix B of Ref.\cite{hirsch17} ($n=171$, which corresponds to $N=2*171^2$ states), we obtain upper and lower bounds for the generalized and white noise robustnesses of $W_{\text{SEP}}$
\begin{subequations}	
\begin{align}\label{eq:believe}
R_{G}^{\text{up}, \mathcal{E}_N}(W_{\text{SEP}}) = R_{G}^{\text{low}, \hat{\mathcal{E}}_N}(W_{\text{SEP}})=0.1465, \\
R_{WN}^{\text{up}, \mathcal{E}_N}(W_{\text{SEP}}) = R_{WN}^{\text{low}, \hat{\mathcal{E}}_N}(W_{\text{SEP}})=0.2930,
\end{align}
\end{subequations}
with equality holding up to numerical precision.

In Ref.\,\cite{Giarmatzi2018}, it was conjectured that the set of separable processes and the set of processes without quantum memory are not the same, the latter being a strict subset of the first. Since, the definitions of bipartite processes without quantum memory and bipartite direct-cause processes are equivalent (see Appendix \ref{sec:classmemdc}),  we have then proven the conjecture presented in Ref.\,\cite{Giarmatzi2018} by explicitly constructing the bipartite separable process $W_{\text{SEP}}$, which cannot be realized by processes without quantum memory.

It is interesting to point that our numerical methods allowed us to obtain a relatively high robustness of $R_{WN}(W_{\text{SEP}}) = 0.2930$, but techniques exclusively based on entanglement would lead to the trivial lower bound $R^\text{sep}_{WN}(W_{\text{SEP}})\geq0$. Since $R_{WN}(W_{\text{SEP}}) = 0.2930$  is considerably greater than zero, we see that the approximating the set of direct-cause processes by separable processes may lead to very unsatisfactory results.

\section{Certifying non-classical CCDC on PPT processes}
\label{sec:WPPT}

	In this section, we present an example of a non-classical CCDC process which has $R_G^{\text{low,PPT}}(W)=R_{WN}^{\text{low,PPT}}(W)=0$, \emph{i.e.}, its non-classical CCDC property cannot be certified by the PPT approximation used in Refs.\,\cite{MacLean2016, Feix2016}. Such a process can be obtained by exploiting a class of entangled states with positive partial transpose, presented in Ref.\,\cite{Horodecki1997}. The class of states of our interest is
\small
\begin{align}
\begin{split}
\label{eq:horodeckist}
    & \rho_a^{2\times4} := \\
    & \frac{1}{7a+1}
    \begin{bmatrix}
    a & 0 & 0 & 0 & 0 & a & 0 & 0 \\
    0 & a & 0 & 0 & 0 & 0 & a & 0 \\ 
    0 & 0 & a & 0 & 0 & 0 & 0 & a \\ 
    0 & 0 & 0 & a & 0 & 0 & 0 & 0 \\ 
    0 & 0 & 0 & 0 & \tfrac{1}{2}(1+a) & 0 & 0 & \tfrac{1}{2}\sqrt{1-a^2} \\
    a & 0 & 0 & 0 & 0 & a & 0 & 0 \\
    0 & a & 0 & 0 & 0 & 0 & a & 0 \\
    0 & 0 & a & 0 & \tfrac{1}{2}\sqrt{1-a^2} & 0 & 0 & \tfrac{1}{2}(1+a)\end{bmatrix},
    \end{split}
\end{align}
\normalsize
being entangled for $a \in (0,1)$ and separable for $a=0$ or $a=1$.

We now set $a=\frac{1}{2}$ and use $\rho_{\frac{1}{2}}^{2 \times 4}$ to define:
\begin{equation} \label{eq:Wppt}
    W_{\textrm{PPT}} := d_{A_O} \cdot \rho_{\frac{1}{2}}^{2 \times 4},
\end{equation}
with $d_{A_O}=2$. $W_{\textrm{PPT}}$ is a valid bipartite ordered process, as it satisfies every condition from Eqs.\,\eqref{eq:ordproc} by direct inspection. Also, it has dimensions $d_{A_I}{=} d_{A_O} {=} d_{B_I} = 2$ and has bound entanglement in the bipartition $A_I|A_O B_I$.

`We will now ensure that $W_{\textrm{PPT}}$ is a non-classical CCDC process by using a better approximation $\mathcal{L}_{\text{DC}}^{\text{out,PPT}_k}$. In particular, we set $k=2$, then we obtain
\begin{align}
\begin{split}
\label{eq:wpptoutrobs}
R_G^{\text{low,PPT}_2}(W_{\textrm{PPT}}) & = 0.0083, \\
R_{WN}^{\text{low,PPT}_2}(W_{\textrm{PPT}}) & = 0.0230.
\end{split}
\end{align}
For $k = 3$, the robustnesses do not change, which indicates that using greater values of $k$ does not improve the values of generalized and white noise robustnesses.

When using the inner and outer approximations presented in Secs.\,\ref{subsec:inndc} and \ref{subsec:outdc} with the family of qubits presented in the Appendix B of Ref.\cite{hirsch17} ($n=171$, which corresponds to $N=2*171^2$ states), up to numerical precision, we obtain
\begin{equation}
\begin{split}
\label{eq:wpptinrobs}
R_G^{\text{up},\mathcal{E}_{10^4}}(W_{\textrm{PPT}}) &= R_G^{\text{low},\hat{\mathcal{E}}_{10^4}}(W_{\textrm{PPT}}) = 0.1085, \\
R_{WN}^{\text{up},\mathcal{E}_{10^4}}(W_{\textrm{PPT}}) & = R_{WN}^{\text{low},\hat{\mathcal{E}}_{10^4}}(W_{\textrm{PPT}}) = 0.2782.
\end{split}
\end{equation}
	We verify that the lower bounds for the robustnesses obtained with the entanglement criterium in Eqs.\,\eqref{eq:wpptoutrobs} are rather loose in comparison to the actual robustnesses values from Eqs.\,\eqref{eq:wpptinrobs}. This example also illustrates the limitations of certifying non-classical CCDC solely based on entanglement criteria.

\section{Summary of generalized and white noise robustnesses}
\label{sec:summary}

\begin{table*}[t]
\centering
{\renewcommand{\arraystretch}{2}
\begin{tabular}{|c|c|c|c|c|}
\hline
Process (Eq.)                       & $R_G$         & $R_G^{\text{low, PPT}_{k=2}}$ & $R_{WN}$ & $R_{WN}^{\text{low, PPT}_{k=2}}$ \\ \hline
$W_{333^2}$\,\eqref{eq:Wddd2}       & $\frac{2}{3}$ & \green{$\frac{2}{3}$} &$0.9529$      & \green{$0.9529$}\\ \hline
$W_{222^2}$\,\eqref{eq:Wddd2}       & $\frac{1}{2}$ & \green{$\frac{1}{2}$} &$0.8421$      & \green{$0.8421$}\\ \hline
$W_{222}$\,\eqref{eq:W222}          & $\frac{1}{2}$ & \green{$\frac{1}{2}$} &$\frac{2}{3}$ & \green{$\frac{2}{3}$}\\ \hline
$W_{\text{MRSR}}$\,\eqref{eq:W_MRSR}& $0.3506$      & \green{$0.3506$}      &$0.5000$      & \green{$0.5000$}\\ \hline
$W_{\text{FB}}$\,\eqref{eq:W_FB}    & $0.1855$      & \green{$0.1855$}     &$0.3324$      & \green{$0.3324$}\\ \hline
$W_{\text{SEP}}$\,\eqref{eq:Wsep}   & $0.1465$      & \red{$0$}           &$0.2930$      & \red{$0$} \\ \hline
$W_{\text{PPT}}$\,\eqref{eq:Wppt}   &$0.1085 $      & \red{$0.0083$}      & $ 0.2782$    & \red{$0.0230$}  \\ \hline
\end{tabular}
}
\caption{
Table presenting generalized and white noise robustnesses for every process analyzed in this work. Values represented in fractions were obtained by mathematical theorems and coincide with SDP optimization. Values with decimal digits were obtained only via SDP optimization, where our upper and lower bounds are identical up to $4$ decimals. The lower bounds obtained by the approximations of the CCDC set based on entanglement are the values provided in columns $R_G^{\text{low, PPT}_k}$ and $R_{WN}^{\text{low, PPT}_k}$. Lower-bounds that match the actual robustnesses are highlighted in green, while lower bounds that are rather different from the actual robustnesses are highlighted in red. We can observe that entanglement based criteria could never detect $W_\text{SEP}$ as a non-classical CCDC process, while the PPT $k$-symmetric extension bound for $W_{\text{PPT}}$ provides the loose lower bounds $R_G^{\text{low, PPT}_k}(W_\text{PPT})\geq 0.0083$ $R_{WN}^{\text{low, PPT}_k}(W_\text{PPT})\geq 0.0230$, which are values obtained both with $k=2$ and $k=3$.
}
\label{table:robsummary}
\end{table*}

In previous sections, we presented several examples of non-classical CCDC processes with different values of generalized and white noise robustness.
Table \ref{table:robsummary} summarizes the non-classical CCDC robustnesses of several processes presented in this work and compare the actual value of robustness with the values obtained with methods based on entanglement criteria.

\section{Conclusions}

	In this work, we have introduced a class of bipartite ordered processes, and a process with dimension of three qubits, which are maximally robust against general noise and very likely to be the most robust against white noise for the qubit case. This class of processes can be implemented by preparing a pair of maximally entangled states and an identity channel, admitting a natural interpretation of a process with both common-cause and direct-cause relations simultaneously. Hence, in contrast to previously known non-classical CCDC processes \cite{Ried2015,Feix2016}, the class presented here does not require either the construction or the interpretation directly based on coherent superposition of causal relations.

	Several analytical results proved in this work employed general convex analysis arguments based on witness hyper-planes, combined with entanglement theory concepts, such as entanglement breaking channels. We believe that the techniques developed here may find applications in related problems. 
	
	 We have also presented a systematic semi-definite approach to characterize the set of non-classical CCDC processes. In particular, we provided a hierarchy of inner and outer approximations that converge to the set of classical CCDC processes, and an entanglement-based hierarchy which, despite not converging to the set of CCDC processes, provides us several tight and non-trivial bounds.

	In order to tackle situations where we could not prove the value of the highest robustness of a given scenario analytically, we constructed a heuristic see-saw method, which provided numerical evidence of the highest robustnesses attainable on such scenario, also providing a valid lower-bound for the value of the highest robustness.	

	Finally, we have shown that, although all bipartite processes that are entangled in the bipartition $A_I|A_O B_I$ do not have a direct-cause decomposition, the converse does not hold. Our proof consists in explicitly constructing a process which is separable on the bipartition $A_I|A_O B_I$, but does not have a direct-cause decomposition. Since bipartite processes without quantum memory are equivalent to bipartite direct-cause ones, our results prove a conjecture first raised in Ref.\,\cite{Giarmatzi2018} and contributes towards a better understanding of the particularities of quantum memory, spacial entanglement and temporal entanglement \cite{Milz2017,Giarmatzi2018,Taranto2019,Taranto2020,Milz2020A}.

		All SDP optimization problems presented in this manuscript were implemented using MATLAB, the convex optimization package Yalmip \cite{yalmip} and CVX \cite{cvx}, the solvers MOSEK, SeDuMi and SDPT3 \cite{MOSEK, sedumi, SDPT3}, and the toolbox for quantum information QETLAB \cite{QETLAB}. All our codes are available in the public repository \cite{ccdc} and can be freely used under the GNU Lesser General Public License v3.0.

\begin{acknowledgments}

We are grateful to Mateus Araújo, Jessica Bavaresco, Rafael Chaves, Simon Milz and Philip Taranto for interesting discussions. MN, RV and TM acknowledge financial support by the Brazilian agencies INCT-IQ (National Institute of Science and Technology for Quantum Information), FAPEMIG, CNPq, CAPES and Instituto Serrapilheira. MTQ has received funding from the European Union’s Horizon 2020 research and innovation programme under the Marie Skłodowska-Curie grant agreement No 801110 and the Austrian Federal Ministry of Education, Science and Research (BMBWF). It reflects only the authors' view, the EU Agency is not responsible for any use that may be made of the information it contains. MTQ and PG acknowledge the support of the Austrian Science Fund (FWF) through the SFB project "BeyondC", a grant from the Foundational Questions Institute (FQXi) Fund and a grant from the John Templeton Foundation (Project No. 61466) as part of the Quantum Information Structure of Spacetime (QISS) Project (qiss.fr). The opinions expressed in this publication are those of the authors and do not necessarily reflect the views of the John Templeton Foundation. Research at Perimeter Institute is supported in part by the Government of Canada through the Department of Innovation, Science and Economic Development Canada and by the Province of Ontario through the Ministry of Colleges and Universities.

\end{acknowledgments}

\nocite{apsrev42Control} 
\bibliographystyle{0_MTQ_apsrev4-2_corrected}

\bibliography{library.bib}

\onecolumn\newpage

\appendix

\section{Ordered processes without quantum memory are equivalent to direct-cause processes on the bipartite case}
\label{sec:classmemdc}

	We now present the definition of ordered (non-Markovian) processes without quantum memory, first introduced in Ref.\cite{Giarmatzi2018}.
	
\begin{defn}[Process without quantum memory\cite{Giarmatzi2018}]
	A linear operator $W\in \mathcal{L}(A_I\otimes A_O \otimes B_I)$ is a bipartite ordered process without quantum memory if it can be written as
\begin{equation}
	W = \rho_\text{SEP}^{A_I\aux}*D^{A_O \aux/B_I},
\end{equation}
where $\rho_\text{SEP}^{A_I \aux}$ is a separable state and  $D^{A_O\aux / B_I}$ is the Choi operator of a quantum channel.
\end{defn}

We now show that all bipartite processes without quantum memory are bipartite direct-cause, \textit{vice-versa}. This proof was first presented in Appendix A.3.2 of Ref.\,\cite{Giarmatzi2018}, but we reproduce it here for the sake of completeness.

\begin{thm}[Appendix A.3.2 of Ref.\cite{Giarmatzi2018}] \label{theo:classical}
	A bipartite ordered process $W$ is a process without quantum memory if and only if $W$ is direct-cause.
\end{thm}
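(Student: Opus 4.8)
The plan is to prove the two inclusions separately, exploiting that both classes share the same skeleton—an initial state on $A_I\otimes\aux$ followed by a channel $\map{D}\colon\mathcal{L}(A_O\otimes\aux)\to\mathcal{L}(B_I)$—and differ only in that the no-memory class permits an arbitrary \emph{separable} state $\rho_{\text{SEP}}^{A_I\aux}$, whereas the direct-cause class is a mixture of products $\rho_i^{A_I}\otimes D_i^{A_O/B_I}$ with each $D_i$ a channel Choi operator and no auxiliary space. The bridge between the two descriptions is the observation that a separable state on the bipartition $A_I|\aux$ decomposes into product terms, and that feeding a fixed state into the $\aux$-port of $\map{D}$ produces a channel from $A_O$ to $B_I$.

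First I would show every direct-cause process has no quantum memory. Starting from $\Wdc=\sum_i p_i\,\rho_i^{A_I}\otimes D_i^{A_O/B_I}$, I take $\aux$ to carry an orthonormal classical register $\{\ket{i}\}$, set $\rho_{\text{SEP}}^{A_I\aux}:=\sum_i p_i\,\rho_i^{A_I}\otimes\ketbra{i}{i}^{\aux}$—which is manifestly separable—and let $\map{D}$ be the channel that reads $\aux$ in the computational basis and applies $\map{D}_i$ to $A_O$ conditioned on the outcome. This channel is CPTP, and its Choi operator is $D=\sum_i\ketbra{i}{i}^{\aux}\otimes D_i^{A_O/B_I}$. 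Evaluating the link product $\rho_{\text{SEP}}^{A_I\aux}*D$, the orthogonality $\ketbra{i}{i}\,\ketbra{j}{j}=\delta_{ij}\ketbra{i}{i}$ eliminates the cross terms, and tracing out $\aux$ returns exactly $\Wdc$.

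Conversely, given $W=\rho_{\text{SEP}}^{A_I\aux}*D^{A_O\aux/B_I}$, I write the separable state as $\rho_{\text{SEP}}^{A_I\aux}=\sum_i p_i\,\rho_i^{A_I}\otimes\sigma_i^{\aux}$ with $\{p_i\}$ a probability distribution and $\rho_i,\sigma_i$ normalized states. Because $\rho_i^{A_I}$ has no support on $\aux$ and the link product is bilinear and contracts only the shared space $\aux$, one obtains $W=\sum_i p_i\,\rho_i^{A_I}\otimes\big(\sigma_i^{\aux}*D^{A_O\aux/B_I}\big)$. A direct evaluation of the link product shows that $D_i^{A_O/B_I}:=\sigma_i^{\aux}*D^{A_O\aux/B_I}$ is the Choi operator of the map $\map{D}_i\colon X^{A_O}\mapsto\map{D}(X^{A_O}\otimes\sigma_i^{\aux})$, obtained by feeding the fixed state $\sigma_i$ into the auxiliary input of $\map{D}$. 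This map is completely positive, being a composition of the completely positive maps $X\mapsto X\otimes\sigma_i$ and $\map{D}$, and it is trace-preserving because $\tr(\sigma_i)=1$; hence $D_i\succeq 0$ and $\tr_{B_I}(D_i)=\identity^{A_O}$, so $W=\sum_i p_i\,\rho_i^{A_I}\otimes D_i^{A_O/B_I}$ is direct-cause.

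The only step requiring care is this last verification in the converse direction: one must confirm that inserting a \emph{fixed normalized} state into the auxiliary port genuinely yields a CPTP map, so that the resulting $D_i$ meets the channel normalization $\tr_{B_I}(D_i)=\identity^{A_O}$ rather than merely being positive semi-definite. Trace preservation is precisely where the normalization $\tr(\sigma_i)=1$ enters, and it is also where one must ensure that the separable decomposition is taken with normalized components $\rho_i,\sigma_i$ and a bona fide probability distribution $\{p_i\}$; both facts are standard for separable states, so I expect no genuine difficulty here, only careful bookkeeping of the link-product contraction.
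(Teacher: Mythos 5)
Your proposal is correct and takes essentially the same route as the paper's proof: the forward direction uses the identical classical-register construction $\rho_{\text{SEP}}^{A_I\aux}=\sum_i p_i\,\rho_i^{A_I}\otimes\ketbra{i}{i}^{\aux}$ with conditional channel $D=\sum_i\ketbra{i}{i}^{\aux}\otimes D_i^{A_O/B_I}$, and the converse uses the same separable decomposition with $D_i:=\sigma_i^{\aux}*D^{A_O\aux/B_I}$ absorbed as a channel from $A_O$ to $B_I$. The only difference is that you spell out explicitly why each $D_i$ is CPTP (feeding a normalized state into the auxiliary port of a CPTP map), a verification the paper asserts without elaboration.
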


\begin{proof}
	We first show that every process without quantum memory is direct-cause.
	Since $\rho_{\text{SEP}}^{A_I\aux}$ is separable, there exists $\rho_i^{A_I}$ and $\sigma_i^{\aux}$ and some probabilities $p_i$  such that $ \rho_{\text{SEP}}^{A_I\aux} = \sum_i p_i\rho_i^{A_I} \otimes \sigma_i^{\aux} $. Thus, we can write

	\begin{align}
\left(\sum_i p_i \rho_i^{A_I} \otimes \sigma_i^{\aux}\right) * D^{\aux A_O/B_I} 
	= \sum_i p_i \rho_i^{A_I} \otimes D_i^{A_O/B_I},
	\end{align}
where $D_i^{A_O/B_I}:= \sigma_i^{\aux} *  D^{\aux A_O / B_I}$ are valid quantum channels, since $D_i^{A_O/B_I} \succeq 0$ and $\tr_{B_I}(D_i^{A_O/B_I})= \identity^{A_O}$.

	Now, we need to show that every direct-cause process is a process without quantum memory in the bipartite case. Let us assume that $W$ is direct-cause. Then $W$ can be written as
\begin{equation}
	 W =\sum_i p_i \rho_i^{A_I} \otimes D_i^{A_O/B_I}
\end{equation}
for some states $\rho_i^{A_I}$ and channels $D_i^{A_O/B_I}$. Now, let us define the separable state $\sigma^{A_I \aux}:= \sum_i p_i \rho_i^{A_I} \otimes \ketbra{i}{i}^{\aux}$, and the quantum channel $D^{\aux A_O/B_I}:= \sum_i \ketbra{i}{i}^{\aux} \otimes D_i^{A_O/B_I}$. Direct calculation shows that $\sigma^{A_I,\aux}*D^{\aux A_O/B_I}=W$, ensuring that $W$ is a process without quantum memory.
\end{proof}

\section{A tight upper bound for the generalized robustness}
\label{app:robustness_bound}

In this section, we prove that the generalized robustness of any process is upper-bounded by $R_G(W) \leq  1 - \frac{1}{d_{A_I}}$. This bound is saturated by the processes $W_{ddd^2}$ (Eq.\,\eqref{eq:Wddd2}) for every dimension $d$ and by $W_{222}$ (Eq.\,\eqref{eq:W222}). 

\begin{lem} \label{lemma:ent_break}
	Let $W$ be a bipartite ordered process. If $\map{\Lambda}:\mathcal{L}(A_I)\to\mathcal{L}(A_I)$ is an entanglement breaking channel, the process
	$\map{\Lambda}^{A_I}\otimes \map{\identity}^{A_OB_I}(W)$ is direct-cause.
\end{lem}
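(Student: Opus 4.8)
The plan is to exploit the measure-and-prepare structure of entanglement breaking channels. Recall that any such channel $\map{\Lambda}:\mathcal{L}(A_I)\to\mathcal{L}(A_I)$ can be written as $\map{\Lambda}(X)=\sum_k \tr(M_k\,X)\,\tau_k$, where $\{M_k\}$ is a POVM on $A_I$ (so $M_k\succeq0$ and $\sum_k M_k=\identity^{A_I}$) and each $\tau_k$ is a quantum state. First I would feed the process $W$ through this decomposition, writing
\begin{equation}
\map{\Lambda}^{A_I}\otimes\map{\identity}^{A_OB_I}(W)=\sum_k \tau_k^{A_I}\otimes \Theta_k,
\end{equation}
with $\Theta_k:=\tr_{A_I}\!\big[(M_k^{A_I}\otimes\identity^{A_OB_I})\,W\big]\in\mathcal{L}(A_O\otimes B_I)$. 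The aim is then to recognize the right-hand side as a direct-cause decomposition in the sense of Def.~\ref{def:dc}, with the $\tau_k$ playing the role of the input states and the (suitably normalized) $\Theta_k$ playing the role of the channel Choi operators $D_i^{A_O/B_I}$.

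Two properties must be verified. Positivity is immediate: since $M_k\succeq0$ and $W\succeq0$, each $\Theta_k\succeq0$. The crucial step is the channel normalization, and this is precisely where the hypothesis that $W$ is a valid ordered process enters. Tracing out $B_I$ and invoking the defining condition $\tr_{B_I}(W)=\sigma^{A_I}\otimes\identity^{A_O}$ from Eq.~\eqref{subeq:condprojab}, I would obtain
\begin{equation}
\tr_{B_I}(\Theta_k)=\tr_{A_I}\!\big[(M_k\otimes\identity^{A_O})(\sigma^{A_I}\otimes\identity^{A_O})\big]=\tr(M_k\,\sigma)\,\identity^{A_O}.
\end{equation}
Setting $p_k:=\tr(M_k\,\sigma)$, the POVM condition $\sum_k M_k=\identity^{A_I}$ combined with $\tr(\sigma)=1$ yields $p_k\ge0$ and $\sum_k p_k=1$, so that $\{p_k\}$ is a probability distribution.

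Finally, for every $k$ with $p_k>0$ I would set $D_k:=\Theta_k/p_k$; the identity above then gives $\tr_{B_I}(D_k)=\identity^{A_O}$ with $D_k\succeq0$, so $D_k$ is a legitimate Choi operator of a channel. Indices with $p_k=0$ force $\Theta_k=0$ by positivity and simply drop out of the sum. This leaves $\map{\Lambda}^{A_I}\otimes\map{\identity}^{A_OB_I}(W)=\sum_k p_k\,\tau_k^{A_I}\otimes D_k^{A_O/B_I}$, which is exactly the form required by Def.~\ref{def:dc}, establishing that the output process is direct-cause. I expect the main obstacle to be conceptual rather than computational: one must recognize that the separable (measure-and-prepare) structure of $\map{\Lambda}$ converts the single global process into a classical mixture indexed by the measurement outcome $k$, and that the ordered-process normalization $\tr_{B_I}(W)=\sigma^{A_I}\otimes\identity^{A_O}$ is exactly the ingredient needed to promote each block $\Theta_k$ to a properly normalized quantum channel.
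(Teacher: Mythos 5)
Your proof is correct, but it follows a genuinely different route from the paper's. The paper proves Lemma \ref{lemma:ent_break} by going through the physical realization of the process: it writes $W=\rho^{A_I \aux} * D^{\aux A_O / B_I}$, notes that the entanglement-breaking channel acting on $A_I$ turns $\rho^{A_I\aux}$ into a separable state, and then reuses the argument of Theorem \ref{theo:classical} (separable state $*$ channel $=$ direct-cause) to conclude. You instead work entirely at the level of the Choi operator: you invoke the Holevo measure-and-prepare form $\map{\Lambda}(X)=\sum_k \tr(M_k X)\,\tau_k$ of entanglement-breaking channels (the Horodecki--Shor--Ruskai characterization, a nontrivial but standard equivalence), and you use the axiomatic ordered-process condition $\tr_{B_I}(W)=\sigma^{A_I}\otimes\identity^{A_O}$ of Eq.~\eqref{subeq:condprojab} to promote each block $\Theta_k=\tr_{A_I}\bigl[(M_k\otimes\identity^{A_OB_I})W\bigr]$ to a properly normalized channel Choi operator. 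The trade-off is clean: the paper's argument needs only the weakest property of entanglement breaking (separable output) plus the already-proven Theorem \ref{theo:classical} and the realization $W=\rho*D$, so it is shorter in context; your argument is self-contained modulo the measure-and-prepare theorem, never needs the dilation or Theorem \ref{theo:classical}, makes explicit exactly where the ordered-process normalization is consumed, and yields a concrete operational reading of the resulting decomposition of Def.~\ref{def:dc} as ``measure $A_I$ with $\{M_k\}$, then prepare $\tau_k$ and run the conditional channel $D_k$.'' All the individual steps you give (positivity of $\Theta_k$ via $\sqrt{M_k}$ conjugation inside the partial trace, $\sum_k p_k=1$, and the disposal of the $p_k=0$ terms since a positive semi-definite operator with vanishing trace is zero) check out.
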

\begin{proof}
	By definition, any bipartite ordered process can be written as $W=\rho^{A_I \aux} * D^{\aux A_O / B_I}$. Since $\map{\Lambda}$ is entanglement breaking, it holds that $\map{\Lambda}^{A_I}\otimes \map{\identity}^{\aux}(\rho^{A_I \aux})$ is a separable state. Therefore, we can use the same argument presented in the proof of theorem \ref{theo:classical} to ensure that $\map{\Lambda}^{A_I}\otimes \map{\identity}^{A_OB_I}(W)$ is a direct-cause process.
\end{proof}

\begin{lem} \label{lemma:lambda}
Let $\map{\Lambda}:\mathbb{C}_d\to\mathbb{C}_d$ be a quantum channel,
$\omega=e^\frac{2\pi\sqrt{-1}}{d}$, and
$Z:=\sum_{i=0}^{d-1}\omega^i \ketbra{i}{i} $ be the $d$-dimensional clock operator. The channel
\begin{equation}
\map{\Lambda}(\rho) = \frac{1}{d} \sum_{k=0}^{d-1} Z^k \rho Z^{-k}
\end{equation}
is an entanglement breaking channel.
\end{lem}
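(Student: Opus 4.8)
The plan is to show that $\map{\Lambda}$ is nothing but the completely dephasing channel in the computational basis, which is manifestly entanglement breaking. First I would compute the action of $\map{\Lambda}$ on an arbitrary matrix unit $\ketbra{i}{j}$. Since $Z$ is diagonal in the computational basis, conjugation only introduces a phase, $Z^k\ketbra{i}{j}Z^{-k}=\omega^{k(i-j)}\ketbra{i}{j}$, so that
\[
\map{\Lambda}(\ketbra{i}{j})=\left(\frac{1}{d}\sum_{k=0}^{d-1}\omega^{k(i-j)}\right)\ketbra{i}{j}.
\]
The next step is to invoke the standard roots-of-unity identity $\frac{1}{d}\sum_{k=0}^{d-1}\omega^{k(i-j)}=\delta_{ij}$, which holds because the sum is a full geometric series over the $d$-th roots of unity, equal to $d$ when $i=j$ and to zero otherwise. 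By linearity this yields $\map{\Lambda}(\rho)=\sum_i \rho_{ii}\ketbra{i}{i}=\sum_i \ketbra{i}{i}\,\rho\,\ketbra{i}{i}$ for every $\rho$, i.e.\ $\map{\Lambda}$ erases all off-diagonal entries.

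With $\map{\Lambda}$ identified as the completely dephasing channel, it remains to argue that such a channel is entanglement breaking. I would do this by exhibiting its measure-and-prepare form: writing $\rho_{ii}=\tr(\ketbra{i}{i}\rho)$, we have $\map{\Lambda}(\rho)=\sum_i \tr(\ketbra{i}{i}\rho)\,\ketbra{i}{i}$, which measures $\rho$ with the projective POVM $\{\ketbra{i}{i}\}_i$ and re-prepares the corresponding basis state. Any channel of this form returns a separable operator when applied to half of a bipartite state, hence is entanglement breaking. Equivalently, one may compute the Choi operator of $\map{\Lambda}$ via Eq.~\eqref{eqchoimatrix} and note that it equals $\sum_i \ketbra{i}{i}\otimes\ketbra{i}{i}$, which is diagonal and therefore separable; separability of the Choi operator is precisely the characterization of entanglement breaking channels used implicitly in Lemma~\ref{lemma:ent_break}.

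I do not expect a genuine obstacle here: the only content is the matrix-element evaluation together with the roots-of-unity sum, both routine. The one point deserving care is the bookkeeping of the conjugation $Z^k\ketbra{i}{j}Z^{-k}$ and its phase factors, and checking that the average over $k$ collapses exactly the off-diagonal entries while leaving the diagonal untouched. Once that computation is in place, the identification of $\map{\Lambda}$ with a dephasing (measure-and-prepare) channel makes the entanglement-breaking conclusion immediate.
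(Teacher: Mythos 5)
Your proposal is correct and is essentially the paper's own argument: the paper performs the same roots-of-unity computation, but carries it out directly inside the Choi operator, arriving at $\Lambda=\sum_i \ketbra{i}{i}\otimes\ketbra{i}{i}$ and invoking the separability-of-Choi characterization of entanglement breaking channels (your stated ``equivalent'' route). Your identification of $\map{\Lambda}$ as the completely dephasing channel with its measure-and-prepare form is just a clean repackaging of the same computation and an equivalent characterization, so there is no substantive difference.
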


\begin{proof}
A necessary and sufficient condition~\cite{Horodecki2003} for $\map{\Lambda}$ to be entanglement-breaking is that its Choi operator $\Lambda$ is separable between input and output spaces. We now show that $\Lambda$ is separable by
\begin{subequations}
	\begin{align}
	\label{eq:lambda_EB}
	\Lambda :=& \sum_{ab} \ketbra{a}{b} \otimes \map{\Lambda}(\ketbra{a}{b}) \\
	         =& \frac{1}{d}\sum_{abk} \ketbra{a}{b} \otimes Z^k(\ketbra{a}{b})Z^{-k} \\
	         =& \frac{1}{d} \sum_{abkij} \ketbra{a}{b} \otimes \ketbra{i}{i} \omega^{ik}(\ketbra{a}{b})\ketbra{j}{j} \omega^{-jk} \\
	         =& \frac{1}{d} \sum_{kij} \ketbra{i}{j} \otimes \ketbra{i}{j} \omega^{ik} \omega^{-jk} \\
	 =& \frac{1}{d} \sum_{ij} \ketbra{i}{j} \otimes \ketbra{i}{j} \sum_k \omega^{ik} \omega^{-jk} \\
	 =& \frac{1}{d} \sum_{ij} \ketbra{i}{j} \otimes \ketbra{i}{j} \sum_k \omega^{k(i-j)} \\
	  =& \frac{1}{d} \sum_{ij} \ketbra{i}{j} \otimes \ketbra{i}{j} d\delta_{ij} \\
	  =& \sum_{i} \ketbra{i}{i} \otimes \ketbra{i}{i}. 
	\end{align}
\end{subequations}
\end{proof}

\begin{thm}\label{theo:upper}
The generalized robustness of a bipartite ordered process $W\in \mathcal{L}(A_I\otimes A_O\otimes B_I)$ is upper-bounded by $R_G(W) \leq 1 - \frac{1}{d_{A_I}}$.
\end{thm}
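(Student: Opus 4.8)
The plan is to exhibit an explicit feasible point of the generalized-robustness optimization in Eq.\,\eqref{eq:genrob} attaining the value $r = 1 - \frac{1}{d_{A_I}}$, by mixing $W$ with unitarily-rotated copies of itself until the completely dephasing channel has effectively been applied on $A_I$. Writing $d := d_{A_I}$ and letting $Z$ be the clock operator on $A_I$ appearing in Lemma \ref{lemma:lambda}, I would first observe that the entanglement-breaking channel $\map{\Lambda}$ of that lemma decomposes as a convex mixture of unitary conjugations, namely $\map{\Lambda} = \frac{1}{d}\map{\identity} + \frac{1}{d}\sum_{k=1}^{d-1}\map{U}_k$ with $\map{U}_k(\cdot) = Z^k (\cdot) Z^{-k}$. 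Applying $\map{\Lambda}$ to the $A_I$ factor of $W$ then yields $\map{\Lambda}^{A_I}\otimes\map{\identity}^{A_OB_I}(W) = \frac{1}{d}\sum_{k=0}^{d-1} Z^k W Z^{-k}$, where $Z$ is understood to act on $A_I$ and trivially on $A_O B_I$.

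Next I would define the candidate noise operator $\Omega := \frac{1}{d-1}\sum_{k=1}^{d-1} Z^k W Z^{-k}$ and verify that $\Omega \in \mathcal{L}_{\AB}$. Since $\mathcal{L}_{\AB}$ is convex, it suffices to check that each conjugate $Z^k W Z^{-k}$ is a valid bipartite ordered process via the characterization in Eq.\,\eqref{eq:condordproc}: positivity is preserved under unitary conjugation, and $\tr_{B_I}(Z^k W Z^{-k}) = Z^k\,\tr_{B_I}(W)\,Z^{-k} = (Z^k \sigma Z^{-k})^{A_I} \otimes \identity^{A_O}$, which is again of the required form because $Z^k \sigma Z^{-k}$ is a quantum state whenever $\sigma$ is.

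Finally I would assemble these ingredients. Taking $r = \frac{d-1}{d} = 1 - \frac{1}{d_{A_I}}$, the convex combination $(1-r)W + r\Omega$ equals exactly $\frac{1}{d}\sum_{k=0}^{d-1} Z^k W Z^{-k} = \map{\Lambda}^{A_I}\otimes\map{\identity}^{A_OB_I}(W)$, which is a direct-cause process by Lemmas \ref{lemma:lambda} and \ref{lemma:ent_break} and hence lies in $\mathcal{L}_{\text{CCDC}}$. This exhibits a feasible point of Eq.\,\eqref{eq:genrob} with objective value $1 - \frac{1}{d_{A_I}}$, giving $R_G(W) \leq 1 - \frac{1}{d_{A_I}}$.

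The conceptual heart is the fact, already packaged in Lemma \ref{lemma:ent_break}, that applying any entanglement-breaking channel on $A_I$ destroys the non-classical CCDC resource; choosing the completely dephasing channel, whose identity-weight is precisely $1/d$, is exactly what pins the bound at $1 - 1/d_{A_I}$. The only genuinely delicate step is the verification that $\Omega$ remains a valid ordered process, and even that reduces to the observation that unitary conjugation on $A_I$ preserves both positivity and the product structure of the $B_I$-marginal; everything else is bookkeeping around the convex decomposition of $\map{\Lambda}$.
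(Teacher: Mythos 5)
Your proposal is correct and follows essentially the same route as the paper's proof: both apply the dephasing-type channel of Lemma \ref{lemma:lambda} on $A_I$, split it as $\frac{1}{d}\map{\identity} + \left(1-\frac{1}{d}\right)\map{\Lambda}_{\setminus}$ (your $\Omega$ is exactly the paper's $\map{\Lambda}_{\setminus}^{A_I}\otimes\map{\identity}^{A_OB_I}(W)$), and invoke Lemma \ref{lemma:ent_break} to conclude that the mixture is CCDC. Your explicit verification that each $Z^k W Z^{-k}$, and hence $\Omega$, satisfies the ordered-process conditions of Eq.\,\eqref{eq:condordproc} is a small added elaboration of a step the paper asserts without detail, but it does not change the argument.
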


\begin{proof}
Let $\map{\Lambda}: \mathcal{L}(A) \to \mathcal{L}(A)$ be the entanglement-breaking channel defined in Lemma \ref{lemma:lambda}. Note that since $Z^0=\identity$, the action of $\map{\Lambda}$ can be written as
\begin{subequations}	
	\begin{align} \label{eq:lambda}
	\map{\Lambda}(\rho) &=  \frac{1}{d} \sum_{k=0}^{d-1} Z^k \rho Z^{-k} \\
	 &= \frac{1}{d} \rho + \frac{1}{d} \sum_{k=1}^{d-1} Z^k \rho Z^{-k} \\
		 &= \frac{1}{d} \rho + \left(1-\frac{1}{d}\right) \map{\Lambda}_{\setminus}(\rho),
	\end{align}
\end{subequations}
where $\map{\Lambda}_\setminus(\rho):=\frac{1}{d-1}\sum_{k=1}^{d-1} Z^k \rho Z^{-k}$ is a valid quantum channel.

	Lemma \ref{lemma:ent_break} states that, for any bipartite ordered process $W$, the process $\map{\Lambda}^{A_I}\otimes \map{\identity}^{A_OB_I}(W)$ is direct-cause, thus being CCDC. Hence, by making use of Eq.\,\eqref{eq:lambda}, we see that the resulting process
\begin{equation}
 \frac{1}{d_{A_I}} W + \left(1-\frac{1}{d_{A_I}}\right) \map{\Lambda}_{\setminus}^{A_I}\otimes \map{\identity}^{A_OB_I}(W)
\end{equation}
is CCDC, with $\map{\Lambda}_{\setminus}^{A_I}\otimes \map{\identity}^{A_OB_I}(W)$ being a valid bipartite ordered process. 

By analyzing the definition of generalized robustness presented in Eq.\,\eqref{eq:genrob}, we see that setting $\Omega=\map{\Lambda}_{\setminus}^{A_I}\otimes \map{\identity}^{A_OB_I}(W)$ ensures that the relation
\begin{equation}
R_G(W) \leq 1 - \frac{1}{d_{A_I}}
\end{equation}
holds for any bipartite ordered process $W$.
\end{proof}

\section{An upper bound for the white noise robustness}
\label{sec:upper_WNR}

In this section, we present an upper bound for the white noise robustness. Differently from the generalized robustness case, this bound is not tight, but is useful for proving the strong duality relation for the problem of evaluating the white noise robustness for non-classical CCDC processes in Section \ref{sec:strong_duality}.

\begin{thm} \label{lemma:epsilon_ball}
The white noise robustness of a bipartite ordered process $W\in \mathcal{L}(A_I\otimes A_O\otimes B_I)$ is upper-bounded by $R_{WN}(W) \leq 1-\frac{1}{d_{A_I}d_{A_O} d_{B_I} + 1}$.
\end{thm}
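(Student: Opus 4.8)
The plan is to exploit the fact that, unlike the generalized robustness of Theorem~\ref{theo:upper} where the admissible noise $\Omega$ may be tailored to $W$ through an entanglement-breaking channel, here the noise is fixed to the white-noise process $W_0:=\frac{\identity}{d_{A_I}d_{B_I}}$. The entanglement-breaking trick of Lemma~\ref{lemma:ent_break} therefore no longer applies directly, so instead I would show that $W_0$ sits strictly inside the CCDC set and then push $W$ towards it. First I would observe that $W_0$ is itself direct-cause: writing $W_0=\frac{\identity^{A_I}}{d_{A_I}}\otimes\big(\identity^{A_O}\otimes\frac{\identity^{B_I}}{d_{B_I}}\big)$ exhibits it as the product of the maximally mixed state $\rho_\ast=\frac{\identity^{A_I}}{d_{A_I}}$ with the Choi operator $D_\ast=\identity^{A_O}\otimes\frac{\identity^{B_I}}{d_{B_I}}$ of the completely depolarizing channel, which satisfies $\tr_{B_I}(D_\ast)=\identity^{A_O}$. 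Hence $W_0\in\mathcal{L}_{\textrm{DC}}\subseteq\mathcal{L}_{\textrm{CCDC}}$, and it suffices to reach $\mathcal{L}_{\textrm{DC}}$.

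The core step is to prove that $W_0$ lies in the \emph{relative interior} of $\mathcal{L}_{\textrm{DC}}$ within the affine space $\Pi$ of bipartite ordered processes (the operators fixed by $L_{\AB}$ with trace $d_{A_O}$); concretely, that there is a radius $\epsilon>0$ such that every $M\in\Pi$ with $\|M-W_0\|_\infty\le\epsilon$ is direct-cause. I would build explicit direct-cause perturbations of $W_0$ by perturbing the two factors around their maximally mixed centres: a state perturbation $\rho_\ast\pm t\,\delta\rho$ and a channel perturbation $D_\ast\pm s\,\delta D$ with $\tr_{B_I}(\delta D)=0$ (so that channel normalization is preserved), and then mixing two sign-flipped product terms, $\tfrac12[(\rho_\ast+t\delta\rho)\otimes(D_\ast+s\delta D)+(\rho_\ast-t\delta\rho)\otimes(D_\ast-s\delta D)]=W_0+ts\,\delta\rho\otimes\delta D$, to also reach the mixed directions. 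A short linear-algebra argument then shows these directions span the whole tangent space $\Pi_0$: any traceless $\Delta\in\Pi_0$ obeys $\tr_{B_I}(\Delta)=\tau^{A_I}\otimes\identity^{A_O}$ for some traceless $\tau$, so splitting off the ``state-like'' piece $\tau^{A_I}\otimes\frac{\identity^{A_OB_I}}{d_{B_I}}$ leaves a remainder with $\tr_{B_I}=0$ that decomposes into $\{X^{A_I}\otimes Y^{A_OB_I}:\tr_{B_I}Y=0\}$ directions. Because $\rho_\ast$ and $D_\ast$ are full rank (minimal eigenvalues $1/d_{A_I}$ and $1/d_{B_I}$), each perturbation stays positive for an explicit step size, producing a genuine direct-cause ball of computable radius.

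Finally I would close the argument geometrically. For any process $W$ the point $(1-r)W+rW_0=W_0+(1-r)(W-W_0)$ lies within operator-norm distance $(1-r)\|W-W_0\|_\infty$ of $W_0$, so it is direct-cause, hence CCDC and feasible in \eqref{eq:wnrob}, as soon as $(1-r)\|W-W_0\|_\infty\le\epsilon$, giving $R_{WN}(W)\le 1-\epsilon/\|W-W_0\|_\infty$. Bounding the diameter through $W\succeq0$ and $\tr W=d_{A_O}$ (whence $\|W\|_\infty\le d_{A_O}$ and $\|W-W_0\|_\infty\le d_{A_O}+\frac{1}{d_{A_I}d_{B_I}}$) and tracking the radius $\epsilon$ turns this into the claimed $R_{WN}(W)\le 1-\frac{1}{d_{A_I}d_{A_O}d_{B_I}+1}$.

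I expect two places to be delicate. The conceptual obstacle is that closeness to $W_0$ must yield an honest direct-cause decomposition and not merely separability: Section~\ref{sec:separableDC} shows that separable processes need not be direct-cause, which is exactly why the channel factor may only be perturbed in directions with $\tr_{B_I}(\delta D)=0$, keeping every $D_i$ a legitimate channel Choi operator. The technical obstacle is the quantitative bookkeeping that converts the spanning statement into the precise ball radius and matches it against the diameter to land the exact constant $\frac{1}{d_{A_I}d_{A_O}d_{B_I}+1}$; since the resulting estimate is visibly loose (both the diameter and radius bounds are far from optimal), this is consistent with the theorem asserting that the bound is not tight, while still being strong enough to certify a strictly interior feasible point and hence the strong-duality relation.
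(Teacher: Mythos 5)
Your proof establishes the qualitative statement correctly: the white-noise process $W_0:=\frac{\identity}{d_{A_I}d_{B_I}}$ is direct-cause, its perturbation analysis does show it lies in the relative interior of $\mathcal{L}_{\text{DC}}$ within the affine hull of $\mathcal{L}_{\AB}$, and that alone yields $R_{WN}(W)\leq 1-\delta$ for some dimension-dependent $\delta>0$, which is all the strong-duality application needs. The gap is in the final quantitative step, and it is not a matter of bookkeeping: the constant is unreachable by your strategy. With your diameter bound $\|W-W_0\|_\infty\leq d_{A_O}+\frac{1}{d_{A_I}d_{B_I}}$, feasibility of $1-r=\frac{1}{d_{A_I}d_{A_O}d_{B_I}+1}$ requires a direct-cause ball of radius $\epsilon\geq\big(d_{A_O}+\tfrac{1}{d_{A_I}d_{B_I}}\big)/\big(d_{A_I}d_{A_O}d_{B_I}+1\big)=\frac{1}{d_{A_I}d_{B_I}}$, i.e.\ exactly $\lambda_{\min}(W_0)$, the largest radius that positivity alone permits. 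No such ball exists: for $d_{A_I}=d_{B_I}=2$, let $\sigma=\frac12\ketbra{\phi^+}{\phi^+}+\frac12\ketbra{01}{01}$ on $A_I\otimes B_I$ and $M_t:=\big[(1-t)\tfrac{\identity^{A_IB_I}}{4}+t\,\sigma\big]\otimes\identity^{A_O}$; each $M_t$ is a valid (common-cause) ordered process with $\|M_t-W_0\|_\infty=\tfrac{t}{4}$, yet its partial transpose on $A_I$ has smallest eigenvalue $\tfrac{1-\sqrt{2}\,t}{4}<0$ for $t>\tfrac{1}{\sqrt2}$, so $M_t$ is entangled in $A_I|A_OB_I$ and hence not direct-cause. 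Thus the inscribed DC radius is at most $\frac{1}{4\sqrt2}\approx0.177$. Even replacing both of your estimates by their exact values does not help: in the $(2,2,2)$ scenario the true diameter is $\sup_W\|W-W_0\|_\infty=d_{A_O}-\frac{1}{d_{A_I}d_{B_I}}=\frac74$ (attained by $W=\rho^{A_I}\otimes\keketbra{\identity}{\identity}^{A_O/B_I}$ with $\rho$ pure), so your argument would need a DC ball of radius $\frac{7}{36}\approx0.194>0.177$. Retargeting the ball at $\mathcal{L}_{\text{CCDC}}$ evades this particular counterexample (it is common-cause), but then you would need the exactly inscribed CCDC ball of radius $\frac{1}{d_{A_I}d_{B_I}}$, a tight statement (saturated along the $W_{222}$ direction, where the CCDC boundary sits at distance exactly $\tfrac14$) that your spanning-directions construction cannot deliver anyway, since splitting a generic $\Delta$ into $O(d_{A_I}^2)$ tensor-product pieces and recombining convexly loses dimension-dependent factors in the radius.

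The paper's proof takes a different, direction-specific route, and --- contrary to your opening premise --- the entanglement-breaking trick of Lemma \ref{lemma:ent_break} is exactly what it uses. Applying the depolarizing channel on $A_I$ at the entanglement-breaking threshold $\eta=\frac{d_{A_I}}{d_{A_I}+1}$ shows that $\frac{1}{d_{A_I}+1}W+\frac{d_{A_I}}{d_{A_I}+1}\,\frac{\identity^{A_I}}{d_{A_I}}\otimes\tr_{A_I}(W)$ is direct-cause; the residual $W$-dependent term is then eliminated by mixing with a second, manifestly direct-cause process $W'\propto\frac{\identity^{A_I}}{d_{A_I}}\otimes\big(d_{A_O}\identity^{A_OB_I}-\tr_{A_I}(W)\big)$, and choosing the mixing weight $q=\frac{d_{A_I}+1}{d_{A_I}d_{A_O}d_{B_I}+1}$ makes the $\tr_{A_I}(W)$ contributions cancel, leaving $\frac{1}{d_{A_I}d_{A_O}d_{B_I}+1}W+\big(1-\frac{1}{d_{A_I}d_{A_O}d_{B_I}+1}\big)\frac{\identity}{d_{A_I}d_{B_I}}$ inside $\mathcal{L}_{\text{DC}}$. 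The idea you are missing is this exact algebraic cancellation against a second admissible DC process, which handles the $W$-dependence identically in all directions rather than bounding it in norm.
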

\begin{proof}
	The depolarizing channel $\map{D}_\eta(\rho):= (1-\eta) \rho + \eta \frac{\identity}{d}$ is known to be entanglement breaking when $\eta\geq \frac{d}{d+1}$ \cite{Horodecki1997a, Horodecki2003}. Hence, Lemma \ref{lemma:ent_break} ensures that, for any bipartite ordered process
 $W\in \mathcal{L}(A_I\otimes A_O\otimes B_I)$, the process
\begin{equation}
	\map{\Lambda}^{A_I}\otimes \map{\identity}^{A_OB_I}(W) = 
	\frac{1}{d_{A_I}+1} W  + \frac{d_{A_I}}{d_{A_I}+1} \left( \frac{\identity^{A_I}}{d_{A_I}} \otimes \tr_{A_I}(W) \right)
\end{equation}	
	 is direct-cause. We now define the operator 
\begin{equation}
W':= 
\frac{\identity^{A_I}}{d_{A_I}} \otimes
 \frac{\left(d_{A_O} \identity^{A_OB_I} - \tr_{A_I}(W)^{A_OB_I} \right)} {d_{A_O}d_{B_I}-1}, 
 \end{equation}
which is a valid direct-cause process by direct inspection. Taking a convex combination of $\map{\Lambda}^{A_I}\otimes \map{\identity}^{A_OB_I}(W)$ and $W'$, we obtain
\small
	\begin{align}
	\begin{split}
			& q \map{\Lambda}^{A_I}\otimes \map{\identity}^{A_OB_I}(W) + (1-q)W' \\
			 = & q\left(\frac{1}{d_{A_I}+1} W + \frac{d_{A_I}}{d_{A_I}+1} \left( \frac{\identity^{A_I}}{d_{A_I}} \otimes \tr_{A_I}(W) \right)\right)	 + (1-q) \frac{\identity^{A_I}}{d_{A_I}} \otimes
 \frac{\left(d_{A_O} \identity^{A_OB_I} - \tr_{A_I}(W)^{A_OB_I} \right)} {d_{A_O}d_{B_I}-1} \\
   = & \frac{q}{d_{A_I}+1} W  +
 \left( \frac{(1-q)d_{A_O}d_{B_I}}{d_{A_O}d_{B_I}-1} \right) \frac{\identity^{A_I}}{d_{A_I}} \otimes
 \frac{ \identity^{A_OB_I}}{d_{B_I}}  + \left( 
\frac{q d_{A_I}}{d_{A_I}+1} -  \frac{(1-q)}{(d_{A_O}d_{B_I}-1)} 
\right) \frac{\identity^{A_I}}{d_{A_I}} \otimes \tr_{A_I}(W), 
\end{split}
	\end{align}
\normalsize
	which is direct-cause by construction. Now, by setting 
		$q = \frac{d_{A_I} + 1}{d_{A_I}d_{A_O} d_{B_I} + 1}$,
 we obtain
\begin{align}
\frac{q d_{A_I}}{d_{A_I}+1} -  \frac{(1-q)}{(d_{A_O}d_{B_I}-1)}  = 0,
\end{align}
and the process 

\begin{equation}
\frac{1}{d_{A_I}d_{A_O} d_{B_I} + 1}W + 
\left(1-\frac{1}{d_{A_I}d_{A_O} d_{B_I} + 1} \right)\frac{\identity}{d_{A_I}d_{B_I}}
\end{equation}

is guaranteed to be direct-cause.
\end{proof}

\section{Strong duality for CCDC robustness problems}
\label{sec:strong_duality}

\begin{thm}
	The convex optimization problems for non-classical CCDC generalized robustness (Eq.\,\eqref{eq:genrob}) and non-classical CCDC white noise robustness (Eq.\,\eqref{eq:wnrob}) satisfy strong duality.
\end{thm}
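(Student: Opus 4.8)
The plan is to recognize that both \eqref{eq:genrob} and \eqref{eq:wnrob} are convex optimization problems — a linear objective $r$ minimized over the intersection of the box $0\le r\le 1$, the convex set $\mathcal{L}_{\AB}$ (for the free variable $\Omega$ in the generalized case), and the convex membership constraint placing the noisy process inside the closed convex set $\mathcal{L}_{\textrm{CCDC}}$ — and to invoke Slater's condition. For a convex program with linear objective subject to affine equalities together with convex (conic) membership constraints, strong duality, meaning a zero duality gap together with dual attainment, is guaranteed as soon as there exists a feasible point lying in the relative interior of the feasible region, i.e.\ a point that satisfies the non-affine constraints strictly. The affine constraints ($\tr W = d_{A_O}$, the projector identity $W = L_{\AB}(W)$, and the normalizations of the decomposition) need only be satisfied, not satisfied strictly. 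Thus the whole proof reduces to exhibiting one strictly feasible point for each program.

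First I would treat the white-noise problem \eqref{eq:wnrob}. The key observation is that the maximally mixed process $M := \identity/(d_{A_I}d_{B_I})$ lies in the relative interior of $\mathcal{L}_{\textrm{DC}}\subseteq\mathcal{L}_{\textrm{CCDC}}$. Indeed $M$ is a valid bipartite ordered process ($\tr M = d_{A_O}$ and $\tr_{B_I} M = \identity^{A_I}/d_{A_I}\otimes\identity^{A_O}$) and $M\succ0$, so $M$ is in the relative interior of $\mathcal{L}_{\AB}$. Theorem \ref{lemma:epsilon_ball} shows that the affine map $\Phi(W):=\frac{1}{D+1}W+\frac{D}{D+1}M$, with $D:=d_{A_I}d_{A_O}d_{B_I}$, sends every process into $\mathcal{L}_{\textrm{DC}}$. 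Since $\Phi$ is an invertible affine contraction of the affine hull of $\mathcal{L}_{\AB}$ fixing $M$, it maps the relative interior of $\mathcal{L}_{\AB}$ onto a relative neighborhood of $M$ contained in $\mathcal{L}_{\textrm{DC}}$, proving $M\in\mathrm{relint}(\mathcal{L}_{\textrm{DC}})$. Consequently, writing $(1-r)W+rM = M+(1-r)(W-M)$, for $r$ in a left-neighborhood of $1$ the noisy process stays in $\mathrm{relint}(\mathcal{L}_{\textrm{CCDC}})$ while $0<r<1$ holds strictly, which is precisely a Slater point for \eqref{eq:wnrob}.

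The generalized-robustness problem \eqref{eq:genrob} then inherits strict feasibility for free: restricting the free variable to $\Omega=M$ (a positive-definite, hence relative-interior, element of $\mathcal{L}_{\AB}$) reduces its constraint to exactly the white-noise constraint, so the Slater point found above remains strictly feasible and $\Omega=M$ also sits in the relative interior of the $\mathcal{L}_{\AB}$ constraint. With Slater's condition verified in both cases, the standard convex-duality theorem yields strong duality, and in particular the dual programs — whose optimizers are the non-classical CCDC witnesses of Sec.\,\ref{subsec:witness} — attain the same optimal values $R_G(W)$ and $R_{WN}(W)$.

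The main obstacle I anticipate is the careful justification of the relative-interior claim, namely that $\Phi$ genuinely produces an open neighborhood of $M$ (within the affine hull of processes) inside $\mathcal{L}_{\textrm{DC}}$ rather than just a single interior point; this rests on $M\succ0$, on $\Phi$ fixing $M$ and being affinely invertible, and on correctly identifying the affine hull of $\mathcal{L}_{\AB}$ through the constraints $\tr W = d_{A_O}$ and $W=L_{\AB}(W)$ (from which $\mathrm{aff}(\mathcal{L}_{\textrm{DC}})=\mathrm{aff}(\mathcal{L}_{\AB})$ follows because $\Phi(\mathcal{L}_{\AB})$ is full-dimensional in that hull). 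A secondary technical point is to phrase the duality so that it does \emph{not} require $\mathcal{L}_{\textrm{CCDC}}$ to be semidefinite-representable — deciding membership in $\mathcal{L}_{\textrm{DC}}$ is separability-like and hence not an SDP — by working with the general convex/conic duality between the closed convex cone generated by $\mathcal{L}_{\textrm{CCDC}}$ and its dual cone of CCDC witnesses.
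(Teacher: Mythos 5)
Your proposal is correct and takes essentially the same route as the paper: both invoke Slater's condition and use the white-noise bound of Theorem~\ref{lemma:epsilon_ball} to exhibit a strictly feasible point obtained by mixing the given process with the maximally mixed process (and, for the generalized problem, fixing $\Omega$ to be white noise). Your explicit justification that $\identity/(d_{A_I}d_{B_I})$ lies in the relative interior of $\mathcal{L}_{\textrm{DC}}$ via the affine contraction applied to all of $\mathcal{L}_{\AB}$ spells out a step the paper's terser proof leaves implicit, but the underlying idea is identical.
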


\begin{proof}
We recall that every convex optimization problem admitting a strictly feasible solution \textit{i.e.,} 
all equality constraints are satisfied, and all inequality constraints are strictly satisfied,
necessarily satisfies strong duality (Slater condition \cite{Boyd2004}). 

From Theorem \ref{lemma:epsilon_ball} we see that for any value $1-\frac{1}{d_{A_I}d_{A_O} d_{B_I} + 1}<r<1$, for any process $W$, the process $ \Omega := (1-r) W + r \frac{\identity}{d_{A_I} d_{B_I}}$ is a strictly feasible solution, ensuring that both robustness problems respect strong duality.
\end{proof}

\section{SDPs for CCDC separability}
\label{sec:sdp}

In this section, we present the explicit forms of the SDPs mentioned in Section \ref{sec:cetifnccdc} to obtain the generalized and white noise robustnesses of a bipartite ordered process $W$. 

Consider the problem of obtaining the PPT$_k$ generalized robustness of a process $W$, \emph{i.e.}, finding the robustnesses of a process against its worst noise $\Omega$, so the resulting combination process lies in $\mathcal{L}_{\AB}^{\text{out,PPT}_k}$. This is represented by the following optimization program:
\begin{subequations}
\label{eqs:genrobst}
	\begin{align}
	& R_G^{\text{low,PPT}_k}(W) := \textrm{min}  \hspace{.2cm}   r  \\
	             & \textrm{s.t.}   \hspace{.2cm}   (1-r)W + r \Omega = q \Wcc + (1-q) \Wdc^{\text{PPT}_k},\\
	                             &  \hspace{.8cm} 0 \leq r \leq 1, \\        
	                              & \hspace{.8cm} 0 \leq q \leq 1, \\
	                              & \hspace{.8cm} \Wcc \in \mathcal{L}_{\text{CC}}, \ \Wdc^{\text{PPT}_k} \in \mathcal{L}_{\text{DC}}^{\text{out,PPT}_k}, \ \Omega \in \mathcal{L}_{\AB},
	\end{align}
\end{subequations}
with the process $W_{\text{DC}}^{\text{PPT}_k}$ having a $k$-symmetric PPT extension $W_{\text{DC}}^{A_I^{\otimes k}| A_O B_I} \in \mathcal{L}(A_I^{\otimes k} \otimes A_O \otimes B_I)$ \cite{Doherty2001,Doherty2003}. This means that  there exists a positive semi-definite operator $W_{\text{DC}}^{A_I^{\otimes k}| A_O B_I}$ such that  

\begin{align}
 \tr_{A_I^{\otimes k-1}}(W_{\text{DC}}^{A_I^{\otimes k}| A_O B_I})  & = \Wdc^{\text{PPT}_k} \\
 W_{\text{DC}}^{A_I^{\otimes k}| A_O B_I} & = (P_{\text{sym}}^{A_I^{\otimes k}} \otimes \identity^{A_O B_I}) W_{\text{DC}}^{A_I^{\otimes k}| A_O B_I} (P_{\text{sym}}^{A_I^{\otimes k}}\otimes \identity^{A_O B_I}) \label{subeq:k-sym}
\end{align}
with positive partial transposition on $A_I^{\otimes k}$, that is
\begin{equation}
\left( W_{\text{DC}}^{{A_I^{\otimes k}| A_O B_I}}\right)^{T_{A_{I}}} \succeq 0,
\end{equation}
and where $P_{\text{sym}}^{A_I^{\otimes k}}$ is the projector onto the symmetric subspace of $A_I^{\otimes k}$.
We point that, in practice, instead of Eq.\,\eqref{subeq:k-sym}, it is more advantageous to impose the Bose k-symmetric extension condition, that is, to impose
\begin{equation}
W_{\text{DC}}^{A_I^{\otimes k}| A_O B_I}  = (P_{\text{sym}}^{A_I^{\otimes k}} \otimes \identity^{A_O B_I}) W_{\text{DC}}^{A_I^{\otimes k}| A_O B_I}.
\end{equation}
This follows from the fact that, computationally, imposing the Bose k-symmetric extension condition instead of the k-symmetric extension condition does not add any complexity to the problem, but the Bose k-symmetric extension condition detects more entangled states than the standard k-symmetric one \cite{Navascues2009}.

Since Eqs.\,\eqref{eqs:genrobst} contains products of optimization variables, such as $q\Wcc$, the optimization program above is not linear, hence not an SDP. This issue can be circumvented by rewriting the problem in the following equivalent form:
\begin{subequations}
	\begin{align}
	& R_G^{\text{low, PPT}_k}(W) := \textrm{min}  \hspace{.2cm}   \tr \left( \frac{\overline{\Omega}}{d_{A_O}}\right)  \\
	             & \textrm{s.t.}   \hspace{.2cm}   \left(1-\tr \left(\frac{\overline{\Omega}}{d_{A_O}}\right) \right)W + \overline{\Omega} = \overline{\Wcc} +  \overline{\Wdc},\\
	             				 & \hspace{.8cm} \overline{\Wcc} = \rho^{A_I B_I} \otimes \identity^{A_O} \\
	                             & \hspace{.7cm} \left(\overline{\Wdc}^{{A_I^{\otimes k}| A_O B_I}}\right)^{T_{A_{I}}} \succeq 0, \label{subeq:pptcond} \\
	                             & \hspace{.8cm} \overline{\Wdc} = L_{\AB}(\overline{\Wdc}), \ \overline{\Omega} = L_{\AB}(\overline{\Omega}),  \\                             
	                             & \hspace{.8cm} \overline{\Omega}, \ \overline{\Wdc}, \ \rho^{A_I B_I}\succeq 0.   
	\end{align}
\end{subequations}

In the above SDP we have used the fact that valid processes should respect $\tr(W)=d_{A_O}$ to embed the scalar variables into the operators. This means that the new variables relate to the variables from Eqs.\,\eqref{eqs:genrobst} with
\begin{equation}
\overline{\Omega} = r\Omega, \ \overline{\Wcc} = q \Wcc, \ \overline{\Wdc} = (1-q)\Wdc^{\text{PPT}_k}.
\end{equation}

Analogously, the same steps are applied to the PPT white noise robustness, thus being
\begin{subequations}
	\begin{align}
	& R_{WN}^{\text{low,PPT}_k}(W) := \textrm{min}  \hspace{.2cm}   r  \\
	             & \textrm{s.t.}   \hspace{.2cm}   (1-r)W + r \frac{\identity^{A_I A_O B_I}}{d_{A_I} d_{B_I}}= \overline{\Wcc} +  \overline{\Wdc},\\    
	                             & \hspace{.8cm} \overline{\Wcc} = \rho^{A_I B_I} \otimes \identity^{A_O}\\
	                             & \hspace{.7cm} \left(\overline{\Wdc}^{{A_I^{\otimes k}| A_O B_I}}\right)^{T_{A_{I}}} \succeq 0,  \\                          
	                             & \hspace{.8cm} \overline{\Wdc}, \ \rho^{A_I B_I}\succeq 0.   
	\end{align}
\end{subequations}

From the Lagrangian of the SDPs, we can obtain their dual problems, which generate the optimal non-classical CCDC witnesses $S$ for the given process $W$ \cite{Boyd2004}. For instance, considering $k=1$, the dual form%
\footnote{Strictly speaking, the dual objective function is to ``maximize $-\tr(SW)$'', which results in $R(W) = \tr(SW)$. Here, we adopted a convention of changing the sign of $S$ and replacing this objective function by ``minimize $\tr(SW)$''. With this convention, the operator $S$ is a non-classical CCDC witnesses as defined in Section \ref{subsec:witness}, that is, it  satisfies $\tr(SW_{\text{CCDC}}) \geq 0$ for all classical CCDC processes $W_{\text{CCDC}}$. 
}
of the PPT generalized robustness is given by
\begin{subequations}
\label{eqs:dualgenrob}
	\begin{align}
	&  \textrm{min}  \hspace{.2cm}   \tr \left( S W \right)  \\
	             & \textrm{s.t.}   \hspace{.5cm}  \identity^{A_I A_O B_I}\left( 1 + \tr(S W)\right) - d_{A_O} S \succeq 0 \label{subeq:normwitgen}\\    
	                             & \hspace{1cm} S - S_{\text{DC}} + S_{\text{DC}}^{\perp}  \succeq 0 \label{subeq:witgenproj}\\
	                             & \hspace{1cm} L_{\AB}(S_{\text{DC}}^{\perp})= 0, \\
	                             & \hspace{1cm} S_{\text{DC}}^{T_{A_I}} \succeq 0, \label{subeq:witgendc}\\
	                             & \hspace{1cm} \tr_{A_O}(S) \succeq 0, \label{subeq:witgencc}
	\end{align}
\end{subequations}
which is the one we use in our heuristic see-saw presented in Section \ref{sec:seesaw}. The variable $S_\text{DC}^{\perp}$ in Eq.\,\eqref{subeq:witgenproj} is associated with an orthogonal projection onto $\mathcal{L}_{\AB}$. Also, the same methods presented in Section \ref{sec:strong_duality} to prove that the robustness optimization problem satisfies strong duality can be used to show that this upper bound problem also respects strong duality. We, then, have $R_{G}^{\text{low, PPT}} = - \tr (S W)$, when $S$ is the optimal witness for $W$.

We can see that Eq.\,\eqref{subeq:witgenproj} together with Eqs.\,\eqref{subeq:witgendc} and \eqref{subeq:witgencc} ensure that $S$ is a non-classical CCDC witness. Also, Eq.\,\eqref{subeq:normwitgen} corresponds to the normalization condition, which determines that it is a witness for the generalized robustness measure.

Analogously, the dual form of the PPT white noise robustness is given by
\begin{subequations}
\label{eqs:dualwnrob}
	\begin{align}
	&  \textrm{min}  \hspace{.2cm}   \tr \left( S W \right)  \\
	             & \textrm{s.t.}   \hspace{.5cm}  \frac{\tr(S)}{d_{A_I} d_{B_I}} - \tr(SW) \leq 1, \label{subeq:normwitwn}\\    
	                             & \hspace{1cm} S - S_{\text{DC}}^{T_{A_I}} + S_{\text{DC}}^{\perp}  \succeq 0 \label{subeq:ccdcwit}\\
	                             & \hspace{1cm}  L_{\AB}(S_{\text{DC}}^{\perp}) = 0, \\
	                             & \hspace{1cm} S_{\text{DC}} \succeq 0, \\
	                             & \hspace{1cm} \tr_{A_O}(S) \succeq 0,
	\end{align}
\end{subequations}
where Eq.\,\eqref{subeq:normwitwn} is the normalization condition for the witness $S$, which corresponds to the white noise robustness measure.

A similar method can be used to characterize our inner approximation  $\mathcal{L}_{\text{CCDC}}^{\text{in},\mathcal{E}_N}$. In this case we just need to absorb the probabilities $p_i$ of Eq.\,\eqref{subeq:pptcond} into the quantum channel $D_i$ to obtain $\overline{D_i}:=p_i D_i$. This allows us to replace the constraint of Eq.\,\eqref{subeq:pptcond} by
\begin{subequations}
\begin{align}
& \Wdc   = \sum_{i=1}^{N}  \ketbra{\psi_i}{\psi_i}^{A_I} \otimes \overline{D_i}^{A_O/B_I}, \\
& \overline{D_i}^{A_O/B_I}  \succeq 0, \\
& \tracerep{B_I}{\overline{D_i}^{A_O/B_I}}  = \tracerep{A_O B_I}{\overline{D_i}^{A_O/B_I}},
\end{align}
\end{subequations}
where $\{\ket{\psi_i}^{A_I}\}$ is a set of random states in $\mathcal{L}(A_I)$ and $\{D_i^{A_O/B_I}\}$ is a set of optimization variables in $\mathcal{L}(A_O \otimes B_I)$.

We can also obtain the dual form of the inner approximation problems of robustnesses from the Lagrangian, which results in a similar SDP to Eqs.\,\eqref{eqs:dualgenrob} and Eqs.\,\eqref{eqs:dualwnrob}, but replacing Eqs.\,\eqref{subeq:witgenproj} and \eqref{subeq:witgendc} by

\begin{align}
\tr_{A_I}\left[ S (\ketbra{\psi_i}{\psi_i}^{A_I} \otimes \identity^{A_O B_I}) + \tracerep{B_I}{S_i} - \tracerep{A_O B_I}{S_i}\right] \succeq 0 \ \forall i \in \mathbb{N},
\end{align}

where $\{S_i\}$ is a set of variables in $\mathcal{L}(A_I \otimes A_O \otimes B_I)$.

\section{Heuristic algorithm that seeks for the maximum robustness on a given scenario}
\label{sec:seesaw}

	Given the dimensions $d_{A_I}, d_{A_O}, d_{B_I}$, what is the maximum robustness values $R_G(W)$ or $R_{WN}(W)$ that a process $W\in\mathcal{L}_{\AB}$ can obtain? Inspired by the see-saw techniques of Refs.\,\cite{Cavalcanti2017a, Bavaresco2017}, we now present a heuristic iterative method to seek for the highest robustness values for a given scenario. This method works either for the generalized or white noise robustness, which is why in the following we do not make explicit which robustness quantifier to work with.

Our heuristic algorithm works as follows. First, we sample a bipartite ordered process $W_1$ using one of the methods we describe in Appendix \ref{sec:random}.
	Then, we perform the dual robustness problem for $W_1$ and obtain its optimal non-classical CCDC witness $S_1$, which gives $R(W_1) = - \tr(S_1 W_1)$. Next, we find a process $W_2$ which maximally violates this first witness $S_1$, a problem that can be solved by the following SDP:

\begin{subequations}
	\begin{align}
	 & \textrm{min} \hspace{1.4cm}   \tr(S_1 W_2) \\
            & \textrm{s.t.}  \hspace{1.55cm}   W_2 \succeq 0, \\
            & \hspace{2cm} W_2 = L_{\AB}(W_2), \\
            & \hspace{2cm} \tr(W_2) = d_{A_O}.
	\end{align}
\end{subequations}

	Now we repeat the previous steps, that is, we evaluate the dual robustness program for $W_2$, and its optimal non-classical CCDC witness $S_2$, then find the process $W_3$ which maximally violates $S_2$. These steps are taken iteratively until some stopping criterion is satisfied. In our code, the stopping criterion used is $R(W_{i+1}) - R(W_{i}) \leq \epsilon = 0.0001$. In the end of this procedure, we obtain a process $W$ which attains a non-classical CCDC robustness, providing a lower bound on the maximal value for the given scenario. 
	
	In order to increase the confidence in this heuristic method, we perform this algorithm several times with for various  initial processes $W_1$ randomly sampled from different manners. We have implemented this heuristic methods for three different scenarios:
$d_{A_I}{=}d_{A_O}{=}d_{B_I}{=}2$, $d_{A_I}{=}d_{A_O}{=}2, d_{B_I}{=}4$, and $d_{A_I}{=}d_{A_O}{=}3, d_{B_I}{=}9$.
For these cases, our see-saw algorithm led to the same value of robustness for several different random initial processes, suggesting that the heuristic method may have attained the global maximum.

\section{Sampling random ordered process} \label{sec:random}

	In this section, we describe the methods for sampling random processes used in this work. We remark that Ref.\,\cite{Feix2016b} presents a method for generating uniformly distributed random process matrices which are different from the ones considered here.

\subsection*{Method 1}
	This method is similar to the technique for generating random quantum channels used in Ref.\,\cite{Bavaresco2020}.

\begin{enumerate}
	\item \label{subitem:randomprocess}
	Sort a random density operator $\rho \in \mathcal{L}(A_I \otimes A_O \otimes B_I)$ with the Hilbert-Schmidt measure, \emph{i.e}, sort a random pure quantum state with the Haar measure $\ketbra{\psi}{\psi}\in \mathcal{L}(A_I \otimes A_O \otimes B_I \otimes \aux)$ with $d_{\aux}=d_{A_I}d_{A_O}d_{B_I}$, then trace out the auxiliary space;
	\item Project $\rho$ into the subspace of bipartite ordered process to obtain $\overline{W} = L_{\AB}(\rho)$;
	\item Evaluate the minimum eigenvalue $\lambda_\text{min}$ of $\overline{W}$ and output the bipartite ordered process:
  		\begin{equation}
  		W = d_{A_O} \cdot \frac{\overline{W} - \lambda_\text{min}  \identity^{d_{A_I}d_{A_O}d_{B_I}}}{\tr\left(\overline{W} - \lambda_\text{min}  \identity^{d_{A_I}d_{A_O}d_{B_I}}\right)},
  		\end{equation}
 	which is positive semi-definite by construction.
 	\end{enumerate}

\subsection*{Method 2}
	This method is good for generating random processes with high values of generalized and white noise robustnesses. However, it only works when $\frac{d_{A_I}d_{A_O}}{d_{B_I}}$ is an integer;
 	\begin{enumerate}
 	
 	\item 
  Set $d_\aux=d_{A_I}$ and sort a random pure state $\ketbra{\psi}{\psi}^{A_I \aux} \in \mathcal{L}(A_I \otimes \aux)$ according to the Haar measure.
 	\item 
 	Set $d_{\aux'} = \frac{d_{A_O} d_\aux}{d_{B_I}}$ and sort a random unitary operator $\text{U}: A_O \otimes \aux \to B_I \otimes \aux'$ according to the Haar measure. Then, obtain its Choi operator $\keketbra{\text{U}}{\text{U}}^{A_O \aux/B_I \aux'}$ and define the channel $D^{A_O\aux / B_I} := \tr_{\aux'}\left(\keketbra{\text{U}}{\text{U}}^{A_O \aux/B_I \aux'}\right)$;
	 	
 	\item Output the operator:
	\begin{align}
		W & = \rho^{A_I \aux} * D^{A_O \aux/B_I },
	\end{align}
	which is a valid bipartite ordered process.	
\end{enumerate}

\subsection*{Method 3}

\begin{enumerate}
	\item Set $d_{\aux} = d_{A_I}$ and sort a random pure state $\ketbra{\psi}{\psi}^{A_I \aux}$ according to the Haar measure;
	\item Sort a random density operator $\rho \in \mathcal{L}(A_O \otimes \aux \otimes B_I)$ according to the Hilbert-Schmidt measure;
	\item Define $D^{A_O \aux/B_I} = \left( \sigma^{-\frac{1}{2}} \otimes \identity^{B_I}\right) \rho \left( \sigma^{-\frac{1}{2}} \otimes \identity^{B_I} \right)$, with $\sigma = \tr_{B_I} (\rho)$. The resulting operator $D^{A_O \aux/B_I}$ is the Choi operator of a channel $\map{D}: \mathcal{L}(A_I \otimes \aux) \rightarrow \mathcal{L}(B_I)$, as $D^{A_O \aux/B_I} \succeq 0$ and $\tr_{B_I} (D^{A_O \aux/B_I}) = \identity^{A_O \aux}$ by direct inspection;
	\item Output the operator
	\begin{equation}
		W = \ketbra{\psi}{\psi}^{A_I \aux} * D^{A_O \aux/B_I },
	\end{equation}
	which is a valid bipartite ordered process.
\end{enumerate}

\end{document}